\titlespacing{\section}{0pt}{*2}{*1} % "compact" option is too compact for
\titleformat{\paragraph}[runin]
   {\normalfont\normalsize\bfseries}{\theparagraph}{1em}{}
\renewcommand{\and}{\mathop{\wedge}}
\newcommand{\length}[1]{|#1|}
\renewcommand{\p@enumii}{\theenumi.}
\newcommandx{\yaHelper}[2][1=\empty]{%
\ifthenelse{\equal{#1}{\empty}}%
  { \ensuremath{ \scriptstyle{ #2 } } } % no offset
  { \raisebox{ #1 }[0pt][0pt]{ \ensuremath{ \scriptstyle{ #2 } } } }  % with offset
}   
\newcommandx{\yrightarrow}[4][1=\empty, 2=\empty, 4=\empty, usedefault=@]{%
  \ifthenelse{\equal{#2}{\empty}}
  { \xrightarrow{ \protect{ \yaHelper[ #4 ]{ #3 } } } } % there's no text below
  { \xrightarrow[ \protect{ \yaHelper[ #2 ]{ #1 } } ]{ \protect{ \yaHelper[ #4 ]{ #3 } } } } % there's text below
}
\newcommand{\overarrow}[2]{\textnormal{\raisebox{-0.5ex}{$\yrightarrow{#1}[-1pt]\mathrel{\vphantom{\to}^{#2}}$}}}
\newcommand{\overarrowA}[2]{\textnormal{\raisebox{-0.2ex}{$\yrightarrow{#1}[-1pt]\raisebox{-0.3ex}{$\hspace{-0.5ex}\mathrel{\overset{#2}{\mbox{\scriptsize $\!a$}}}$}$}}}
\newcommand{\Gammaid}{{\Gamma_{\mathit{id}}}}
\newcommand{\ensuretext}[1]{\ensuremath{\text{#1}}}
\newcommand{\ruleName}[1]{\ensuretext{\small \textsc{#1}}}
\newcommand{\TSSkip}{TS-Skip}
\newcommand{\TSAssign}{TS-Assign}
\newcommand{\TSOutput}{TS-Output}
\newcommand{\TSSequence}{TS-Seq}
\newcommand{\TSIfElse}{TS-IfElse}
\newcommand{\TSWhile}{TS-While}
\newcommand{\pc}{{\tt pc}}
\newcommand{\PVar}{{\it PVar}}
\newcommand{\Var}{{\it Var}}
\newcommand{\Chan}{{\it Chan}}
\newcommand{\chan}{\textsf{\small channel}}
\newcommand{\val}{\textsf{\small value}}
\newcommand{\PPoint}{{\it PPoint}}
\newcommand{\CPoint}{{\it CPoint}}
\newcommand{\Gequiv}[1]{\mathop{=_{\Gamma(#1)}}}
\newcommand{\NGequiv}[2]{\mathop{=_{#1(#2)}}}
\newcommand{\draftnote}[1]{}
\lstdefinelanguage{pollang}{  
  language     = java,
  basicstyle   = \ttfamily \footnotesize,
  keywordstyle = \bfseries \ttfamily \footnotesize, 
  stringstyle  = \color{javared},
  stepnumber   = 1, 
  numberstyle  = \color[rgb]{0.5,0.5,0.5}\ttfamily\scriptsize,
  tabsize      = 4,
  captionpos   = b,
%  frame        = lines,
%  backgroundcolor = \color[rgb]{0.98,.98,1},
  breaklines   = true,
  breakautoindent = false,
  escapeinside = {/*@}{@*/},
  aboveskip    = 8pt,
  belowcaptionskip = 0pt,
  morecomment  = [l]{//},
  mathescape   = true,
  showstringspaces = false
}
\newcommand{\paperTR}[2]{\ifthenelse{\boolean{isTR}}{#2}{#1}}
\newcommand{\compl}[1]{\overline{#1}}
\newcommand{\ek}{\mathit{ek}} % 
\newcommand{\SetDef}[2]{\{ #1~|~#2 \}}
\newcommand{\eclass}[3][\sigma']{\SetDef{#1}{#1 #3 #2}}
\newcommand{\compeclass}[3][\rho]{\eclass[#1]{#2}{\not#3}} 
\begin{document}
\mainmatter
%\special{papersize=8.5in,11in}
%\setlength{\pdfpageheight}{\paperheight}
%\setlength{\pdfpagewidth}{\paperwidth}

% Uncomment one of the following two, if you are not going for the 
% traditional copyright transfer agreement.

%\exclusivelicense                % ACM gets exclusive license to publish, 
                                  % you retain copyright

%\permissiontopublish             % ACM gets nonexclusive license to publish
                                  % (paid open-access papers, 
                                  % short abstracts)

%\titlebanner{banner above paper title}        % These are ignored unless
%\preprintfooter{short description of paper}   % 'preprint' option specified.

\mainmatter
\title{Very Static Enforcement of Dynamic Policies}
\author{ Bart van Delft$^1$
    \and Sebastian Hunt$^2$
    \and David Sands$^1$ }
\institute{Chalmers University of Technology, Sweden
        \and City University London}
\maketitle

\begin{abstract}

% something explaining what we mean by dynamic already?
% e.g. Security policies are naturally dynamic; accepted information flow channels are introduced and revoked while the system is running.
Security policies are naturally dynamic. Reflecting this, there has been a growing interest in studying information-flow properties which change during program execution, including concepts such as declassification, revocation, and role-change.  

% dependencies between data in a program, or among data?
A static verification of a dynamic information flow policy, from a semantic perspective, should only need to concern itself with two things: 1) the dependencies between data in a program, and 2) whether those dependencies are consistent with the intended flow policies as they change over time. In this paper we provide a formal ground for this intuition. We present a straightforward extension to the principal flow-sensitive type system introduced by Hunt and Sands (POPL '06, ESOP '11) to infer both end-to-end dependencies and dependencies at intermediate points in a program. This allows typings to be applied to verification of both static and dynamic policies. Our extension preserves the principal type system's distinguishing feature, that type inference is independent of the policy to be enforced: a single, generic dependency analysis (typing) can be used to verify many different dynamic policies of a given program, thus achieving a clean separation between (1) and (2).

We also make contributions to the foundations of dynamic information flow. Arguably, the most compelling semantic definitions for dynamic security conditions in the literature are phrased in the so-called knowledge-based style. We contribute a new definition of knowledge-based termination insensitive security for dynamic policies.  We show that the new definition avoids anomalies of previous definitions and enjoys a simple and useful characterisation as a two-run style property.

\end{abstract}

\section{Introduction}
\label{sec:dependency}
\newcommand\blfootnote[1]{%
  \begingroup
  \renewcommand\thefootnote{}\footnote{#1}%
  \addtocounter{footnote}{-1}%
  \endgroup
}

\paperTR{}{\blfootnote{This is the Technical Report of~\cite{delft2015}.}}

Information flow policies are security policies
which aim to provide end-to-end security guarantees of the
form ``information must not flow from this source to this destination''.
Early work on information flow concentrated on static, multi-level policies,
organising the various data sources and sinks of a system into a fixed hierarchy.
The policy determined by such a hierarchy (a partial order) is simply that information must not flow
from $a$ to $b$ unless $a \sqsubseteq b$.

\subsection{Dynamic policies}
\label{subsec:intro:dynpol}
Since the pioneering work of Denning and Denning \cite{Denning:Denning:Certification}, a wide
variety of infor{-}mation-flow policies and corresponding enforcment mechanisms have been proposed.
Much recent work on information-flow properties goes beyond the static, multi-level security policies
of earlier work, considering instead more sophisticated, dynamic forms of policy which permit different flows
at different points during the excecution of a program.
Indeed, this shift of focus better reflects real-world requirements for security policies which are naturally dynamic.

\begin{wrapfigure}{r}{0.22\textwidth}
  \vspace{-1em}
  \begin{lstlisting}
  // /*@$ {\tt x} \rightarrow a $@*/;
  out x on /*@$a$@*/;
  // /*@$ {\tt x} \not\rightarrow a $@*/;
  out 2 on /*@$a$@*/;
  \end{lstlisting}
  \caption{}
  \vspace{-2.5em}
  \label{fig:depex1}
\end{wrapfigure}

For example, consider a request for sensitive employee information made to an employer
by a regulatory authority.
In order to satisfy this request it may be necessary to
temporarily allow the sensitive information to flow to a specific user in the Human Resources department.
In simplified form, the essence of this example is captured in Figure~\ref{fig:depex1}.

Here ${\tt x}$ contains the sensitive information, channel $a$ represents the HR user,
and the policy is expressed by the annotations
$ {\tt x} \rightarrow a $ (${\tt x}$ \emph{may} flow to $a$)
and
$ {\tt x} \not\rightarrow a $ (${\tt x}$ \emph{must not} flow to $a$).
It is intuitively clear that this program  complies with the policy.

Consider two slightly more subtle examples, in each of which revocation of a permitted flow
depends on run-time data:

\begin{wrapfigure}{l}{0.55\textwidth}
  \vspace{-1em}
%\begin{adjustwidth}{2em}{}
  \begin{lstlisting}[numbers=left]
/*Program A*/       /*Program B*/
// /*@$ {\tt x, y} \rightarrow a $@*/;          // /*@$ {\tt x} \rightarrow a $@*/;
out x on /*@$a$@*/;         out x on /*@$a$@*/;
if (y > 0) {        if (x > 0) {
  out 1 on /*@$a$@*/;         out 1 on /*@$a$@*/;
// /*@$ {\tt x} \not\rightarrow a $@*/; /*@$\ \!$@*/          // /*@$ {\tt x} \not\rightarrow a $@*/;
}                   }
out 2 on /*@$a$@*/;         out 2 on /*@$a$@*/;    
out 3 on /*@$a$@*/;         out 3 on /*@$a$@*/;
  \end{lstlisting}
%\end{adjustwidth}
\end{wrapfigure}
In program A, the revocation of
${\tt x} \rightarrow a $
is controlled by the value of ${\tt y}$, whereas in program B
it is controlled by the value of ${\tt x}$ itself.
Note that the policy for A explicitly allows 
${\tt y} \rightarrow a $
so the conditional output (which reveals information about ${\tt y}$) appears to
be permissible.
In program B the conditional output reveals information about ${\tt x}$ itself,
but this happens \emph{before} the revocation. So should program B be regarded as compliant?
We argue that it should not, as follows.
Consider ``the third output'' of program B as observed on channel $a$.
Depending on the initial value of \texttt{x}, the observed value may be
either 2 (line 8) or 3 (line 9).
Thus this observation reveals information about ${\tt x}$ and,
in the cases where revocation occurs, the observation happens \emph{after}
the revocation.

Unsurprisingly,
increasing the sophistication of policies also increases the challenge of formulating good semantic definitions, which is to say,
definitions which both match our intuitions about what the policies mean and can form the basis of formal
reasoning about correctness.

At first sight it might seem that increasing semantic sophistication
should also require increasingly intricate enforcement mechanisms.
However, all such mechanisms must somehow solve the same two distinct problems:
\begin{enumerate}

   \item
   Determine what data dependencies exist between the various data sources and sinks manipulated by the program.

   \item
   Determine whether those dependencies are consistent with the flows permitted by the policy.

\end{enumerate}
Ideally, the first of these problems would be solved independently of the second, since
dependencies are a property of the code, not the policy. This would allow reuse at two levels:
a) reuse of the same dependency analysis mechanisms and proof techniques for different \emph{types} of policy;
b) reuse of the dependency properties for a given program across verification of multiple \emph{alternative} policies
(whether of the same type or not).

In practice, enforcement mechanisms are typically not presented in a way which cleanly separates the two concerns.
Not only does this hamper the reuse of analysis mechanisms and proof techniques,
it also makes it harder to identify the \emph{essential} differences between
different approaches.

\paragraph{Central Contribution} 
We take a well-understood dependency type system for a simple while-language,
originally designed to support enforcement of static policies,
and extend it in a straightforward way to a language with output
channels (\S~\ref{sec:typesystem}).
We demonstrate the advantages of a clean separation between dependency analysis and policy enforcement,
by establishing a generic soundness result (\S~\ref{sec:semsoundness}) for the type system
which characterises the meaning of types as dependency properties.
We then show how the dependency information derived by the type system
can be used to verify compliance with dynamic policies.
Note that this means that the core analysis for enforcement can be done even before the policy is
known: we dub this \emph{very static} enforcement.
More significantly, it opens the way to reuse of dependency analyses across verification of multiple types
of information flow policy (for example, it might be possible
to use the dependency analyses performed by advanced slicing
tools such as Joanna and Indus).

\paragraph{Foundations of Dynamic Flow Policies}
Although it was not our original aim and focus, we also make some
contributions of a more foundational nature, and our paper opens
with these
(\S\ref{sec:policymodel}--\S\ref{sec:secproperty}). The 
semantic definition of security which we use is based on
work of Askarov and
Chong \cite{askarov2012}, and we begin with their abstract formulation
of dynamic policies (\S\ref{sec:policymodel}).
In defining security for dynamic policies, they made a convincing case for using a
family of attackers of various strengths, following an observation
that the intuitively strongest attacker (who never forgets anything
that has been observed) actually places weaker security demands on the
system than we would want. On the other hand they observe that the family of
\emph{all} attackers contains pathological attacker behaviours which one
certainly does not wish to consider. Due to this they do not give a
 characterisation of the set of all \emph{reasonable} attackers
against which one should protect. We make the following two
foundational contributions:

\paragraph{Foundational Contribution 1} We focus (\S\ref{subsec:pi}) on the pragmatic case
of \emph{progress insensitive} security (where slow information leakage is allowed
through observation of computational progress
\cite{Askarov+:ESORICS08}). We argue for a new definition of progress
insensitive security (Def \ref{def:pikbsec}), which unconditionally grants all attackers
knowledge of computational progress. With this modification to 
the definition from \cite{askarov2012}, 
the problematic examples of pathological attackers are eliminated, and
we have a more complete definition of security. Consequently, we are able
to prove security in the central contribution of the paper \emph{for all attackers}.

\paragraph{Foundational Contribution 2}
The definitions of security are based on characterising attacker
knowledge and how it changes over time relative to the changing
policy. 
As argued previously e.g., \cite{Broberg:Sands:PLAS09}, this style of
definition forms a much more intuitive basis for a semantics of
dynamic policies than using two-run characterisations. However,
two-run formulations have the advantage of being easier to use in
proofs. We show (\S\ref{sec:secproperty}) 
that our new knowledge-based progress-insensitive
security definition enjoys a simple two-run characterisation. We make
good use of this in our proof of correctness of our central contribution.

\section{The Dynamic Policy Model}
\label{sec:policymodel}
In this section we define an abstract model of computation and a model of
dynamic policies which maps computation histories to equivalence
relations on stores.

\subsection{Computation and Observation Model}
\label{sec:computation}

\paragraph{Computation Model}
The computation model is given by a labelled transition system over
\emph{configurations}. We write 
$<c,\sigma> \overarrow{\alpha}{} <c',\sigma'>$ means that
configuration $<c,\sigma>$ evaluates in one step to configuration
$<c', \sigma'>$ with label $\alpha$. Here $c$ is a \emph{command} and
$\sigma \in \Sigma$ is a \emph{store}. In examples and when we 
instantiate this model the store will be a mapping from program
variables to values.

The label $\alpha$ records any output that happens during that
step, and we have a distinguished label value $\epsilon$ to denote a
silent step which produces no output. Every non-silent label $\alpha$
has an associated
channel $\chan(\alpha) \in \Chan$ and a value $\val(\alpha)$. Channels
are ranged over by $a$ and values by $v$.
We abbreviate a sequence of evaluation steps 
\[
<c_0,\sigma_0> \overarrow{\alpha_1}{} <c_1,\sigma_1> \overarrow{\alpha_2}{} \dots \overarrow{\alpha_n}{}
<c_n,\sigma_n> 
\]
 as $<c_0,\sigma_0> \overarrow{}{n} <c_n,\sigma_n>$.
We write $<c_0,\sigma_0> \overarrow{}{*} <c',\sigma'>$ if $<c_0,\sigma_0> \overarrow{}{n} <c',\sigma'>$
for some $n \mathop{\geq} 0$.
We write the projection of a single step $<c,\sigma> \overarrow{\alpha}{} <c',\sigma'>$
to some channel~$a$ as $<c,\sigma> \overarrowA{\beta}{} <c',\sigma'>$ 
where $\beta \mathop{=} v$ if $\chan(\alpha) = a$ and $\val(\alpha) =
v$, 
 and $\beta \mathop{=} \epsilon$ otherwise,
that is, when $\alpha$ is silent or an output on a channel different from $a$.

We abbreviate a sequence of evaluation steps 
\[ <c_0,\sigma_0> \overarrowA{\beta_1}{} <c_1,\sigma_1> \overarrowA{\beta_2}{} \dots \overarrowA{\beta_n}{}
<c_n,\sigma_n> 
\]
as $<c_0,\sigma_0> \overarrowA{t}{n} <c_n,\sigma_n>$
where $t$ is the trace of values produced on channel $a$ with every silent $\epsilon$ filtered out.
We write $<c_0,\sigma_0> \overarrowA{t}{*} <c',\sigma'>$ if $<c_0,\sigma_0> \overarrowA{t}{n} <c',\sigma'>$
for some $n \mathop{\geq} 0$.

We use $\length{t}$ to denote the length of trace $t$ and
$t_1 \preceq t_2$ to denote that trace~$t_1$ is a prefix of trace~$t_2$.

\paragraph{Attacker's Observation Model}
We follow the standard assumption that the command $c$ is known to the attacker.
We assume a passive attacker which aims to extract information about an
input store $\sigma$ by observing outputs.
As in \cite{askarov2012}, the attacker is able only to
observe a \emph{single} channel.
A generalisation to multi-channel attackers (which would also allow
colluding attackers to be modelled) is left for future work.

\subsection{Dynamic Policies}
\label{sec:dynpolicies}
A flow policy specifies a limit on how much information an attacker may learn.
A very general way to specify such a limit is as an equivalence relation on input stores.

\begin{example} Consider a store with variables {\tt x} and {\tt y}.
A simple policy might state that the attacker should only be able to
learn the value of {\tt x}. It follows that all stores which agree on the value of
{\tt x} should look the same to the attacker. This is expressed as
the equivalence relation $\sigma \mathop{\equiv} \rho$ iff $\sigma({\tt x}) \mathop{=} \rho({\tt x})$.

A more complicated policy might allow the attacker to learn the value of
some arbitrary expression $e$ on the initial store, e.g.\ ${\tt x} \mathop{=} {\tt y}$.
This is expressed as the equivalence relation 
 $\sigma \mathop{\equiv} \rho$ iff $\sigma(e) \mathop{=} \rho(e)$.
\end{example}

\begin{definition} [Policy]
  A policy $P$ maps each channel to an equivalence relation $\equiv$ on
  stores. We write $P_a$ for the equivalence relation that $P$ defines for channel $a$.
\end{definition}

As defined, policies are static.
A dynamic policy changes while the program is running and
may dictate a different $P$ for each point in the execution.
Here we assume that the policy changes \emph{synchronously} with the execution of
the program. That is, the active policy can be deterministically derived
from the execution history so far.

\begin{definition} [Execution History]
  An execution history $\mathcal{H}$ of length $n$ is a transition sequence
  $<c_0,\sigma_0> \overarrow{\alpha_1}{} <c_1,\sigma_1> \overarrow{\alpha_2}{} \dots \overarrow{\alpha_n}{}
<c_n,\sigma_n>$.
\end{definition}

\begin{definition} [Dynamic Policy]
\label{def:dynamicpol}
  A dynamic policy $D$ maps every execution history $\mathcal{H}$ to a policy $D(\mathcal{H})$.
  We write $D_a(\mathcal{H})$ for the equivalence relation that is defined by
  $D(\mathcal{H})$ for channel $a$, that is to say, $D_a(\mathcal{H}) = P_a$ where
  $P = D(\mathcal{H})$.
\end{definition}

Most synchronous dynamic policy languages in the literature determine the
current policy based solely on the store $\sigma_n$ in the final
configuration of the execution history \cite{askarov2012,Paragon}.
Definition~\ref{def:dynamicpol} allows in principle for more flexible notions of dynamic policies,
as they can incorporate the full execution history to determine
the policy at each stage of an execution
(similar to the notion of conditional noninterference used by \cite{Goguen:Meseguer:Unwinding,Zhang2012}).
However, our enforcement does assume that the
dynamic policy can be statically approximated per program point, which arguably
is only feasible for policies in the style of \cite{askarov2012,Paragon}.
Such approximations can typically be improved by allowing the program to branch on policy-related queries.

Since programs are deterministic, an execution history of length $n$ is uniquely determined
by its initial configuration $<c_0,\sigma_0>$. We use this fact to simplify our definitions
and proofs:
\begin{definition} [Execution Point]
  An execution point is a triple $(c_0,\sigma_0,n)$ identifying the point in
  execution reached after $n$ evaluation steps starting from configuration  
  $<c_0,\sigma_0>$. Such an execution point is considered well-defined iff
  there exists $<c_n,\sigma_n>$ such that $<c_0,\sigma_0> \overarrow{}{n} <c_n,\sigma_n>$.
\end{definition}
\begin{lemma}
Each well-defined execution point $(c_0,\sigma_0,n)$ uniquely determines
an execution history $\mathcal{H}(c_0,\sigma_0,n)$ of length $n$ starting
in configuration $<c_0,\sigma_0>$.
\end{lemma}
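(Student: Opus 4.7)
The plan is to derive this directly from the determinism of the one-step transition relation, which the paper asserts just before the lemma statement (``Since programs are deterministic''). Formally, determinism here means that for every configuration $<c,\sigma>$ there is at most one pair $(\alpha, <c',\sigma'>)$ with $<c,\sigma> \overarrow{\alpha}{} <c',\sigma'>$. The claim then reduces to the observation that, starting from a fixed initial configuration, the sequence of intermediate configurations and labels in an execution history of length $n$ is forced step by step.

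The proof proceeds by induction on $n$. For the base case $n = 0$, the unique execution history is the trivial one consisting solely of the configuration $<c_0,\sigma_0>$ (which is well-defined by assumption). For the inductive step, suppose the claim holds for length $n$, and let $(c_0,\sigma_0,n+1)$ be a well-defined execution point. By well-definedness, there exist configurations $<c_1,\sigma_1>,\ldots,<c_{n+1},\sigma_{n+1}>$ and labels $\alpha_1,\ldots,\alpha_{n+1}$ forming a transition sequence of length $n{+}1$. The prefix of length $n$ witnesses that $(c_0,\sigma_0,n)$ is also well-defined, so by the induction hypothesis the first $n$ transitions (and hence $<c_n,\sigma_n>$) are uniquely determined. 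Determinism of the one-step relation then pins down $\alpha_{n+1}$ and $<c_{n+1},\sigma_{n+1}>$ uniquely, giving a unique $\mathcal{H}(c_0,\sigma_0,n+1)$.

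I do not expect a genuine obstacle here: the lemma is essentially a bookkeeping statement that promotes the pointwise determinism of $\overarrow{\alpha}{}$ to uniqueness of the full transition sequence. The only subtlety worth flagging in the write-up is that well-definedness of $(c_0,\sigma_0,n+1)$ must be used to guarantee the existence of the $(n{+}1)$-th transition, since determinism alone is a uniqueness, not an existence, property; well-definedness of all shorter prefixes follows automatically. If a referee were to object that determinism is only stated informally, one could add a one-line remark that this property will be established (or can be taken as an assumption on the transition system) when the model is instantiated in Section~\ref{sec:typesystem}.
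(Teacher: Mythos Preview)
Your proposal is correct and matches the paper's approach: the paper does not give a separate proof of this lemma but simply notes, just before stating it, that ``since programs are deterministic, an execution history of length $n$ is uniquely determined by its initial configuration $<c_0,\sigma_0>$''. Your induction on $n$ is exactly the standard way to make that one-line appeal to determinism precise, and the care you take over using well-definedness for existence versus determinism for uniqueness is appropriate.
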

In the rest of the paper we rely on this fact to justify a convenient abuse
of notation, writing $D(c_0,\sigma_0,n)$ to mean $D(\mathcal{H}(c_0,\sigma_0,n))$.

\section{Knowledge-Based Security Conditions}
\label{sec:knowledgebased}
Recent works on dynamic policies,
including \cite{askarov2012,Balliu:2011:ETL,Banerjee+:Expressive,Broberg:Sands:Paralocks},
make use of so-called \emph{knowledge-based}
security definitions, building on the notion of gradual release introduced 
in~\cite{Askarov:Sabelfeld:Gradual}.
This form of definition seems well-suited to provide intuitive semantics
for dynamic policies.
We focus in particular on the attacker-parametric model from 
Askarov and Chong in \cite{askarov2012}.

Suppose that the input state to a program is $\sigma$.
In the knowledge-based approach, an attacker's knowledge of $\sigma$
is modelled as a \emph{knowledge set} $K$, which may be any
set of states such that $\sigma \in K$.
Note that the larger the knowledge set, the less certain is the attacker
of the actual value of $\sigma$, so smaller $K$ means more precise knowledge.
(Sometimes, as we see below, it can be more intuitive to focus on
the complement $\compl{K}$, which is the set of a-priori possible states which the attacker is
able to \emph{exclude}, since this set, which we will call the
\emph{exclusion knowledge},
grows as the attacker learns more).

Now suppose that the currently active policy is $\equiv$. The essential
idea in any know{-}ledge-based semantics is to view the equivalence classes of
$\equiv$ as placing upper bounds on the attacker's knowledge. In the simplest
setting, if the actual input state is $\sigma$ and the attacker's knowledge
set is $K$, we require:
\[
	K \supseteq \eclass{\sigma}{\equiv} % {[}\sigma{]}_{\equiv}
\]
Or, in terms of what the attacker is able to exclude:
\begin{equation}
\label{eqn:K-bound}
	\compl{K} \subseteq \compeclass{\sigma}{\equiv} %\compl{{[}\sigma{]}_{\equiv}}
\end{equation}

How then do we determine the attacker's knowledge?
Suppose an attacker knows the program $c$ and observes channel $a$.
Ignoring covert channels (timing, power, etc) an obvious approach is to
say that the attacker's knowledge is simply a function of the trace $t$ observed so far:
\begin{equation}
\label{eqn:K-proto}
	k(t) = \{ \rho | <c,\rho> \overarrowA{t}{} \}
\end{equation}
We define the exclusion knowledge as the complement of this: $\ek(t) = \compl{k(t)}$.
Note that, as a program executes and more outputs are observed, the
attacker's exclusion knowledge
can only increase; if $<c,\sigma> \overarrowA{t \cdot v}{}$ then
\(
	{\ek(t)} \subseteq {\ek(t\cdot v)},
\)
since, if $\rho$ can already be excluded by observation of $t$ (because $c$ cannot produce
$t$ when started in $\rho$), then it will still be excluded when $t\cdot v$ is observed
(if $c$ cannot produce $t$ it cannot produce any extension of $t$ either).

But this simple model is problematic for dynamic policies.
Suppose that the policies
in effect when $t$ and $t\cdot v$ are observed are, respectively
$\equiv_{1}$ and $\equiv_{2}$.
Then it seems that we must require both
${\ek(t)} \subseteq \compeclass{\sigma}{\equiv_{1}}$ 
    % \compl{{[}\sigma{]}_{\equiv_{1}}}$ 
and
${\ek(t\cdot v)} \subseteq \compeclass{\sigma}{\equiv_{2}}$. 
    %\compl{{[}\sigma{]}_{\equiv_{2}}}$.
As observed above, it will always be the case that
${\ek(t)} \subseteq {\ek(t\cdot v)}$, so we are forced to require
${\ek(t)} \subseteq \compeclass{\sigma}{\equiv_{2}}$.
% \compl{{[}\sigma{]}_{\equiv_{2}}}$.
In other words, the observations that we can permit at any given
moment are constrained not only by the policy currently in effect but also
by all policies which will be in effect in the future. This makes it impossible to have dynamic
policies which revoke previously permitted flows (or, at least, pointless; since all
revocations would apply retrospectively, the earlier ``permissions'' could never be exercised).

Askarov and Chong's solution has two key components, outlined in the following two sections.
\subsection{Change in Knowledge}
Firstly, recognising that policy changes should not apply retrospecively,
we can relax (\ref{eqn:K-bound}) to constrain only how an attacker's knowledge should be allowed
to \emph{increase}, rather than its absolute value. The increase in attacker knowledge
going from $t$ to $t\cdot v$ is
given by the set difference ${\ek(t\cdot v)} - {\ek(t)}$. % = {\ek(t\cdot v)} \cap k(t)$.
So, instead of (\ref{eqn:K-bound}), we require:
\begin{equation}
\label{eqn:K-increase-bound}
	{\ek(t\cdot v)} - {\ek(t)} \subseteq \compeclass{\sigma}{\equiv}
	%{\ek(t\cdot v)} \cap k(t) \subseteq \compeclass{\sigma}{\equiv}
\end{equation}
where $\equiv$ is the policy in effect immediately before the output $v$.
(Some minor set-theoretic rearrangement gives the equivalent
%\begin{equation}
\[
	k(t\cdot v) \supseteq k(t) \cap \eclass{\sigma}{\equiv}
\]
%\end{equation}
which is the form of the original presentation in \cite{askarov2012}.)

\subsection{Forgetful attackers}
\label{sec:knowledgebased:forgetful}
Focussing on change in knowledge addresses the problem of retrospective revocation but it creates a new issue.
Consider the following example.

\begin{wrapfigure}{r}{0.28\textwidth}
  \begin{lstlisting}
/*@$ {\tt x} \rightarrow a $@*/;
out x on a;
/*@$ {\tt x} \not\rightarrow a $@*/;
while (true)
  out x on a;
  \end{lstlisting}
  \caption{}
  \vspace{-1em}
  \label{fig:kbex1}
\end{wrapfigure}
  
\begin{example}
  \label{ex:gradualrelease}
  The program in Figure~\ref{fig:kbex1} produces the same output many times,
  but only the first output is permitted by the policy.
  Assume that the value of {\tt x} is $5$.
  Before the first output, the knowledge set of an observer on channel $a$ contains
  every possible store.
  After the first output the observer's knowledge set is reduced to include only
  those stores $\sigma$ where $\sigma({\tt x}) = 5$. This
  is allowed by the policy at that point.

  By the time the second output occurs, the policy prohibits any further
  flow from {\tt x}. However, since the attacker's knowledge set \emph{already}
  includes complete knowledge of {\tt x}, the second output does not
  actually change the attacker's knowledge at all, so
  (\ref{eqn:K-increase-bound}) is satisfied (since $k(t\cdot v) = k(t)$).
  Thus a policy semantics based on (\ref{eqn:K-increase-bound}) would accept this program
  even though it continues to leak the value of {\tt x} \emph{long} after the
  flow has been revoked.

\end{example}

Askarov and Chong address this by revisiting the assumption that
an attacker's knowledge is necessarily determined by the simple function of traces
(\ref{eqn:K-proto}) above.
Consider an attacker which \emph{forgets} the value of the first output in example~\ref{ex:gradualrelease}.
For this attacker, the second output would come as a revalation, revealing the value
of {\tt x} all over again, in violation of the policy.
Askarov and Chong thus arrive at the intriguing observation that security against a more
powerful attacker, one who remembers everything that happens,
does not imply security against a less resourceful attacker,
who might forget parts of the observations made.

Forgetful attackers are modelled as deterministic automata.

\begin{definition}[Forgetful Attacker $\rhd$ \S\ III.A~\cite{askarov2012}]
A forgetful attacker is a tuple $A\mathop{=}(S_A,s_0,\delta_A)$ where $S_A$ is the set of
attacker states; $s_0 \in S_A$ is the initial state; and
$\delta_A : S_A \times {\it Val} \rightarrow S_A$ the (deterministic) transition function
describing how the attacker's state changes due to the values that the
attacker observes.
\end{definition}

We write $A(t)$ for the attacker's state after observing trace $t$:
\begin{align*}
A(\epsilon) =&\ s_0 \\
A(t \cdot v) =&\ \delta_A(A(t),v)
\end{align*}
A forgetful attacker's knowledge after trace $t$ is defined as the set of all initial stores
that produce a trace which would result in the same attacker state $A(t)$:

\begin{definition}[Forgetful Attacker Knowledge $\rhd$ \S\ III.A~\cite{askarov2012}]
$$k(A,c,a,t) =
  \{ \rho \mid <c,\rho> \overarrowA{t'}{} \wedge A(t') = A(t)\}
$$
\end{definition}
(Note that, in preparation for the formal definition of the security condition,
program $c$ and channel $a$ now appear as explicit parameters.)

% Note that $k$ defines equivalence relation on stores?

The proposed security condition is still essentially as given by (\ref{eqn:K-increase-bound}),
but now relative to a specific choice of attacker. Stated in the notation and style of the current paper,
the formal definition is as follows.

\begin{definition}[Knowledge-Based Security $\rhd$ Def. 1~\cite{askarov2012}]
\label{def:pskbsec}
Command $c$ is secure for policy $D$ against an attacker $A$ on channel $a$ for initial
store $\sigma$ if for all traces $t$ and values $v$ such
that $<c,\sigma>\overarrowA{t}{n}<c',\sigma'>\overarrowA{v}{1}$ 
we have
$$
{\ek(A,c,a,t\cdot v)} - {\ek(A,c,a,t)} \subseteq \compeclass{\sigma}{\equiv}
$$
where ${\equiv} = D_a(c,\sigma,n)$.
\end{definition}

Having relativized security to the power of an attacker's memory, it is natural
to consider the strong notion of security that would be obtained by requiring
Def.~\ref{def:pskbsec} to hold for all choices of $A$.
However, as shown in \cite{askarov2012}, this exposes a problem with the model:
there are attackers for which even well-behaved programs are insecure
according to Def.~\ref{def:pskbsec}.

\begin{example}
%\label{ex:att:oddforgetful}
  Consider again the first example from the Introduction (Section~\ref{subsec:intro:dynpol}).
  Here, for simplicity, we assume that the variable {\tt x}
  is boolean, taking value 0 or 1.

  \begin{figure}[h]
  \centering
  \begin{subfigure}[b]{0.3\textwidth}
    \centering
  \begin{lstlisting}
  // /*@$ {\tt x} \rightarrow a $@*/
  out x on /*@$a$@*/;
  // /*@$ {\tt x} \not\rightarrow a $@*/
  out 2 on /*@$a$@*/;
  \end{lstlisting}
  \end{subfigure}
  \begin{subfigure}[b]{0.5\textwidth}
    \centering
    \begin{tikzpicture}[->,auto,node distance=2cm,semithick]
      \tikzstyle{every state}=[fill=none,draw=black,text=black]

      \node[initial,state] (A)                                {$q_0$};
      \node[state]         (B) [above right=0cm and 1cm of A] {$q_1$};
      \node[state]         (C) [below right=0cm and 1cm of A] {$q_2$};

      \path (A) edge [loop above]     node {$0$} (A)
                edge [above, pos=0.4] node {$1$} (B)
                edge [below, pos=0.4] node {$2$} (C)
            (B) edge [loop right]     node {$2$} (B);
    \end{tikzpicture}
  \end{subfigure}
\end{figure}
  
  It is intuitively clear that this program complies with the policy. However,
  as observed in \cite{askarov2012},
  if we instantiate Def.~\ref{def:pskbsec} with the forgetful attacker displayed,
  the attacker's knowledge increases with the second output
  when ${\tt x} \mathop{=} 0$.
  
  After observing the value $0$, the attacker's state is $A(0)\mathop{=}q_0$.
  Since $A(\epsilon) \mathop{=} q_0$, the knowledge set still holds every store
  possible. After the second observation, only stores where ${\tt x} \mathop{=} 0$
  could have led to state $q_2$, so the knowledge set shrinks (ie, the attacker's knowledge
  increases) at a point where the policy does not allow it.

\end{example}
This example poses a question which (so far as we are aware) remains unanswered:
if we base a dynamic policy semantics on Def.\ref{def:pskbsec}, for
\emph{which set} of attackers should we require it to hold?

In the next section we define a progress-insensitive variant of Def.\ref{def:pskbsec}.
For this variant it seems that security against all attackers \emph{is} a reasonable
requirement and in Section~\ref{sec:semsoundness} we show that progress-insensitive security
against all attackers is indeed enforced by our type system.

\subsection{Progress Insensitive Security}
\label{subsec:pi}
Since \cite{Volpano:Smith:Irvine:Sound},
 work on the formalisation and enforcement of information-flow
policies has generally distinguished between two flavours of security:
\emph{termination-sensitive} and \emph{termination-insensitive}.
Termination-sensitive properties guarantee that protected information is neither revealed
by its influence on input-output behaviour nor by its influence on termination behaviour.
Termination-insensitive properties allow the latter flows and thus provide weaker guarantees.
For systems with incremental
output (as opposed to batch-processing systems) it is more appropriate to
distinguish between 
\emph{progress-sensitive} and \emph{progress-insensitive} security.
Progress-insensitive security ignores progress-flows,
where a flow is regarded as a progress-flow if the information
that it reveals can be inferred solely by observing \emph{how many} outputs the system produces.

Two examples of programs with progress-flows are as follows:

\begin{example}
\label{ex:progressleaks}
  Programs containing progress-flows:
\begin{lstlisting}
// Program A            // Program B
out 1 on a;             out 1 on a;
while (x == 8) skip;    if (x != 8) out 2 on a;
out 2 on a;
\end{lstlisting}
Let $\sigma$ and $\rho$ differ only on the value of \texttt{x}:
$\sigma({\tt x}) = 4$ and $\rho({\tt x}) = 8$.
Note that, if started in $\sigma$, both programs produce a trace of length
2 (namely, the trace $1\mathop{\cdot}2$)
whereas, if started in $\rho$, the maximum trace length is 1.
Thus, for both programs, observing just the length of the trace
produced can reveal information about \texttt{x}.
Note that, since termination
is not an observable event in the semantics, A and B are actually observably equivalent;
we give the two variants to emphasise that progress-flows may occur even in the
absence of loops.

\end{example}

In practice, most enforcement mechanisms only enforce progress-insensitive security. This is
a pragmatic choice since (a) it is hard to enforce progress-sensitive security without being
overly restrictive (typically, all programs which loop on protected data will be rejected),
and (b) programs which leak solely via progress-flows, leak slowly \cite{Askarov+:ESORICS08}.

Recall that Knowledge-Based Security (Def.~\ref{def:pskbsec}) places a bound
on the increase in an attacker's knowledge which is allowed to arise from observation of the next output event.
Askarov and Chong show how this can be weakened in a natural way
to provide a progress-insensitive property, by artificially strengthening the supposed previous knowledge
to already include progress knowledge. Their definition of progress knowledge
is as follows:
\begin{definition}[AC Progress Knowledge $\rhd$ \S\ III.A~\cite{askarov2012}]
$$k^{+}(A,c,a,t) =
    \{ \rho \mid <c,\rho> \overarrowA{t'\cdot v}{} \wedge A(t') = A(t) \}
$$
\end{definition}
Substituting this (actually, its complement) in the ``previous knowledge'' position in Def.~\ref{def:pskbsec} provides
Askarov and Chong's notion of progress-insensitive security:
\begin{definition}[AC Progress-Insensitive (ACPI) Security $\rhd$ Def. 2~\cite{askarov2012}]
\label{def:akpisec}
Command $c$ is AC Progress-Insensitive secure for policy $D$ against an attacker $A$ on channel $a$ for initial
store $\sigma$ if for all traces $t$ and values $v$ such
that $<c,\sigma>\overarrowA{t}{n}<c',\sigma'>\overarrowA{v}{1}$
we have
$$
{\ek(A,c,a,t\cdot v)} - {\ek^{+}(A,c,a,t)} \subseteq \compeclass{\sigma}{\equiv}
$$
where ${\equiv} = D_a(c,\sigma,n)$.
\end{definition}

Now consider again programs A and B above. These are examples of programs where the \emph{only} flows
are progress-flows. In general, we say that a program is \emph{quasi-constant}
if there is some fixed (possibly infinite) trace $t$ such that every trace
produced by the program is a prefix of $t$, regardless
of the choice of initial store.
Thus, for a quasi-constant program, the only possible observable variation in observed
behaviour is trace length, so all flows are progress-flows.
Since PI security is intended explicitly to allow progress-flows,
we should expect all quasi-constant programs to satisfy PI security,
regardless of the choice of policy and for all possible attackers.
But, for Def.~\ref{def:akpisec}, this fails to hold, as shown by the following counterexample.

\begin{example}
\label{ex:piflawed}
  Consider the program and attacker below.
  The attacker is a very simple bounded-memory attacker which remembers just the last output seen
  and nothing else (not even whether it has seen any previous outputs).
  
  \begin{figure}[h]
  \centering
  \begin{subfigure}[b]{0.3\textwidth}
    \centering
  \begin{lstlisting}
// /*@${\tt x} \not\rightarrow a$@*/
out 1 on /*@$a$@*/;
out 1 on /*@$a$@*/;
while (x) skip;
out 1 on /*@$a$@*/;
out 2 on /*@$a$@*/;
  \end{lstlisting}
  \end{subfigure}
  \begin{subfigure}[b]{0.5\textwidth}
    \centering
    \begin{tikzpicture}[->,auto,node distance=2cm,semithick]
      \tikzstyle{every state}=[fill=none,draw=black,text=black]

      \node[initial,state] (A)                                {$q_0$};
      \node[state]         (B) [above right=0cm and 1cm of A] {$q_1$};
      \node[state]         (C) [below right=0cm and 1cm of A] {$q_2$};

      \path (A) edge [above, pos=0.4] node {1} (B)
                edge [below, pos=0.4] node {2} (C)
            (B) edge [loop right]     node {1} (B)
                edge                  node {2} (C)
            (C) edge [loop right]     node {2} (C)
                edge node {1} (B);
    \end{tikzpicture}
  \end{subfigure}
\end{figure}
  
  Clearly, the program is quasi-constant.
  However, it is \emph{not} ACPI secure for the given attacker.
  To see this, suppose that {\tt x} = 0 and consider the trace
  $t  = 1 \cdot 1 \cdot 1$.
  The attacker has no knowledge at this point ($k(t)$ is the set of all stores)
  since it does not know whether it has seen one, two or three 1's.
  It is easily verified that $k^{+}(t)$ is also the set of all stores for this attacker
  (intuitively, giving this attacker progress knowledge in the form $k^{+}$ doesn't help it,
  since it still does not know which side of the loop has been reached).
  But $k(t\cdot 2)$ is \emph{not} the set of all stores, since in state $q_2$ the
  attacker is able to exclude all stores for which {\tt x} = 1, thus ACPI security
  is violated.
\end{example}
What has gone wrong here? The attacker itself seems reasonable. We argue that the real problem lies in the definition of
$k^{+}(A,c,a,t)$. As defined, this is the knowledge that $A$ would have in state $A(t)$ if
given just the additional information that $c$ can produce at least one more output.
But this takes no account of any \emph{previous} progress knowledge which might
have been forgotten by $A$. (Indeed, the above attacker forgets nearly all such previous
progress knowledge.)
As a consequence, the resulting definition of PI security mistakenly treats some increases in knowledge
as significant, even though they arise purely because the attacker has forgotten previously
available progress knowledge.

Our solution will be to re-define progress knowledge to include what the attacker would
know \emph{if it had been keeping count}. To this end,
for any attacker $A = (S,s_0,\delta)$ we define a counting variant
$A^{\omega} = (S^{\omega},s^{\omega}_0,\delta^{\omega})$, such
that
$S^{\omega} \subseteq S \times N$,
$s^{\omega}_0 = (s_0, 0)$ and
$\delta^{\omega}((s,n),v) = (\delta(s,v), n+1)$.
In general, $A^\omega$ will be at least as strong an attacker as $A$:
\begin{lemma}
\label{lemma:omegastrengthens}
For all $A$, $c$, $a$, $t$:
\begin{enumerate}
\item $k(A^\omega,c,a,t) \subseteq k(A,c,a,t)$
\item $ek(A,c,a,t) \subseteq ek(A^\omega,c,a,t)$
\end{enumerate}
\end{lemma}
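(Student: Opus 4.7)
The plan is to prove part (1) directly from the definitions and then derive part (2) by taking complements, since $ek = \overline{k}$.

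The key observation is that the counting variant simply augments the attacker's state with a counter that increments on every observed value. Thus, by a trivial induction on the length of $t$, the state reached by $A^\omega$ after observing $t$ is $A^\omega(t) = (A(t), |t|)$. Consequently, for any two traces $t, t'$, the equality $A^\omega(t') = A^\omega(t)$ holds if and only if both $A(t') = A(t)$ and $|t'| = |t|$.

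For part (1), suppose $\rho \in k(A^\omega, c, a, t)$. By definition, there exists some $t'$ such that $<c, \rho> \overarrowA{t'}{}$ and $A^\omega(t') = A^\omega(t)$. By the observation above, this forces $A(t') = A(t)$ (and, as a side effect, $|t'| = |t|$, but we discard this). The witness $t'$ then shows $\rho \in k(A, c, a, t)$. Hence $k(A^\omega, c, a, t) \subseteq k(A, c, a, t)$.

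Part (2) is immediate: taking complements of an inclusion reverses it, so from (1) we get $\overline{k(A, c, a, t)} \subseteq \overline{k(A^\omega, c, a, t)}$, which is exactly $ek(A, c, a, t) \subseteq ek(A^\omega, c, a, t)$. There is no real obstacle here; the whole content of the lemma is the observation that strengthening an attacker's state with additional information (here, a trace counter) can only make more initial stores distinguishable, hence shrink its knowledge set and grow its exclusion knowledge. The intuitive meaning, which justifies the name of the following definition of progress knowledge, is that counting outputs is strictly extra information available to $A^\omega$ that $A$ may have forgotten.
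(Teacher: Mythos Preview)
Your proof is correct and follows essentially the same approach as the paper: establish that $A^\omega(t') = A^\omega(t)$ implies $A(t') = A(t)$, use this to derive part~(1) directly from the definition of the knowledge set, and obtain part~(2) by taking complements. You are slightly more explicit in spelling out the structure $A^\omega(t) = (A(t),|t|)$, but the argument is the same.
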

\begin{proof}
It is is easily seen that $A^\omega(t) = (q,n) \Rightarrow A(t) = q$.
Thus $A^\omega(t') = A^\omega(t) \Rightarrow A(t') = A(t)$,
which establishes part 1. Part 2 is just the contrapositive of part 1.
\end{proof}

Our alternative definition of progress knowledge is then:
\begin{definition}[Full Progress Knowledge]
$$k^{\#}(A,c,a,t) =
    \{ \rho \mid <c,\rho> \overarrowA{t'\cdot v}{} \wedge A^\omega(t') = A^\omega(t) \}
$$
\end{definition}
Our corresponding PI security property is:
\begin{definition}[Progress-Insensitive (PI) Security]
\label{def:pikbsec}
Command $c$ is progress-insensitive secure for policy $D$ against an attacker $A$ on channel $a$ for initial
store $\sigma$ if for all traces $t$ and values $v$ such
that $<c,\sigma>\overarrowA{t}{n}<c',\sigma'>\overarrowA{v}{1}$ 
we have
$$
{\ek(A,c,a,t\cdot v)} - {\ek^{\#}(A,c,a,t)}
	\subseteq \compeclass{\sigma}{\equiv}
$$
where ${\equiv} = D_a(c,\sigma,n)$.
\end{definition}
This definition behaves as expected for quasi-constant programs:
\begin{lemma}
\label{lemma:progress-only}
Let $c$ be a quasi-constant program.
Then $c$ is PI secure for all policies $D$ against all
attackers $A$ on all channels $a$ for all initial stores $\sigma$.
\end{lemma}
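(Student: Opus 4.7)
The strategy is to prove the stronger statement that the set difference $\mathit{ek}(A,c,a,t\cdot v) - \mathit{ek}^{\#}(A,c,a,t)$ is actually empty, rather than merely a subset of the complement equivalence class. Dually, I will show that $k^{\#}(A,c,a,t) \subseteq k(A,c,a,t\cdot v)$, from which the empty-difference statement follows by taking complements. Once the difference is empty, the required inclusion holds for every policy, attacker, channel and initial store, so the four universal quantifications in the conclusion come for free.

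\paragraph{Main argument.} Fix $c$, $D$, $A$, $a$, $\sigma$, and suppose $<c,\sigma>\overarrowA{t}{n}<c',\sigma'>\overarrowA{v}{1}$. Let $u$ be the fixed (possibly infinite) trace witnessing quasi-constancy, so every trace produced by $c$ from any initial store is a prefix of $u$. In particular $t\cdot v \preceq u$. To show $k^{\#}(A,c,a,t) \subseteq k(A,c,a,t\cdot v)$, take any $\rho \in k^{\#}(A,c,a,t)$. By definition there exist $t'$ and $v'$ with $<c,\rho>\overarrowA{t'\cdot v'}{}$ and $A^{\omega}(t') = A^{\omega}(t)$. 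The latter equality, since $A^{\omega}$ tracks length in its second component, yields $|t'| = |t|$. By quasi-constancy applied to $\rho$, we also have $t'\cdot v' \preceq u$. Hence both $t'$ and $t$ are prefixes of $u$ of the same length, forcing $t' = t$; and then $v'$ must be the element of $u$ at position $|t|+1$, which is $v$ (because $t\cdot v \preceq u$ too). Consequently $<c,\rho>\overarrowA{t\cdot v}{}$, and taking the witness trace $t\cdot v$ itself shows $\rho \in k(A,c,a,t\cdot v)$.

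\paragraph{Conclusion and obstacle.} Taking complements in $k^{\#}(A,c,a,t) \subseteq k(A,c,a,t\cdot v)$ gives $\mathit{ek}(A,c,a,t\cdot v) \subseteq \mathit{ek}^{\#}(A,c,a,t)$, so the set difference in Def.~\ref{def:pikbsec} is empty and is trivially contained in $\compeclass{\sigma}{\equiv}$. There is no real obstacle here; the only subtle point is that the argument genuinely relies on the use of $A^{\omega}$ rather than $A$ in the definition of $k^{\#}$. Without the length bookkeeping, one could not conclude $t' = t$ from $A(t') = A(t)$ alone, and Example~\ref{ex:piflawed} shows that the analogous lemma would then fail for ACPI security. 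Thus the proof also serves as a sanity check that the redefinition of progress knowledge achieves its intended purpose.
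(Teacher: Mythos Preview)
Your proof is correct and follows essentially the same route as the paper: both arguments show that the set difference in Def.~\ref{def:pikbsec} is empty by exploiting that $A^{\omega}$ forces $|t'|=|t|$, which together with quasi-constancy yields $t'\cdot v' = t\cdot v$. The only cosmetic difference is that the paper factors the inclusion through the intermediate equality $k^{\#}(A,c,a,t) = k(A^\omega,c,a,t\cdot v)$ and then invokes Lemma~\ref{lemma:omegastrengthens}, whereas you go directly to $k(A,c,a,t\cdot v)$ using $t\cdot v$ itself as the witness trace; your version is arguably a touch more self-contained.
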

\begin{proof}
It suffices to note that, from the definitions,
if $t\cdot v$ is a possible trace for $c$ and $c$ is quasi-constant, then
$k^{\#}(A,c,a,t) = k(A^\omega,c,a,t\cdot v)$. The result  follows
by Lemma~\ref{lemma:omegastrengthens}.
\end{proof}

As a final remark in this section, we note that there is a class of attackers for which
ACPI and PI security coincide.
Say that $A$ is \emph{counting} if it
always remembers at least how many outputs it has observed. Formally:
\begin{definition}[Counting Attacker]
$A$ is counting if 
$A(t) = A(t') \Rightarrow \length{t} = \length{t'}$.
\end{definition}
Now say that attackers $A$ and $A'$ are isomorphic (written $A \cong A'$) if
$A(t_1) = A(t_2) \Leftrightarrow A'(t_1) = A'(t_2)$  and note that
none of the attacker-parametric security conditions distinguish
between isomorphic attackers
(in particular, knowledge sets are always equal for isomorphic attackers).
It is easily verified that $A \cong A^{\omega}$ for all counting attackers.
It is then immediate from
the definitions that ACPI security and PI security coincide for counting attackers.

\section{Progress-Insensitive Security as a Two-Run Property}
\label{sec:secproperty}
Our aim in this section is to derive a security property which guarantees (in fact, is equivalent to)
PI security for all attackers, and in a form which facilitates the soundness proof of our type system.
For this we seek a property in ``two run'' form.

First we reduce the problem by establishing that it suffices to consider just the counting attackers.
\begin{lemma}
\label{lemma:PI-counting}
Let $c$ be a command. Then, for any given policy, channel and initial store,
$c$ is PI secure against all attackers iff $c$ is PI secure against all counting attackers.
\end{lemma}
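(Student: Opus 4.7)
The plan is to prove the two directions separately. The ``only if'' direction is immediate: counting attackers form a subclass of all attackers, so any guarantee that ranges over all attackers specialises automatically to the counting ones.

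For the ``if'' direction, suppose $c$ is PI secure against every counting attacker, and let $A$ be an arbitrary (not necessarily counting) attacker. The strategy is to route PI security of $c$ against $A$ through PI security against its counting variant $A^\omega$, which is counting by construction and therefore covered by the hypothesis. Fix a policy $D$, channel $a$, initial store $\sigma$, and consider any trace $t\cdot v$ produced by $c$ from $\sigma$ with associated policy $\equiv = D_a(c,\sigma,n)$. I would instantiate Def.~\ref{def:pikbsec} for both $A$ and $A^\omega$ on this same trace and show that the set on the left-hand side for $A$ is contained in the set on the left-hand side for $A^\omega$.

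Two ingredients are needed. The first, $\ek(A,c,a,t\cdot v) \subseteq \ek(A^\omega,c,a,t\cdot v)$, is exactly Lemma~\ref{lemma:omegastrengthens}(2). The second is the equality $\ek^{\#}(A,c,a,t) = \ek^{\#}(A^\omega,c,a,t)$. To establish this, I would unfold the definition of $k^{\#}$: when instantiated at $A^\omega$, it refers to $(A^\omega)^\omega$, but since $A^\omega$ is counting and $(-)^\omega$ leaves counting attackers isomorphic to themselves (the observation closing Section~\ref{subsec:pi}), the defining condition $(A^\omega)^\omega(t')=(A^\omega)^\omega(t)$ collapses to $A^\omega(t')=A^\omega(t)$, which is exactly the condition defining $k^{\#}(A,c,a,t)$.

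Combining these gives $\ek(A,c,a,t\cdot v) - \ek^{\#}(A,c,a,t) \subseteq \ek(A^\omega,c,a,t\cdot v) - \ek^{\#}(A^\omega,c,a,t)$, and applying PI security of $c$ against the counting attacker $A^\omega$ bounds the right-hand side by $\compeclass{\sigma}{\equiv}$, yielding the desired inclusion for $A$. I do not anticipate a serious obstacle here; the main care needed is the bookkeeping around $(-)^\omega$ and the ``counting up to isomorphism'' equivalence, plus confirming that the two instantiations of Def.~\ref{def:pikbsec} are taken at the same execution point (same $n$, same $\equiv$) so that the subtraction is legitimate.
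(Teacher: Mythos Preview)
Your proposal is correct and matches the paper's proof essentially step for step: the paper also handles the forward direction as immediate and, for the converse, reduces to showing $\ek(A,c,a,t\cdot v) - \ek^{\#}(A,c,a,t) \subseteq \ek(A^\omega,c,a,t\cdot v) - \ek^{\#}(A^\omega,c,a,t)$ via exactly your two ingredients, namely Lemma~\ref{lemma:omegastrengthens}(2) for the first inclusion and $A^\omega \cong (A^\omega)^\omega$ for the equality of the $\ek^{\#}$ terms.
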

\begin{proof}
Left to right is immediate. Right to left, it suffices
to show that
\[ {\ek(A,c,a,t\cdot v)} - {\ek^{\#}(A,c,a,t)}
\subseteq
{\ek(A^\omega,c,a,t\cdot v)} - {\ek^{\#}(A^\omega,c,a,t)}
\]
Since $A^\omega \cong (A^\omega)^\omega$, we have ${\ek^{\#}(A^\omega,c,a,t)} = {\ek^{\#}(A,c,a,t)}$.
It remains to show that ${\ek(A,c,a,t\cdot v)} \subseteq {\ek(A^\omega,c,a,t\cdot v)}$,
which holds by Lemma~\ref{lemma:omegastrengthens}.
\end{proof}

Our approach is now essentially to unwind Def.~\ref{def:pikbsec}.
Our starting point for the unwinding is:
\[
	{\ek(A,c,a,t\cdot v)} - {\ek^{\#}(A,c,a,t)}
    \subseteq \compeclass{\sigma}{\equiv}
\]
where $\equiv$ is the policy in effect at the moment the output $v$ is produced.
Simple set-theoretic rearrangement gives the equivalent:
\[
	\eclass{\sigma}{\equiv} \cap k^{\#}(A,c,a,t) \subseteq k(A,c,a,t\cdot v)
\]
Expanding the definitions, we arrive at:
\[
	\rho \equiv \sigma
	\wedge
	<c,\rho> \overarrowA{t'\cdot v'}{}
	\wedge
	A^\omega(t') = A^\omega(t)
	\Rightarrow
	\exists s . <c, \rho> \overarrowA{s}{} \wedge A(s) = A(t\cdot v)
\]
By the above lemma, we can assume without loss of generality that $A$ is counting, so
we can replace $A^\omega(t') = A^\omega(t)$ by $A(t') = A(t)$ on the lhs.
Since $A$ is counting, we know that $\length{t} = \length{t'}$ and $\length{s} = \length{t\cdot v}$,
hence $\length{s} = \length{t'\cdot v'}$. Now, since $c$ is deterministic and both $s$ and $t'\cdot v'$
start from the same $\rho$, it follows that $s = t'\cdot v'$. Thus we can simplify the unwinding to:
\[
	\rho \equiv \sigma
	\wedge
	<c,\rho> \overarrowA{t'\cdot v'}{}
	\wedge
	A(t') = A(t)
	\Rightarrow
    A(t'\cdot v') = A(t\cdot v)
\]
Now, suppose that this holds for $A$ and that $v' \neq v$.
Let $q$ be the attacker state $A(t') = A(t)$
and let $r$ be the attacker state $A(t'\cdot v') = A(t\cdot v)$.
Since $\length{t} \neq \length{t\cdot v}$ and $A$ is counting,
we know that $q \neq r$.
Then we can construct an attacker $A'$ from $A$ which leaves $q$ intact but splits
$r$ into two distinct states $r_v$ and $r_{v'}$.
But then security will fail to hold for $A'$, since
$A'(t'\cdot v') = r_v\neq r_{v'} = A'(t\cdot v)$.
So, since we require security to hold for all $A$, we may strengthen
the rhs to $A(t'\cdot v') = A(t\cdot v) \wedge v = v'$.
Then, given $A(t') = A(t)$, since $A$ is a deterministic automaton,
it follows that $v = v' \Rightarrow A(t'\cdot v') = A(t\cdot v)$, hence the rhs simplifies
to just $v = v'$ and the unwinding reduces to:
\[
	\rho \equiv \sigma
	\wedge
	<c,\rho> \overarrowA{t'\cdot v'}{}
	\wedge
	A(t') = A(t)
	\Rightarrow
    v' = v
\]
Finally, since $A$ now only occurs on the lhs,
we see that there is a distinguished counting attacker for which the unwinding
is harder to satisfy than all others, namely the attacker $A_\#$,
for which $A_\#(t') = A_\#(t)$ iff $\length{t'} = \length{t}$.
Thus the property will hold for all $A$ iff it holds for $A_\#$ and so
we arrive at our two-run property:
\begin{definition}[Two-Run PI Security]
\label{def:pisecprop:channel}
Command $c$ is two-run PI secure for policy $D$ on channel $a$ for initial store $\sigma$ if
  whenever
  $<c,\sigma> \overarrowA{t}{n} <c_n,\sigma_n> \overarrowA{v}{1}$ and
  $\rho \equiv \sigma$ and
  $<c,\rho> \overarrowA{t'\cdot v'}{}$ and
  $\length{t'} = \length{t}$,
  then
  $v' = v$,
  where ${\equiv} = D_a(c,\sigma,n)$.
\end{definition}

\begin{theorem}
\label{thm:two-run-completeness}
Let $c$ be a command.
For any given policy, channel and initial store,
$c$ is PI secure against all attackers
iff
$c$ is two-run PI secure.
\end{theorem}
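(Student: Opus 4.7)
The plan is to establish the biconditional by a chain of equivalences that formalises the informal unwinding sketched in the paragraphs leading up to the theorem. First I would reduce to counting attackers: by Lemma~\ref{lemma:PI-counting}, $c$ is PI secure against all attackers iff it is PI secure against all counting attackers, and since the right-hand side of the theorem makes no reference to attackers, it suffices to show that PI security against all counting attackers is equivalent to two-run PI security.

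Next, fixing a counting attacker $A$, I would rewrite the inclusion in Def.~\ref{def:pikbsec} by taking complements, obtaining $\eclass{\sigma}{\equiv} \cap k^{\#}(A,c,a,t) \subseteq k(A,c,a,t\cdot v)$, and then expand the definitions of $k^{\#}$ and $k$ to get the pointwise implication: for all $\rho$ and all $t'\cdot v'$, if $\rho \equiv \sigma$, $<c,\rho> \overarrowA{t'\cdot v'}{}$, and $A(t')=A(t)$, then there exists $s$ with $<c,\rho> \overarrowA{s}{}$ and $A(s) = A(t\cdot v)$. Because $A$ is counting, the hypothesis $A(t')=A(t)$ forces $\length{t'}=\length{t}$, and the consequent forces $\length{s}=\length{t\cdot v}$; determinism of $c$ from the fixed initial store $\rho$ then forces $s = t'\cdot v'$, collapsing the consequent to $A(t'\cdot v') = A(t\cdot v)$.

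The next step, and the one I expect to be the main obstacle, is the splitting-attacker argument that strengthens the consequent further to $v'=v$. If a counting $A$ witnessed the implication with $v'\neq v$ but $A(t'\cdot v')=A(t\cdot v)$, I would construct a refined counting attacker $A'$ which agrees with $A$ on all observations up to the penultimate step but splits the target state $A(t\cdot v)$ into two successor states indexed by the last value observed, so that $A'(t'\cdot v') \neq A'(t\cdot v)$. The bookkeeping here is fiddly: one has to specify $\delta_{A'}$ on all inputs (not just $v$ and $v'$) at the split state, check that $A'$ remains deterministic and counting, and verify that no other relevant instance of the implication is broken by the refinement. Once this is done, quantifying over all counting attackers rules out the case $v'\neq v$, so the consequent may be replaced by $v'=v$.

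Finally, after this strengthening $A$ appears only in the hypothesis $A(t')=A(t)$, and among counting attackers the canonical length-only attacker $A_{\#}$, defined by $A_{\#}(t')=A_{\#}(t)$ iff $\length{t'}=\length{t}$, yields the most liberal such hypothesis: for any other counting $A$, $A(t')=A(t)$ implies $A_{\#}(t')=A_{\#}(t)$, so the implication for $A_{\#}$ entails that for $A$. Hence the universal quantification over counting attackers collapses to the single attacker $A_{\#}$, and the resulting condition is exactly Def.~\ref{def:pisecprop:channel}, completing the chain of equivalences.
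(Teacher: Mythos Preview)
Your proposal is correct and follows essentially the same route as the paper: the paper's proof simply points back to the unwinding argument in the text preceding the theorem, and your plan reproduces that argument step for step (reduction to counting attackers via Lemma~\ref{lemma:PI-counting}, complement-and-expand, determinism to collapse the witness $s$ to $t'\cdot v'$, the state-splitting construction to force $v'=v$, and the collapse to the canonical length-only attacker $A_\#$). Your caution about the bookkeeping in the splitting step is warranted but the construction goes through as you describe.
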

\begin{proof}
This follows from the unwinding of the PI security definition, as shown above.
\end{proof}

\section{A Dependency Type System}
\label{sec:typesystem}

Within the literature on enforcement of
information flow policies,
some work is distinguished by the appearance of explicit dependency analyses.
In the current paper we take as our starting point the flow-sensitive type systems
of \cite{Hunt:Sands:ESOP11,Hunt:Sands:POPL06}, due to the relative simplicity of presentation.
Other papers proposing similar analyses include
\cite{Clark+:JCL},
\cite{Amtoft:Banerjee:SAS04},
\cite{Andrews:Reitman:Axiomatic}
and
\cite{Banatre:Bryce:LeMetayer:ESORICS94}.
Some of the similarities and differences between these approaches
are discussed in \cite{Hunt:Sands:POPL06}.

The original work of \cite{Hunt:Sands:POPL06} defines a family
of type systems,
parameterised by choice of a multi-level security lattice,
and establishes the existence of principal typings within this family.
The later work of \cite{Hunt:Sands:ESOP11} defines a single system which
produces \emph{only} principal types.
In what follows we refer to the particular 
flow-sensitive type system defined in \cite{Hunt:Sands:ESOP11} system as FST.

The typings derived by FST take the form of an
environment $\Gamma$ mapping each program variable ${\tt x}$ to a set $\Gamma({\tt x})$
which has a direct reading as
(a conservative approximation to) the set of dependencies for ${\tt x}$.
All other types derivable using the flow-sensitive type systems of \cite{Hunt:Sands:POPL06} can be
recovered from the principal type derived by FST.
Because principal types are simply dependency sets, they are not specific to any particular
security hierarchy or policy. This is the basis of the clean separation we are able to
achieve between analysis and policy verification in what follows.

\begin{wrapfigure}{r}{0.22\textwidth}
  \vspace{-1em}
  \begin{lstlisting}
  x := z + 1;
  z := x;
  if (z > 0)
    y := 1;
  x := 0;
  \end{lstlisting}
  \caption{}
  \vspace{-1em}
  \label{fig:depex2}
\end{wrapfigure}

Consider the simple program shown in Figure~\ref{fig:depex2}.
The type inferred for this program is $\Gamma$, where
$\Gamma({\tt x}) = \{ \}$,
$\Gamma({\tt y}) = \{ {\tt y}, {\tt z} \}$,
$\Gamma({\tt z}) = \{ {\tt z} \}$.
From this typing we can verify, for example, any static policy
using a security lattice in which
${\it level}({\tt z}) \sqsubseteq {\it level}({\tt y})$.

FST is defined only for a simple language which does not include output statements.
This makes it unsuitable for direct application to verification of dynamic policies,
so in the current paper we describe a straightforward extenion of FST to a language with output statements.
We then show how the inferred types can be used to enforce policies such as
those in \cite{askarov2012} and \cite{Paragon},
which appear very different from the simple static, multi-level policies
originally targeted.

\subsection{Language}
We instantiate the abstract computation model of
Section~\ref{sec:computation} with 
a simple while-language with output channels, shown in
Figure~\ref{fig:semantics}.
We let ${\tt x} \in \PVar$ range over program variables,
$a \in \Chan$ range over channels (as before) and
$p \in \PPoint$ range over program points. Here non-silent output labels have the
form $(a,v,p)$, $\chan(a,v,p) = a$, and $\val(a,v,p) = v$.

 The language is similar to the one
considered in~\cite{askarov2012}, except for the absence of input channels.

Outputs have to be annotated with a program point $p$ to bridge between the
dependency analysis and the policy analysis, described in Section~\ref{sec:semsoundness}.

\begin{figure}[t]
\begin{tabular}{lll}
\text{Values}      & $v ::=$ & $n$ \qquad\qquad
\text{Expressions}  $e ::=$  $v \mid {\tt x}$ \\
\text{Commands}    & $c ::=$ & ${\tt skip} \mid c_1;c_2 \mid {\tt x} := e \mid {\tt if}\ e\ c_1\ c_2$ 
                               $\mid {\tt while}\ e\ c \mid {\tt out}\ e\ {\tt on}\ a\ {\tt @}\ p$ 
\end{tabular}
\[ \inferrule{}
             {<{\tt skip};c,\sigma> \overarrow{\epsilon}{} <c,\sigma>}
   \qquad
   \inferrule{<c_1,\sigma> \overarrow{\alpha}{} <c'_1,\sigma'>}
             {<c_1;c_2,\sigma> \overarrow{\alpha}{} <c_1';c_2,\sigma'>}
   \qquad
   \inferrule{\sigma(e) = v}
             {<{\tt x} := e,\sigma> \overarrow{\epsilon}{} <{\tt skip},\sigma'>}
\]
\[
   \inferrule{\sigma(e) = v}
             {<{\tt out}\ e\ {\tt on}\ a\ {\tt @}\ p,\sigma> \overarrow{(a,v,p)}{} <{\tt skip},\sigma'>}
   \qquad
   \inferrule{}
             {<{\tt while}\ e\ c,\sigma> \overarrow{\epsilon}{}
              <{\tt if}\ e\ (c; {\tt while}\ e\ c)\ {\tt skip}, \sigma>}
\]
\[ \inferrule{\sigma(e) \not= 0}
             {<{\tt if}\ e\ c_1\ c_2,\sigma> \overarrow{\epsilon}{} <c_1,\sigma>}
   \qquad
   \inferrule{\sigma(e) = 0}
             {<{\tt if}\ e\ c_1\ c_2,\sigma> \overarrow{\epsilon}{} <c_2,\sigma>}
\]
\caption{Language and semantics.}
\label{fig:semantics}
\end{figure}

\subsection{Generic typing}

Traditional type systems for information flow
assume that all sensitive inputs to the system (here: program variables)
are associated with a security level.
Expressions in the command to be typed might combine information with different security
levels.
To ensure that all expression can be typed, the security levels are therefore
required to form at least a join-semilattice, or in some cases a full lattice.
The type system then ensures no information of a (combined) level $l_1$ can be
written to a program variable
with level $l_2$ unless $l_1 \sqsubseteq l_2$.

The system FST from Hunt and Sands~\cite{Hunt:Sands:ESOP11} differs from these type systems in two ways.
Firstly, it does not require intermediate assignments
to respect the security lattice ordering. As an observer is assumed to only see the
final state of the program, only the final value of a variable must
not depend on any information which is forbidden by the lattice ordering. For example,
suppose
${\it level}({\tt y}) \sqsubseteq {\it level}({\tt z}) \sqsubseteq {\it level}({\tt x})$
but
${\it level}({\tt x}) \not\sqsubseteq {\it level}({\tt z})$
and consider the first two assignments in the example from Fig.~\ref{fig:depex2}.
\[ \verb!x = z + 1; z = x;! \]
%\end{verbatim}
A traditional type system would label this command as insecure because of
the assignment {\tt z = x} and the fact that ${\it level}({\tt x}) \not\sqsubseteq {\it level}({\tt z})$,
even though the value of {\tt z} after this assignment does not depend on the initial value of
{\tt x} at all.
FST however is \emph{flow-sensitive}
and allows the security label on {\tt x} to vary through the code.

Secondly, and more significantly, by using the powerset of program variables as security lattice,
FST  provides a \emph{principal typing} from which
all other possible typings can be inferred.

Thus the typing by FST is generic: a command needs to be typed
only once and can then be verified against any static information-flow policy.
Since the ordering among labels is not relevant while deriving the typing,
FST is also able to verify policies which are not presented in
the shape of a security lattice, but any relational {\it `may-flow'} predicate between
security labels can be verified.

\subsection{Generic typing for dynamic policies}

We now present an extended version of FST which includes an
additional typing rule for outputs.
All the original typing rules of FST remain unchanged.

Intuitively, an output on a channel is like the final assignment to a variable
in the original FST, that is, its value can be observed.
Since types are sets of dependencies, % / variables
we could simply type an output channel as the union of all dependencies resulting
from all output statements for that channel.
This would be sound but unduly imprecise:
the only flows permitted would be those permitted by the policy \emph{at all times},
in effect requiring us to conservatively approximate each dynamic policy by a static one.
But we can do better than this.

The flow-sensitivity of FST means that a type derivation infers types
at intermediate program points which will, in general, be different from
the top-level type inferred for the program.
These intermediate types are not relevant for variables, since their intermediate values are
not observable.
But the outputs on channels at intermediate points
\emph{are} observable, and so intermediate channel types \emph{are} relevant.
Therefore, for each channel we record in $\Gamma$ distinct dependency sets for each program point
at which an output statement on that channel occurs.
Of course, this is still a static approximation of runtime behaviour.
While our simple examples of dynamic policies explicitly associate
policy changes to program points, for real-world use more expressive dynamic policy languages may be needed.
In Section~\ref{sec:dynpolicies} we formally define the semantics of a dynamic policy as an arbitrary function of
a program's execution history, which provides a high degree of generality.
However, in order to apply a typing to the verification of such a policy, it is first necessary
to conservatively approximate the flows permitted by the policy at each program point of interest
(Definition~\ref{def:policyapproximation}).

Let $X$ be the dependency set for the channel-$a$ output statement at program point $p$.
The meaning%
\footnote{This is progress-insensitive dependency (see Section~\ref{sec:knowledgebased}).
A progress-sensitive version can be defined in a similar way.}
of $X$ is as follows:
\begin{quote}
  Let $\sigma$ be a store such that execution starting in $\sigma$ arrives at $p$, producing the
  $i$'th output on $a$.
  Let $\rho$ be any store which agrees with $\sigma$ on all variables in $X$ and also eventually
  produces an $i$'th output on $a$ (not necessarily at the same program point).
  Then these two outputs will be equal.
\end{quote}
Two key aspects of our use of program points should be highlighted:
\begin{enumerate}

   \item While the intended semantics of $X$ as outlined above does not require corresponding outputs
on different runs to be produced at the same program point, the $X$ that is inferred by the
type system \emph{does} guarantee this stronger property. Essentially this is because (in common with all similar
analyses) the type system uses
control-flow dependency as a conservative proxy for the semantic dependency property of interest.

   \item Our choice of program point to distinguish between different ouputs on the same channel is not arbitrary;
it is essentially forced by the structure of the original type system.
As noted, program point annotations simply allow us to record in the final typing
exactly those intermediate dependency sets which are already
inferred by the underlying flow-sensitive system.
While it would be possible in principle to make even finer distinctions
(for example, aiming for path-sensitivity rather than just flow-sensitivity)
this would require fundamental changes to the type system.

\end{enumerate}

\begin{figure}
\begin{align*}
\ruleName{\TSSkip} &\quad
  \inferrule {}
             {\vdash \{ {\tt skip} \}\ \Gammaid}
\\
\ruleName{\TSAssign} &\quad
  \inferrule {}
             { \vdash \{{\tt x :=}\ e\}\ \Gammaid\ [ {\tt x} \mapsto {\it fv}(e) \cup \{ {\tt pc} \}  ]}
\\
\ruleName{\TSSequence} &\quad\hspace{-0.3ex}
  \inferrule {\vdash \{ c_1 \} \Gamma_1  \\ \vdash \{ c_2 \} \Gamma_2}
             {\vdash \{c_1\ {\tt ;}\ c_2\}\ \Gamma_2 ; \Gamma_1}
\end{align*} \vspace{-3ex}
\begin{align*}
\ruleName{\TSIfElse} &
\\ 
\span\inferrule{\vdash \{ c_i \} \Gamma_i \\
              \vdash \Gamma'_i = \Gamma_i ; \Gammaid[{\tt pc} \mapsto \{{\tt pc}\} \cup {\it fv}(e)] \\
              i = 1, 2  }
             {\vdash \{{\tt if}\ e\ c_1\ c_2 \}\ (\Gamma'_1 \cup \Gamma'_2)
   [ {\tt pc} \mapsto \{ {\tt pc} \}
   ]}
\\
\ruleName{\TSWhile} &
\\ 
\span\inferrule{ \vdash \{ c \} \Gamma_c
    \hspace{1cm}
      \Gamma_f = (\Gamma_c ; \Gammaid[\pc \mapsto \{\pc\} \cup \mathit{fv}(e)])^{*}
    }
    {\vdash \{{\tt while}\ e\ c\}\ \Gamma_f\ 
   [ {\tt pc} \mapsto \{ {\tt pc} \}
   ]}
\\
\ruleName{\TSOutput} &
\\
\span\inferrule {}
             {  \vdash \{{\tt out}\ e\ \mathtt{on}\ a\ \mathtt{@}\ p\}  \Gammaid  [ a_p \mapsto {\it fv}(e) \cup \{{\tt pc}, a, a_p \}
   ; a \mapsto \{{\tt pc}, a\}
   ]}
\end{align*}
\caption{Type System.}
\label{fig:typesystem}
\end{figure}

The resulting type system is shown in Figure~\ref{fig:typesystem}.
We now proceed informally to motivate its rules.
Definitions and proofs of formal soundness are presented in Section~\ref{sec:semsoundness}.

The type system derives judgements of the form $|- \{c\} \Gamma$, where
$\Gamma : \Var \rightarrow 2^{\Var}$ is an environment mapping variables
to a set of dependencies. The variables we consider are 
$\Var = \PVar \cup \CPoint \cup \{\pc\} \cup \Chan$
with $\CPoint = \Chan \times \PPoint$.
We consider the relevance of each kind of variable in turn.

\begin{itemize}
  \item As program variables $\PVar$ form the inputs to the command, these are
        the dependencies of interest in the typing of a command. For
        program variables themselves, $\Gamma({\tt x})$ are the dependencies for
        which a different intial value might result in a different final value of {\tt x}.
  \item Pairs of channels and program points $(a,p) \in \CPoint$ are denoted as
        $a_p$. The dependencies $\Gamma(a_p)$ are those program variables
        for which a difference in initial value might cause a difference
        in the value of any observation that can result from an output statement for channel $a$ with annotation $p$.
  \item Whenever the program counter $\pc \mathop{\in} \Gamma({\tt x})$ this indicates that this command
        potentially changes the value of program variable {\tt x}. Similar, if $\pc \mathop{\in} \Gamma(a)$
        then $c$ might produce an output on channel $a$ and if $\pc \mathop{\in} \Gamma(a_p)$
        then $c$ might produce an output on $a$ caused by a statement annotated with $p$.
        We use the program counter to catch implicit flows that may manifest
        in these ways.
  \item We use $\Chan$ to capture the latent flows described in example program B
        in the introduction.
        The dependencies $\Gamma(a)$ are those program variables
        for which a difference in initial value might result in a different
        number of outputs produced on channel $a$ by this command.
        This approach to address latent flows was first introduced in~\cite{askarov2012} as
        \emph{channel countext bounds}.
\end{itemize}
We first explain the notation used in the unchanged rules from FST
before turning our attention to the new \ruleName{\TSOutput} rule. All concepts
have been previously introduced in~\cite{Hunt:Sands:ESOP11}.

The function ${\it fv}(e)$ returns the free variables in expression $e$.
The identity environment
$\Gammaid$ maps each variable to the singleton set of itself, that is
$\Gammaid(x) \mathop{=} \{x\}$ for all $x \mathop{\in} \Var$.
Sequential composition of environments is defined as:
\[\Gamma_2;\Gamma_1(x) = \bigcup_{y \in \Gamma_2(x)}\Gamma_1(y)\]
Intuitively, $\Gamma_2;\Gamma_1$ is as $\Gamma_2$ but substituting the dependency relations
already established in $\Gamma_1$.
We overload the union operator for environments:  $(\Gamma_1 \cup \Gamma_2)(x) = \Gamma_1(x) \cup \Gamma_2(x)$.
We write $\Gamma^{*}$ for the
fixed-point of $\Gamma$, used in \ruleName{\TSWhile}:
\[
\Gamma^{*} = \bigcup_{n \geq 0} \Gamma^n \qquad
\textnormal{ where } \Gamma^0 = \Gammaid
\textnormal{ and } \Gamma^{n+1} = \Gamma^n; \Gamma
\]

It is only in the typing \ruleName{\TSOutput} of the output command that the additional channel and
program point dependencies are mentioned; this underlines our statement that
extending FST to target dynamic policies is straightforward.

We explain the changes to $\Gammaid$ in \ruleName{\TSOutput} in turn.
For $a_p$, clearly the value of the output and thus the observation is affected
by the program variables occuring in the expression $e$.
We also include the program counter $\pc$ to catch implicit flows;
if we have a command of the form 
 ${\tt if}\ e\ ({\tt out}\ 1\ {\tt on}\ a\ @\ p)\ ({\tt out}\ 2\ {\tt on}\ a\ @\ q)$
the value of the observation caused by output $a_p$ is affected by the
branching decision, which is caught in \ruleName{\TSIfElse}.

We include the channel context bounds $a$ for the channel
on which this output occurs to capture the latent flows of earlier conditional
outputs, as demonstrated in the introduction.
Observe that by the definition of sequential composition of environments,
we only add those dependencies for conditional outputs that happened
\emph{before} this output. That is, not the ones that follow this output, since
it cannot leak information about the absence of future observations.

Finally, we include the dependencies of output point $a_p$ itself. 
By doing so the dependency set of $a_p$ becomes \emph{cumulative}:
with every sequential composition (including those used in $\Gamma^{*}$)
the dependency set of $a_p$ only grows, as opposed to 
the dependencies of program variables.
This makes us sum the dependencies of all outputs on channel $a$ annotated with
the same program point, as we argued earlier.

The mapping for channel context bounds $a$ is motivated in a similar manner.
The $\pc$ is included since the variables affecting whether this output occurs
on channel $a$ are the same as those that affect whether this statement is reached.
Note that we are over-approximating here, as the type system adds the dependencies of
$e$ in \[
{\tt if}\ e\ ({\tt out}\ 1\ {\tt on}\ a\ @\ p_1)\ ({\tt out}\ 2\ {\tt
  on}\ a\ @\ p_2)
\]
to context bounds $a$, even though the number of outputs is always one.

Like for $a_p$, we make $a$ depend on itself, thus accumulating all the
dependencies that affect the number of outputs on channel $a$.

As the \ruleName{\TSOutput} rule does not introduce more complex operations than
already present, the type system has the same complexity as FST. That is,
the type system can be used to construct a generic type in $O(nv^3)$
where $n$ is the size of the program and $v$ the number of variables in \Var.

\section{Semantic Soundness and Policy Compliance}
\label{sec:semsoundness}
We present a soundness condition for the
type system, and show that the type system is sound.
We then describe how the generic typings that the type system derives can be used to check
compliance with a dynamic policy that is approximated per program point.
We begin by showing how an equivalence relation on stores can be created from
a typing:
\begin{definition}
  We write $\Gequiv{x}$ for the equivalence relation corresponding to the typing $\Gamma$ of
  variable $x \in \Var$, defined as $\sigma \Gequiv{x} \rho$ iff
  $\sigma({\tt y}) \mathop{=} \rho({\tt y})$ for all ${\tt y} \in \Gamma(x)$.
\end{definition}

As we are using $\Gamma(a_p)$ as the approximation of dependencies for an
observation, the soundness of the PI type system is similar to the PI
security for dynamic policies, except that we take the equivalence relation
as defined by $\Gamma(a_p)$ rather than the policy~$D$.

\begin{definition}[PI Type System Soundness]
\label{def:pitssound}
We say that the typing $|- \{c\} \Gamma$ is sound iff
for all $\sigma, \rho$, if
  $<c,\sigma> \overarrowA{t}{*} <c_{\sigma},\sigma'> \overarrow{(a,v,p)}{}$ and
  $<c,\rho> \overarrowA{t'}{*} <c_{\rho},\rho'> \overarrowA{v'}{}$ and
  $\length{t} = \length{t'}$ then
  $\sigma \Gequiv{a_p} \rho  \Rightarrow v \mathop{=} v'$.
\end{definition}

\begin{theorem}
\label{the:pisoundness}
All typings derived by the type system are sound.
\end{theorem}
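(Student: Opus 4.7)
The plan is to prove Theorem~\ref{the:pisoundness} by structural induction on the derivation of $\vdash \{c\}\ \Gamma$ (equivalently, on the command $c$, since the type system is syntax-directed). Definition~\ref{def:pitssound} compares two executions that are synchronised only by having emitted the same number of outputs on channel $a$; to make the induction go through I would strengthen the statement into an invariant tracking three parallel properties of any two executions of a typed $c$ from stores $\sigma, \rho$: (i) terminated-variable agreement, namely $\sigma \Gequiv{x} \rho$ implies the final states agree on $x$; (ii) a channel-count property, namely $\sigma \Gequiv{a} \rho$ implies the two runs emit the same number of outputs on $a$ during $c$, up to prefixing for non-terminating runs (this is the progress-insensitive component); and (iii) the output-value property, namely whenever both runs emit an $i$-th output on $a$ during $c$, with the $\sigma$-side's output produced at program point $p$, agreement on $\Gamma(a_p)$ forces the two values to coincide.

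With this invariant the base cases \TSSkip, \TSAssign and \TSOutput are direct. For \TSOutput, the rule places $\mathit{fv}(e) \cup \{\pc, a, a_p\}$ into $\Gamma(a_p)$, so $\sigma \Gequiv{a_p} \rho$ immediately forces $\sigma(e) = \rho(e)$, and clause (ii) is trivial because exactly one output is emitted. The inductive case \TSSequence uses the classical composition-of-environments lemma from FST for clause (i); for clauses (ii) and (iii), the key observation is that the \TSOutput rule records $a \in \Gamma(a)$ and $a_p \in \Gamma(a_p)$, so in any composition $\Gamma_2;\Gamma_1$ the dependencies attached to $a$ and $a_p$ are cumulative across the sequence. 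This lets me split each trace at the $c_1$/$c_2$ boundary, use clause~(ii) on $c_1$ to verify that the splits are synchronised, and apply the induction hypothesis to each half.

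The main obstacle is \TSIfElse, and by extension \TSWhile, when the two runs take different branches or different numbers of iterations. Divergence of the guards forces $\sigma(e) \neq \rho(e)$, so the two stores disagree on some variable in $\mathit{fv}(e)$. Both rules shift the program counter to $\{\pc\} \cup \mathit{fv}(e)$ before typing each branch body, and every \TSAssign and \TSOutput in that body then injects $\pc$, and hence $\mathit{fv}(e)$, into its updated entries; propagating this through the sequential compositions in the body, the resulting $\Gamma(x)$, $\Gamma(a)$ and $\Gamma(a_p)$ for every variable or output point touched by the divergent branch contain an offending free variable of $e$. Hence the hypotheses $\sigma \Gequiv{x} \rho$, $\sigma \Gequiv{a} \rho$ and $\sigma \Gequiv{a_p} \rho$ already fail and the three clauses hold vacuously. \TSWhile is then handled by the Kleene fixed point $\Gamma_f = (\Gamma_c;\Gammaid[\pc \mapsto \{\pc\} \cup \mathit{fv}(e)])^{*}$: the invariant is proved for each finite unrolling $\Gamma^n$ by induction on $n$ and inherited by the union, using monotonicity of all three clauses in $\Gamma$. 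Definition~\ref{def:pitssound} is finally an instance of clause~(iii) applied to the top-level derivation at output index $|t|+1 = |t'|+1$.
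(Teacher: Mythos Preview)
Your proposal takes a genuinely different route from the paper. The paper does \emph{not} proceed by structural induction on the typing derivation; instead it performs a single complete induction on the number $n$ of evaluation steps in $<c,\sigma> \overarrowA{t}{n} <c_\sigma,\sigma'> \overarrow{(a,v,p)}{}$, supported by three small-step lemmas: a monotonicity lemma (dependencies of $a_p$ only grow under sequential composition), a \emph{Non-Branching Reduction} lemma (one synchronised step preserves $\Gequiv{a_p}$), and a \emph{Branching Reduction} lemma (when the two runs diverge at an \texttt{if}, either they rejoin at a common command with $\Gequiv{a_p}$ re-established and equal output counts, or the $\rho$-side can never overtake the $\sigma$-side's trace length). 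The induction step then simply case-splits on whether the first step is branching or not. Your clauses (i) and (ii) do appear in the paper, but as separate lemmas (the Variables lemma and the Channel Context Bounds lemma), used inside the Branching Reduction lemma, rather than as part of a single strengthened top-level invariant.

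Your structural decomposition is plausible and, where it differs, arguably cleaner for the compositional cases, but it has one real gap you should be aware of. Pure structural induction on $c$ does not give you a usable hypothesis for \ruleName{\TSWhile}: the operational semantics unfolds \texttt{while} to \texttt{if e (c; while e c) skip}, which still contains the original loop, so the subterm is not structurally smaller. Your remark ``the invariant is proved for each finite unrolling $\Gamma^n$ by induction on $n$'' gestures at the fix, but $\Gamma^n$ is a typing environment, not a command, and you have not said what statement about executions you are proving by that inner induction or how it connects to the operational unfolding. To make this go through you need a subsidiary induction on the number of loop iterations (equivalently on execution length), together with a lemma like the paper's Lemma~\ref{app:lemma:whileunfold} ($\Gamma'(x) \subseteq \Gamma(x)$ for the one-step unfolding). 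Once you add that inner induction you are effectively doing the paper's execution-length argument locally at each loop; this is fine, but it is no longer a purely structural proof, and you should make that explicit. A secondary point: in your \ruleName{\TSSequence} case, splitting the traces at the $c_1/c_2$ boundary and invoking clause~(ii) on $c_1$ requires $\sigma \NGequiv{\Gamma_1}{a} \rho$, which you only obtain from the top-level hypothesis $\sigma \Gequiv{a_p} \rho$ via the inclusion $a \in \Gamma_2(a_p)$ when the distinguished output lies in $c_2$; the case where $\sigma$'s $i$-th output falls in $c_1$ but $\rho$'s falls in $c_2$ (or conversely) deserves an explicit argument rather than being absorbed into ``apply the induction hypothesis to each half''.
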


The proof for Theorem~\ref{the:pisoundness} can be found in \paperTR{Appendix A of~\cite{delft2015TR}}{Appendix~\ref{app:pits}}.

To link the typing and the actual dynamic policy, we rely
on an analysis that is able to approximate the dynamic policy per program point.
A sound approximation should return a policy that is at least as restrictive
as the actual policy for any observation on that program point.

\begin{definition}[Dynamic Policy Approximation]
\label{def:policyapproximation}
A dynamic policy approximation
$A : \CPoint \rightarrow 2^{\Sigma \times \Sigma}$ is a mapping from channel and
program point pairs to an equivalence relation on stores. 
The approximation $A$ on command $c$, written $c:A$,
is sound for dynamic policy $D$ iff, for all $\sigma$ if 
$<c,\sigma>\overarrow{}{n}<c',\sigma'>\overarrow{(a,v,p)}{}$ then
$A(a_p)$ is coarser than $D(c,\sigma,n)$.
\end{definition}

We now arrive at the main theorem in this section.
Given a typing $|- \{c\} \Gamma$, we can now easily verify for command $c$ its
compliance with \emph{any} soundly approximated dynamic policy, by
simply checking that the typing's policy is at least as restrictive as the
approximated dynamic policy for every program point.

\begin{theorem}[PI Dynamic Policy Compliance]
\label{the:pidyncompl}
Let $c\mathop{:}A$ be a sound approximation of dynamic policy $D$. If
$|-\{c\} \Gamma$ and $\Gequiv{a_p}$ is coarser than $A(a_p)$ for all
program points $a_p$, then $c$
is two-run PI secure for $D$ on all channels and for all initial stores.
\end{theorem}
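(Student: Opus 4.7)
The plan is a short ``glue'' argument: chain two equivalence-coarsening steps to move from the policy relation into the typing relation, then invoke Theorem~\ref{the:pisoundness}. Fix an arbitrary channel $a$ and initial store $\sigma$, and suppose $<c,\sigma> \overarrowA{t}{n} <c_n,\sigma_n> \overarrowA{v}{1}$ with the final labelled step emitting $(a,v,p)$. Let $\rho$ be any store with $\rho \equiv \sigma$ for $\equiv = D_a(c,\sigma,n)$, and suppose $<c,\rho> \overarrowA{t'\cdot v'}{*}$ with $\length{t'} = \length{t}$. The goal $v' = v$ is precisely what Definition~\ref{def:pisecprop:channel} demands.

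The key step is to promote $\rho \equiv \sigma$ to $\sigma \Gequiv{a_p} \rho$. Reading ``coarser than'' as the containment order on equivalence relations, soundness of $c\mathop{:}A$ at the execution point $(c,\sigma,n)$ (Definition~\ref{def:policyapproximation}) immediately yields $\rho\,A(a_p)\,\sigma$, and the main hypothesis of the theorem (that $\Gequiv{a_p}$ is coarser than $A(a_p)$) upgrades this to $\sigma \Gequiv{a_p} \rho$.

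With this in hand, Theorem~\ref{the:pisoundness} applied to the two runs of $c$, which already agree on filtered trace length $\length{t}=\length{t'}$, delivers $v' = v$ directly. Since $a$ and $\sigma$ were arbitrary, $c$ is two-run PI secure for $D$ on all channels and for all initial stores.

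The only fiddly point is checking that the step index $n$ has a consistent meaning across the definitions involved: in both Definition~\ref{def:pisecprop:channel} and Definition~\ref{def:policyapproximation} it is the total number of evaluation steps (including silent ones) taken before the output at $p$, whereas Definition~\ref{def:pitssound} abstracts over step counts altogether and is phrased purely in terms of filtered trace lengths. Crucially, Definition~\ref{def:pitssound} does \emph{not} force the second run to emit its output at the same program point $p$---only the filtered trace lengths must match---which is essential, since in general the program points of corresponding outputs in two runs can differ. Once these definitional sanity checks are dispatched, the real mathematical content sits entirely inside Theorem~\ref{the:pisoundness}, leaving no further obstacle.
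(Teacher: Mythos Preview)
Your proposal is correct and follows essentially the same approach as the paper: unpack the two-run definition, use soundness of the approximation to pass from $D_a(c,\sigma,n)$ to $A(a_p)$, use the coarseness hypothesis to pass from $A(a_p)$ to $\Gequiv{a_p}$, and invoke Theorem~\ref{the:pisoundness}. Your closing remarks about step-index consistency and the fact that Definition~\ref{def:pitssound} does not constrain the program point of the second run are sound observations but are not needed for the argument to go through.
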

\begin{proof}
  Given a store $\sigma$ such that $<c,\sigma> \overarrowA{t}{n} <c_{\sigma},\sigma'> \overarrow{(a,v,p)}{}$ and
  a store $\rho$ such that $<c,\rho> \overarrowA{t'}{*} <c_{\rho},\rho'> \overarrowA{v'}{}$  and
  $\length{t} = \length{t'}$ and
  $\sigma D_a(c,\sigma,n) \rho$, we need to show that $v \mathop{=} v'$.
  Since $c\mathop{:}A$ is a sound approximation of $D$, we have that $\sigma A(a_p) \rho$
  and as $\Gequiv{a_p}$ is coarser than $A(a_p)$ we also have $\sigma \Gequiv{a_p} \rho$.
  Which by Theorem~\ref{the:pisoundness} gives us that $v \mathop{=} v'$.
\end{proof}
\begin{corollary}
If the conditions of Theorem~\ref{the:pidyncompl} are met, then
$c$ is PI secure for D for all attackers. This is immediate by
Theorem~\ref{thm:two-run-completeness}.
\end{corollary}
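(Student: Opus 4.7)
The plan is a direct two-step chain, mirroring the hint already present in the statement. First, I would apply Theorem~\ref{the:pidyncompl}: the hypotheses of the corollary are identical to the hypotheses of that theorem, and its conclusion is precisely that $c$ is two-run PI secure for $D$ on all channels and for all initial stores. Second, I would invoke Theorem~\ref{thm:two-run-completeness}, which asserts the equivalence of two-run PI security and PI security against all attackers for any fixed policy, channel, and initial store. Using the ``two-run implies all-attackers'' direction of the equivalence, the desired conclusion follows.

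The only matter that requires a sentence of care is the scope of the quantifiers. Theorem~\ref{thm:two-run-completeness} is phrased pointwise: for a given policy, channel, and initial store, one property is equivalent to the other. The corollary, on the other hand, quantifies universally over channels and initial stores (and attackers). Since the hypothesis supplied by Theorem~\ref{the:pidyncompl} is two-run PI security for \emph{every} channel and \emph{every} initial store, instantiating Theorem~\ref{thm:two-run-completeness} at each such channel/store pair yields PI security against all attackers at that pair, and the full universally-quantified conclusion follows by collecting the instances.

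I do not foresee a genuine obstacle. All the semantic content has been discharged by the preceding results: the flow-sensitive soundness result (Theorem~\ref{the:pisoundness}) handles the dependency-analysis side; the policy-approximation bridge in Theorem~\ref{the:pidyncompl} converts the typing into a two-run obligation phrased in terms of $D$; and the unwinding argument of Theorem~\ref{thm:two-run-completeness} collapses quantification over all (forgetful) attackers into the single canonical counting attacker $A_{\#}$. The corollary is essentially a labelling of their composition, so the proof plan is simply ``invoke the two theorems in sequence.''
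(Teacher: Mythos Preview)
Your proposal is correct and matches the paper's approach exactly: the paper's entire proof is the clause ``This is immediate by Theorem~\ref{thm:two-run-completeness},'' implicitly composing Theorem~\ref{the:pidyncompl} with the two-run characterisation, just as you describe. Your remark about the pointwise-to-universal lifting of the quantifiers is a harmless elaboration the paper leaves tacit.
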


\section{Related Work}
\label{sec:related}
In this section we consider the related work
on security for dynamic policies and
on generic enforcement mechanisms for information-flow control.
We already discuss the knowledge-based definitions by Askarov and Chong~\cite{askarov2012}
in detail in Section~\ref{sec:knowledgebased}. 

The generality of expressing dynamic policies per execution point can be
identified already in the early work by Goguen and Meseguer~\cite{Goguen:Meseguer:Noninterference}.
They introduce the notion of conditional noninterference as a noninterference
relation that should hold per step in the system, provided that some
condition on the execution history holds. Conditional noninterference
has been recently revisited by Zhang~\cite{Zhang2012} who uses unwinding relations to
present a collection of properties that can be verified by existing proof assistants.

Broberg and Sands \cite{Broberg:Sands:PLAS09} developed another
knowledge-based definition of security for dynamic
policies which only dealt with the attacker with
perfect recall. The approach was specialised to the specific dynamic
policy mechanism Paralocks~\cite{Broberg:Sands:Paralocks} which uses part of the program state to vary the ordering between security labels.  

Balliu et al.~\cite{Balliu:2011:ETL} introduce a temporal epistemic logic to
express information flow policies. Like our dynamic policies, the epistemic formulas
are to be satisfied per execution point, suggesting that we are able express a
similar set of dynamic policies. Dynamic policies can be individually checked
by the \textsc{ENCoVer} tool~\cite{balliu2012encover}.

The way in which we define dynamic policies matches exactly the set of
synchronous dyanmic policies: those policies that deterministically determine
the active policy based on an execution point.
Conversely, an asynchronously changing policy cannot be deterministically determined
from an execution point, but is influenced by an environment external to the
running program.

There is relatively little work on the enforcement of asynchronous dynamic policies.
Swamy et al.~\cite{Swamy+:Managing} present the language {\sc Rx} where policies are
define in a role-based fashion, where membership and delegation of roles can 
change dynamically. Hicks et al.~\cite{Hicks+:Dynamic} present an extension to
the DLM model, allowing the acts-for hierarchy among principals to change while
the prorgam is running.

Both approaches however need a mechanism to synchronise the policy changes with
the program in order to enforce information-flow properties. {\sc Rx} uses transactions
which can rollback when a change in policy violates some of the flows in it,
whereas the work by Hick et al. inserts automatically derived coercions that
force run-time checks whenever the policy changes.

One of the beneficial characteristics of our enforcement approach is that 
commands need to be analysed only once to be verified against multiple
information-flow policies. This generality can also be found in the work
by Stefan et al.~\cite{Stefan+:Flexible} presenting LIO, a Haskell library for
inforamtion-flow enforcement which is also parametric in the security policy.
The main differences between our approach and theirs is that 
LIO's enforcement is dynamic rather than static,
while the enforced policies are static rather than dynamic.

\section{Conclusions}
\label{sec:conclusions}
% conclusions

We extended the flow-sensitive type system from~\cite{Hunt:Sands:POPL06}
to provide for each output channel individual dependency sets per point in
the program and demonstrated that this is sufficient to support dynamic
information flow policies. We proved the type system sound with respect to
a straightforward two-run property which we showed sufficient to imply
knowledge-based security conditions.

As our approach allows for the core of the analysis to be performed
even before the policy is known, this enables us to reuse the results of the
dependency analysis across the verification of multiple types of policies.
An interesting direction for future research could be on the
possibility to use the dependency analyses performed by advanced slicing
tools such as JOANA~\cite{JOANA} and Indus~\cite{Indus}.

%Another possible direction for future work could be to make the analysis
%even more generic, by including parametricity on the channels that an
%attacker observes. 

\bibliographystyle{alpha}
\bibliography{literature}{}

\paperTR{}{
  \appendix

  \section{Type System Soundness}
  %\subsection{Progress-Insensitive}
\label{app:pits}

We need a collection of lemmas on the type system, which we prove correct in
\paperTR{the extended version of our paper.}
        {appendix~\ref{app:auxlemmas}.}
        
\begin{lemma}[Preserving Dependencies]
\label{app:lemma:paper:presdep}
The dependencies of program points only
increase with sequential composition.
That is, if $|- \{ c_1 \} \Gamma_1$ and $|- \{c_1;c_2\} \Gamma$ then $\Gamma_1(a_p) \subseteq \Gamma(a_p)$
for all $a_p$.
\end{lemma}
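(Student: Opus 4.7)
The plan is to reduce the claim to a small invariant about all typings derived by the system, namely the \emph{self-containment} property
\[
  \vdash \{c\}\Gamma' \;\;\Longrightarrow\;\; a_p \in \Gamma'(a_p)\;\;\text{for every }a_p \in \CPoint.
\]
Granting this, the lemma falls out immediately. By \ruleName{\TSSequence}, the only way to derive $\vdash \{c_1;c_2\}\Gamma$ is with $\Gamma = \Gamma_2;\Gamma_1$ for some $\Gamma_2$ with $\vdash \{c_2\}\Gamma_2$. By the definition of sequential composition, $\Gamma(a_p) = \bigcup_{y \in \Gamma_2(a_p)} \Gamma_1(y)$. Applying the self-containment property to $\Gamma_2$ gives $a_p \in \Gamma_2(a_p)$, so $\Gamma_1(a_p)$ is one of the sets appearing in that union, and therefore $\Gamma_1(a_p) \subseteq \Gamma(a_p)$.

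Thus the real work is proving the self-containment property, which I would do by straightforward induction on the derivation $\vdash \{c\}\Gamma'$. The base cases \ruleName{\TSSkip} and \ruleName{\TSAssign} just give $\Gammaid$ (possibly updated at a program variable, not at $a_p$), so $\Gamma'(a_p) = \{a_p\}$. For \ruleName{\TSOutput}, inspection of the update shows $a_p$ appears explicitly in the new dependency set for $a_p$ (this is exactly the ``cumulative'' design noted in the discussion of the rule). For \ruleName{\TSSequence} the inductive step uses the same unfolding of sequential composition as above. For \ruleName{\TSIfElse} and \ruleName{\TSWhile}, observe that the $\pc \mapsto \ldots$ update and the fixed-point construction leave the $a_p$-entry governed by $\Gammaid$ at worst, and that union and sequential composition both preserve membership $a_p \in \Gamma'(a_p)$ when each operand already has it (using $\Gammaid(a_p) = \{a_p\}$).

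I expect the only mildly delicate step to be the \ruleName{\TSWhile} case, where one must verify that taking the Kleene closure $\Gamma_f = (\Gamma_c;\Gammaid[\pc\mapsto\cdots])^{*}$ does not lose the entry $a_p \in \Gamma_f(a_p)$; but this is automatic because $\Gamma^0 = \Gammaid$ appears in the union defining $\Gamma^{*}$, so $\{a_p\} = \Gammaid(a_p) \subseteq \Gamma_f(a_p)$. Everything else is a mechanical check against the rules in Figure~\ref{fig:typesystem}.
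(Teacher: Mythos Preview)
Your proposal is correct and follows essentially the same route as the paper's own proof: first establish the auxiliary invariant $a_p \in \Gamma'(a_p)$ for every derivable $\vdash \{c\}\Gamma'$ by induction on the derivation, then conclude the lemma by unfolding $\Gamma = \Gamma_2;\Gamma_1$ and using $a_p \in \Gamma_2(a_p)$. Your handling of \ruleName{\TSWhile} via $\Gamma^0 = \Gammaid$ in the closure is in fact slightly simpler than the paper's (which invokes the induction hypothesis on $\Gamma_c$), but the overall structure is the same.
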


\begin{lemma}[Non-Branching Reduction]
\label{app:lemma:paper:nbr}
For all configurations $c$ and stores $\sigma, \rho$ such that
  $|- \{c\} \Gamma$ with
  $<c,\sigma> \overarrow{\alpha}{} <c',\sigma'>$ and
  $<c,\rho> \overarrow{\beta}{} <c',\rho'>$,
  i.e.\ to the same command $c'$,
  where $\alpha, \beta$ is either an output or $\epsilon$. Then
  $\sigma \Gequiv{a_p} \rho$ implies
  that $|- \{c'\} \Gamma'$ and
  $\sigma' \NGequiv{\Gamma'}{a_p} \rho'$.
\end{lemma}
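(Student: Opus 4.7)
The plan is to prove the lemma by induction on the derivation of $\langle c,\sigma\rangle \overarrow{\alpha}{} \langle c',\sigma'\rangle$, case-analysing the applied reduction rule. Because sequential composition in the type system threads dependencies for $a_p$ through intermediate program and channel-point variables, I would strengthen the induction hypothesis to hold for every $z \in \Var$ simultaneously: if $\vdash \{c\}\, \Gamma$ and $\vdash \{c'\}\, \Gamma'$ and both reductions step to the same $c'$, then $\sigma \NGequiv{\Gamma}{z} \rho \Rightarrow \sigma' \NGequiv{\Gamma'}{z} \rho'$ for every $z$. Pseudo-variables such as $\pc$ and $a \in \Chan$ can be treated as taking canonical, store-independent values, so equivalences at them hold automatically; the substantive content lies at ordinary program variables and at channel-points $a_p$.

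The base cases are handled by direct inspection of the typing rules. \textsc{E-Skip} reduces $\mathtt{skip}; c'' \to c''$ without changing the store, and by \textsc{TS-Sequence} and \textsc{TS-Skip}, $\Gamma = \Gamma_{c''}; \Gamma_{id} = \Gamma_{c''} = \Gamma'$. For \textsc{E-Assign} on $x := e$, \textsc{TS-Assign} sets $\Gamma(x) = {\it fv}(e) \cup \{\pc\}$, so agreement of $\sigma$ and $\rho$ on $\Gamma(x)$ forces $\sigma(e) = \rho(e)$ and hence $\sigma'(x) = \rho'(x)$; for $z \neq x$, the stores are unchanged and $\Gamma(z) = \Gamma'(z) = \{z\}$. \textsc{E-Output} leaves the store unchanged, and the cumulative clause $a_p \mapsto {\it fv}(e) \cup \{\pc, a, a_p\}$ of \textsc{TS-Output} ensures $\Gamma'(z) = \{z\} \subseteq \Gamma(z)$ for every $z$. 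For \textsc{E-IfTrue} and \textsc{E-IfFalse}, the store is unchanged, and unfolding $\Gamma_{c_i}' = \Gamma_{c_i}; \Gamma_{id}[\pc \mapsto \{\pc\} \cup {\it fv}(e)]$ in \textsc{TS-IfElse} gives $\Gamma_{c_i}(z) \subseteq \Gamma(z)$ for every $z$, so the hypothesis transfers to the chosen branch's typing.

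The inductive step is \textsc{E-Sequence}: if $c = c_1; c_2$ steps via $\langle c_1,\sigma\rangle \to \langle c_1',\sigma'\rangle$, the strengthened IH applied to the $c_1$-step gives, for every $y \in \Var$, $\sigma \NGequiv{\Gamma_{c_1}}{y} \rho \Rightarrow \sigma' \NGequiv{\Gamma_{c_1'}}{y} \rho'$. Since $\Gamma(a_p) = \bigcup_{y \in \Gamma_{c_2}(a_p)} \Gamma_{c_1}(y)$ and $\Gamma'(a_p) = \bigcup_{y \in \Gamma_{c_2}(a_p)} \Gamma_{c_1'}(y)$, the hypothesis $\sigma \Gequiv{a_p} \rho$ decomposes into agreement on each $\Gamma_{c_1}(y)$ for $y \in \Gamma_{c_2}(a_p)$; the IH transports each of these to agreement on $\Gamma_{c_1'}(y)$, which recombine into agreement on $\Gamma'(a_p)$.

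The main obstacle I anticipate is the \textsc{E-While} case. Here $c'$ is the unfolded $\mathtt{if}\ e\ (c_w; \mathtt{while}\ e\ c_w)\ \mathtt{skip}$, whose typing $\Gamma'$ is built from the typing of the inner $\mathtt{while}$ itself via \textsc{TS-IfElse} and \textsc{TS-Sequence}. Letting $G = \Gamma_{c_w}; \Gamma_{id}[\pc \mapsto \{\pc\} \cup {\it fv}(e)]$, so that \textsc{TS-While} gives $\Gamma = G^*[\pc \mapsto \{\pc\}]$, showing $\Gamma'(z) \subseteq \Gamma(z)$ pointwise requires the fixed-point properties $\Gamma_{id} \subseteq G^*$ and $G; G^* \subseteq G^*$, combined with Lemma~\ref{app:lemma:paper:presdep} on the monotone accumulation of channel-point dependencies under sequential composition; once this bound is established, the argument reduces to the \textsc{E-IfTrue}/\textsc{E-IfFalse} pattern.
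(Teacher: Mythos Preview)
Your proposal is correct and follows essentially the same route as the paper: induction on the reduction step, with the sequence case decomposing $\Gamma_2;\Gamma_1(a_p)$ into the constituent $\Gamma_1(y)$ for $y \in \Gamma_2(a_p)$ and recomposing after the step. You are right to make the strengthening to all $z \in \Var$ explicit; the paper's own proof of the sequence case silently invokes the IH at arbitrary $y \in \Gamma_2(a_p)$, not just at channel points, so the generalisation you introduce is exactly what is needed there.

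The only notable difference is in packaging. For the \textsc{E-While} case the paper factors out the containment $\Gamma'(x) \subseteq \Gamma(x)$ into a separate unfolding lemma (proved by the fixed-point reasoning you sketch, namely $\Gammaid \subseteq G^*$ and $G^*;G \subseteq G^*$ together with $\Gamma_c(\pc) = \{\pc\}$), and then the while case becomes a one-liner. Your inline argument amounts to reproving that lemma; note that Lemma~\ref{app:lemma:paper:presdep} is not actually needed here, since the required containment follows directly from the closure properties of $G^*$. Conversely, because the paper states the lemma only at $a_p$, its assignment and output cases are trivial ($\Gamma' = \Gammaid$ so $\Gamma'(a_p) = \{a_p\}$ contains no program variable), whereas your strengthened statement forces you to check $z = {\tt x}$ explicitly in \textsc{E-Assign}; this is the small price paid for making the induction go through cleanly in the sequence case.
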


\begin{lemma}[Branching Reduction]
\label{app:lemma:paper:br}
For all configurations $c$ and stores $\sigma, \rho$such that
  $|- \{c\} \Gamma$ with
  $<c,\sigma> \overarrow{\epsilon}{} <c'_\sigma,\sigma'>$ and
  $<c,\rho> \overarrow{\epsilon}{} <c'_\rho,\rho'>$,
  where $c'_\sigma \mathop{\not=} c'_\rho$. 
  For all $a_p$ such that
  $<c'_\sigma,\sigma'> \overarrowA{t(a,v,p)}{*}$, it holds that 
  if $\sigma \Gequiv{a_p} \rho$ then either:
\begin{itemize}
  \item There exists a joining command $c_j$ with $t_1 \mathop{\preceq} t$,
        such that
        $<c'_\sigma,\sigma'> \overarrowA{t_1}{*} <c_j,\sigma_j>$
        and
        $<c'_\rho,\rho'> \overarrowA{t'_1}{*} <c_j,\rho_j>$
        with $\length{t_1} \mathop{=} \length{t'_1}$ and
        $|- \{c_j\} \Gamma_j$ with $\sigma_j \NGequiv{\Gamma_j}{a_p} \rho_j$.
        I.e. both executions join again at equal command $c_j$ after an equal number
        of outputs with equivalent dependencies for $a_p$, or
  \item For all $t'$ such that 
        $<c,\rho> \overarrowA{t'}{*}$, $\length{t} \geq \length{t'}$.
\end{itemize}
\end{lemma}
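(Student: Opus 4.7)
The plan is a case analysis on the structure of $c$, exploiting the observation that the only operational rule that can give distinct silent reducts from equal commands is the one for ${\tt if}$. Thus $c$ must contain an ${\tt if}$-statement as its active redex, so up to nested sequential composition there are two cases: (I) $c = {\tt if}\ e\ c_1\ c_2$ standalone, or (II) $c$ has the form $({\tt if}\ e\ c_1\ c_2); c_r$ with some continuation $c_r$. WLOG $c'_\sigma$ comes from $c_1$ and $c'_\rho$ from $c_2$. The key tool throughout will be to derive contradictions from the hypothesis $\sigma \Gequiv{a_p} \rho$ whenever ${\it fv}(e) \subseteq \Gamma(a_p)$: such inclusions will systematically be forced by the \TSIfElse rule, which plants ${\it fv}(e)$ into the ${\tt pc}$-column of both branches, together with \TSOutput and Lemma~\ref{app:lemma:paper:presdep}, which guarantee that ${\tt pc} \in \Gamma_i(a_p)$ whenever $c_i$ contains an output at $a_p$.

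First I dispatch Case~(I). Any $a_p$-output that $c'_\sigma = c_1$ eventually produces must originate from an output statement in $c_1$, so \TSOutput and Lemma~\ref{app:lemma:paper:presdep} yield ${\tt pc} \in \Gamma_1(a_p)$, and then \TSIfElse gives ${\it fv}(e) \subseteq \Gamma'_1(a_p) \subseteq \Gamma(a_p)$. Hence $\sigma \Gequiv{a_p} \rho$ forces $\sigma(e) = \rho(e)$, contradicting the branching divergence, so the hypothesis is vacuous in this case.

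For Case~(II), the same argument shows the $a_p$-output cannot come from within $c_1$, and so it must originate in $c_r$ after $c_1$ has terminated; by \TSOutput this ensures $a \in \Gamma_r(a_p)$. A sharpening of the bookkeeping then forbids either $c_1$ or $c_2$ from producing any $a$-output whatsoever: if so then ${\tt pc} \in \Gamma_i(a)$, \TSIfElse gives ${\it fv}(e) \subseteq \Gamma_{if}(a)$, and sequential composition with $\Gamma_r$ via $a \in \Gamma_r(a_p)$ yields ${\it fv}(e) \subseteq \Gamma(a_p)$, another contradiction. Hence both branches are $a$-silent throughout their execution. I now split on the fate of $c_2$ from $\rho'$: if $c_2$ diverges (silently with respect to channel $a$), the $\rho$-trace stays empty forever and the second conclusion of the lemma holds trivially; if $c_2$ terminates (and hence so does $c_1$, otherwise no $a_p$-output could arise on the $\sigma$-side), both executions synchronise at the joining command $c_j = c_r$ with $t_1 = t'_1 = \epsilon \preceq t$.

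It remains to verify $\sigma_j \NGequiv{\Gamma_r}{a_p} \rho_j$. Fix $y \in \Gamma_r(a_p)$. By the same contradiction scheme, neither $c_1$ nor $c_2$ can modify $y$: a modification of $y$ by $c_i$ gives ${\tt pc} \in \Gamma_i(y)$, and then \TSIfElse followed by composition with $\Gamma_r$ (using $y \in \Gamma_r(a_p)$) produces ${\it fv}(e) \subseteq \Gamma(a_p)$. Hence $\sigma_j(y) = \sigma(y)$ and $\rho_j(y) = \rho(y)$. Moreover, since $c_i$ leaves $y$ untouched we have $\Gamma_i(y) = \{y\}$, so $y \in \Gamma_{if}(y)$ and the composition $\Gamma = \Gamma_r; \Gamma_{if}$ combined with $y \in \Gamma_r(a_p)$ yields $y \in \Gamma(a_p)$; the hypothesis $\sigma \Gequiv{a_p} \rho$ then delivers $\sigma(y) = \rho(y)$, closing the argument. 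The main technical obstacle will be this final step, in which the same typing chain must be unwound to simultaneously \emph{forbid} modification of $y$ and to \emph{establish} $y \in \Gamma(a_p)$ via overlapping uses of \TSIfElse and sequential composition; the slightly more general situation where the ${\tt if}$ is buried further left under additional sequential nesting can be handled by routine flattening and fits into the same argument.
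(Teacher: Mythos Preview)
Your argument is essentially correct but takes a different route from the paper's proof, and there is one place where it is under-specified.

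The paper proceeds by induction on the small-step evaluation $<c,\sigma> \overarrow{\epsilon}{} <c'_\sigma,\sigma'>$. The only non-vacuous cases are the two ${\tt if}$ rules and sequential composition $c_1;c_2$. For ${\tt if}$ the paper argues exactly as you do in Case~(I). For $c_1;c_2$ the paper does \emph{not} assume $c_1$ is an ${\tt if}$: if the $a_p$-output arises inside $c_1$ it recurses on $c_1$; if it arises in $c_2$, then $c_1$ terminates from $\sigma$ and the paper invokes two pre-established semantic lemmas --- the Channel Context Bounds lemma (giving that $c_1$ from $\rho$ either diverges with bounded $a$-trace or terminates with equal $a$-trace length) and the Variables lemma (giving $\sigma_j \NGequiv{\Gamma_2}{a_p} \rho_j$) --- with joining command $c_j = c_2$. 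Your approach instead drills down to the ${\tt if}$ and re-derives what those lemmas deliver, by showing directly from the typing rules that under the branching hypothesis both branches must be $a$-silent and must leave every program variable in $\Gamma_r(a_p)$ untouched. This is a legitimate and somewhat more elementary alternative; what you gain is avoiding the two black-box lemmas, what you pay is re-proving their content inline. Note that your justifications via ``\ruleName{\TSOutput} and Lemma~\ref{app:lemma:paper:presdep}'' are not quite adequate: Lemma~\ref{app:lemma:paper:presdep} only covers sequential composition, so deducing $\pc \in \Gamma_1(a_p)$ (resp.\ $a \in \Gamma_r(a_p)$) from the presence of an output really requires the Program Counter lemma (resp.\ the second clause of the Channel Context Bounds lemma), and your non-modification step needs the companion fact that $\pc \not\in \Gamma(y)$ forces $\Gamma(y) = \{y\}$.

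The one genuine gap is the ``routine flattening'' for deeper left-nesting. Operationally $(c_1;c_2);c_3$ does not reduce to $c_1;(c_2;c_3)$, so you cannot literally rewrite $c$ into the shape $({\tt if}\ e\ c_1\ c_2);c_r$. To make your direct case analysis rigorous you would need either an explicit induction on the nesting depth of the ${\tt if}$ inside the left spine --- at which point you are essentially reproducing the paper's inductive structure --- or a separate associativity-up-to-bisimulation lemma together with the observation that environment composition is associative. Either fix works, but neither is as immediate as ``routine'' suggests; the paper's induction on the small step handles this case automatically via the recursive subcase of $c_1;c_2$.
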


\noindent
We restate the semantic soundness property for the progress-insensitive type
sytem:
\begin{align}
\label{app:eq:pits:ih}
\begin{minipage}{24em}
 For all $n \geq 0$, for all commands $c$ with typing $|- \{c\} \Gamma$, we have
for all stores $\sigma, \rho$, if
  $<c,\sigma> \overarrowA{t}{n} <c_{\sigma},\sigma'> \overarrow{(a,v,p)}{}$ and
  $<c,\rho> \overarrowA{t'}{*} <c_{\rho},\rho'> \overarrowA{v'}{}$ and
  $\length{t} = \length{t'}$ then
  $\sigma \Gequiv{a_p} \rho  \Rightarrow v \mathop{=} v'$.
\end{minipage} 
\end{align}

\noindent
We show this by complete (strong) induction on $n$.

\paragraph{$(n \mathop{=} 0)$}
  When $c$ produces an output in a single step for some store, it does so for
  any store and at the same program point (Fig.~\ref{fig:semantics}). We need to show
  $$
    <c,\sigma> \overarrow{(a,v,p)}{} \and \sigma \Gequiv{a_p} \rho \and
    <c,\rho> \overarrow{(a,v',p)}{} 
    ==>
      v = v'
  $$
  By induction on the step taken from $<c,\sigma>$. Due to the output, there are
  only two cases that do not lead to immediate contradiction:
  \begin{itemize}
    \item ${\tt out}\ e\ {\tt on}\ a\ {\tt @}\ p$ -- 
      By \ruleName{\TSOutput},
      ${\it fv}(e) \subseteq \Gamma(l,p)$. Therefore,
      $\sigma(e) = \rho(e)$ and thus $v = v'$.
    \item $c_1; c_2$ -- By Lemma~\ref{app:lemma:paper:presdep},
      $\Gamma_1(l,p) \subseteq \Gamma(l,p)$ for $|- \{c_1\}\Gamma_1$.
      Therefore $\sigma \NGequiv{\Gamma_1}{a_p} \rho$ and the
      property follows by induction on $<c_1,\sigma>$.
  \end{itemize}
  
\paragraph{\mathligsoff$(n + 1 > 0)$\mathligson}
We have $<c,\sigma> \overarrowA{\alpha}{} <c'_\sigma,\sigma'> \overarrowA{t}{n} <c''_{\sigma},\sigma''> \overarrow{(a,v,p)}{}$
and $<c,\rho> \overarrowA{\beta}{} <c'_\rho,\rho'>$ $\overarrowA{t'}{*} < c''_{\rho},\rho''> \overarrowA{v'}{}$
with the induction hypothesis \eqref{app:eq:pits:ih} for evaluations of length $\leq n$.
Here $\alpha$ could be either silent or an output. We only consider the
case where $\alpha$ is silent; for the case where $\alpha$ is an output
the proof is similar except in the induction step we have that traces 
produced by $\sigma'$ and $\rho'$ are both 1 shorter in length.

When $c$ produces no output in a single step for some store, it does so for
  any store (Fig.~\ref{fig:semantics}).
We case split on $c'_\sigma = c'_\rho$:

\hfill\begin{minipage}{\dimexpr\columnwidth-3em}

\begin{itemize}

  \item[$c'_\sigma = c'_\rho$]
    By Lemma~$\ref{app:lemma:paper:nbr}$, $|- \{c'_\sigma\} \Gamma'$ and
    $\sigma' \NGequiv{\Gamma'}{a_p} \rho'$. The case follows by induction on $n$.

  \item[$c'_\sigma \not= c'_\rho$]
    Since $<c,\rho>$ produces a trace longer than $\length{t}$,
    by Lemma~$\ref{app:lemma:paper:br}$
    there exists a $t_1 \mathop{\preceq} t$ such that
    $<c'_\sigma,\sigma'> \overarrowA{t_1}{n_1} <c_j,\sigma_j> \overarrowA{t_2}{n_2} <c''_{\sigma},\sigma''> \overarrow{(a,v,p)}{}$
    and there is a $t'_1 \mathop{\preceq} t'$ such that
    $<c'_\rho,\rho'> \overarrowA{t'_1}{*} <c_j,\rho_j> \overarrowA{t'_2v'}{*}$
    with $\length{t'_1} = \length{t_1}$ and
    $|- \{c_j\} \Gamma_j$ with $\sigma_j \NGequiv{\Gamma_j}{a_p} \rho_j$.

    Since $\length{t} = \length{t'}$ and $\length{t_1} = \length{t'_1}$ we 
    have $\length{t_2} = \length{t'_2}$. Then the case follows by induction
    on some $n_2 \leq n$.
\end{itemize}
\xdef\tpd{\the\prevdepth}
\end{minipage}

  \section{Proofs of auxiliary lemmas}
  \label{app:auxlemmas}
  We proof our auxiliary lemmas as part of a series of necessary properties of the
type sytem. In this section, the Lemmas
\ref{app:lemma:paper:presdep},
\ref{app:lemma:paper:nbr} and
\ref{app:lemma:paper:br} can be found with proof as Lemmas
\ref{app:lemma:presdep},
\ref{app:lemma:nbr} and
\ref{app:lemma:br} respectively.

\begin{lemma}
  \label{app:lemma:pcunchanged}
  For all $|- \{c\} \Gamma$, we have $\Gamma(\pc) \mathop{=} \{\pc\}$.
\end{lemma}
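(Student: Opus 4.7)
The plan is to prove this by straightforward induction on the derivation of $|- \{c\} \Gamma$, going case-by-case through the six typing rules in Figure~\ref{fig:typesystem}. In each case I will inspect what happens to the $\pc$ entry of the resulting environment.

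First I would dispatch the base cases. For \ruleName{\TSSkip}, $\Gamma = \Gammaid$ so $\Gamma(\pc) = \{\pc\}$ by definition of the identity environment. For \ruleName{\TSAssign}, the result is $\Gammaid$ updated only at ${\tt x} \in \PVar$, and since $\pc \notin \PVar$ this leaves $\Gamma(\pc) = \{\pc\}$. For \ruleName{\TSOutput}, the environment is $\Gammaid$ updated only at $a_p$ and at $a$, neither of which is $\pc$, so again $\Gamma(\pc) = \{\pc\}$.

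Next the inductive cases. For both \ruleName{\TSIfElse} and \ruleName{\TSWhile}, the conclusion environment is explicitly of the form $\Gamma'[\pc \mapsto \{\pc\}]$, so the result is immediate without even appealing to the induction hypothesis. The only case requiring a small computation is \ruleName{\TSSequence}, where $\Gamma = \Gamma_2;\Gamma_1$ with $|- \{c_1\}\Gamma_1$ and $|- \{c_2\}\Gamma_2$. By the induction hypothesis, $\Gamma_1(\pc) = \{\pc\}$ and $\Gamma_2(\pc) = \{\pc\}$. Unfolding the definition of sequential composition,
\[
 (\Gamma_2;\Gamma_1)(\pc) \;=\; \bigcup_{y \in \Gamma_2(\pc)} \Gamma_1(y) \;=\; \bigcup_{y \in \{\pc\}} \Gamma_1(y) \;=\; \Gamma_1(\pc) \;=\; \{\pc\},
\]
which closes the case.

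I do not anticipate any real obstacle: the property is preserved by every rule essentially by construction, because the branching/looping rules reset the $\pc$ entry explicitly and the remaining rules only rewrite entries at variables distinct from $\pc$. The sequential composition case is the only one that exercises the induction hypothesis, and it reduces to a one-line calculation as above.
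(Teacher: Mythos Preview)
Your proposal is correct and follows essentially the same approach as the paper: induction on the typing derivation, observing that the three atomic rules leave the $\pc$ entry as $\Gammaid(\pc)=\{\pc\}$, that \ruleName{\TSIfElse} and \ruleName{\TSWhile} explicitly reset it, and that \ruleName{\TSSequence} reduces to a one-line unfolding of $(\Gamma_2;\Gamma_1)(\pc)$ using the induction hypothesis on both subderivations. The paper's version is terser but the structure and key steps are identical.
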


\begin{proof}
  By induction on the typing derivation.
  This property holds clearly for \ruleName{\TSSkip},
  \ruleName{\TSAssign} and \ruleName{\TSOutput} as $\Gamma(\pc) =$ 
  $\Gammaid(\pc) \mathop{=} \{\pc\}$.
  It is also clear for \ruleName{\TSIfElse} and \ruleName{\TSWhile} as they
  explicitly reset the dependencies of $\pc$ to $\{\pc\}$.
  For \ruleName{\TSSequence}, by induction we have $\Gamma_2(\pc) \mathop{=} \{\pc\}$
  thus $\Gamma_2;\Gamma_1(\pc) = \Gamma_1(\pc)$, which by induction is $\{\pc\}$.
\end{proof}

\begin{lemma}
  \label{app:lemma:nopcid}
  For all $|- \{c\} \Gamma$, if $\pc \not\in \Gamma(x)$ then $\Gamma(x) = \Gammaid(x) = \{x\}$, for
  all variables, channels and program points $x$.
\end{lemma}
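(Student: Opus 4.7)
I would prove the lemma by structural induction on the typing derivation $|- \{c\}\Gamma$, case-splitting on the last rule applied. A case $x = \pc$ can be dismissed uniformly by Lemma~\ref{app:lemma:pcunchanged}: since $\Gamma(\pc) = \{\pc\}$, the hypothesis $\pc \notin \Gamma(\pc)$ never holds and the implication is vacuous. So throughout I would assume $x \neq \pc$.

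The base rules (\ruleName{\TSSkip}, \ruleName{\TSAssign}, \ruleName{\TSOutput}) are immediate: each produces $\Gamma$ as a pointwise update of $\Gammaid$, and \emph{every} updated entry explicitly has $\pc$ in its image (the assigned variable, and both $a$ and $a_p$ in the output case). Hence if $\pc \notin \Gamma(x)$, then $x$ was not touched and $\Gamma(x) = \Gammaid(x) = \{x\}$. For \ruleName{\TSSequence} with $\Gamma = \Gamma_2;\Gamma_1$, I would first show $\pc \notin \Gamma_2(x)$: otherwise Lemma~\ref{app:lemma:pcunchanged} gives $\Gamma_1(\pc) = \{\pc\}$, which would inject $\pc$ into $\Gamma_2;\Gamma_1(x) = \bigcup_{y \in \Gamma_2(x)} \Gamma_1(y)$, contradicting the assumption. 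The inductive hypothesis applied to $c_2$ then yields $\Gamma_2(x) = \{x\}$, so $\Gamma(x) = \Gamma_1(x)$; the hypothesis $\pc \notin \Gamma(x)$ therefore applies to $\Gamma_1$ as well and the IH on $c_1$ finishes the case.

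For \ruleName{\TSIfElse}, after unfolding $\Gamma'_i = \Gamma_i ; \Gammaid[\pc \mapsto \{\pc\}\cup fv(e)]$, the same trick works: if $\pc \in \Gamma_i(x)$ then the modified $\pc$-mapping of the identity inserts $\pc$ into $\Gamma'_i(x)$, hence into $\Gamma(x)$. Thus $\pc \notin \Gamma_i(x)$, the IH gives $\Gamma_i(x)=\{x\}$, and since $x \neq \pc$ the subsequent sequential composition with the modified $\Gammaid$ leaves it untouched, so $\Gamma'_i(x) = \{x\}$ for both branches and $\Gamma(x)=\{x\}$ after the final $\pc$-reset.

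The main obstacle, and the case I would handle last, is \ruleName{\TSWhile}, because it involves the fixed point $\Gamma_f = \Delta^{*}$ where $\Delta = \Gamma_c ; \Gammaid[\pc \mapsto \{\pc\}\cup fv(e)]$. My plan is first to use the argument above at the level of $\Delta$: from $\pc \notin \Gamma(x) = \bigcup_n \Delta^n(x)$ I get $\pc \notin \Delta(x)$, hence $\pc \notin \Gamma_c(x)$ (again by Lemma~\ref{app:lemma:pcunchanged} applied to the $\pc$-entry of the right factor), and the outer IH yields $\Gamma_c(x) = \{x\}$, which combined with $x \neq \pc$ gives $\Delta(x) = \{x\}$. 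Then a secondary induction on $n$ shows $\Delta^n(x) = \{x\}$ for every $n$: the base $n=0$ is $\Gammaid(x)=\{x\}$, and the step uses $\Delta^{n+1}(x) = \bigcup_{y \in \Delta^n(x)} \Delta(y) = \Delta(x) = \{x\}$. Taking the union over $n$ gives $\Gamma_f(x) = \{x\}$, and the final $[\pc \mapsto \{\pc\}]$ update leaves this intact since $x \neq \pc$, so $\Gamma(x) = \{x\}$ as required.
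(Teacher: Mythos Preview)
Your proof is correct and follows essentially the same approach as the paper: structural induction on the typing derivation, using Lemma~\ref{app:lemma:pcunchanged} to push the hypothesis $\pc \notin \Gamma(x)$ down through the sequential compositions in \ruleName{\TSSequence}, \ruleName{\TSIfElse}, and \ruleName{\TSWhile}, and then invoking the inductive hypothesis on the subderivations. Your \ruleName{\TSWhile} case is a bit more explicit than the paper's (you spell out the inner induction showing $\Delta^n(x)=\{x\}$ for all $n$, whereas the paper just says the closure ``does not modify the dependencies of $x$''), but the underlying argument is identical.
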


\begin{proof}
  By induction on the derivation of $|- \{c\} \Gamma$:

  \begin{itemize}

  \item Case: \ruleName{\TSAssign}

  $\Gamma$ is different from $\Gammaid$ only for variable {\tt x}.
  For {\tt x}, $\pc \in \Gamma({\tt x})$ so the property holds trivially.

  \item Case: \ruleName{\TSOutput}

  $\Gamma$ is different from $\Gammaid$ only for channel $a$ and program point $a_p$.
  For them, $\pc \in \Gamma(a)$ and $\pc \in \Gamma(a_p)$ so the property holds trivially.

  \item Case: \ruleName{\TSSequence}

  By Lemma~\ref{app:lemma:pcunchanged},
  $\Gamma_1(\pc) = \Gamma_2(\pc) = \{\pc\}$.
  Thus $\pc \in \Gamma_2(x)$ implies $\pc \in \Gamma_2;\Gamma_1(x)$.
  Reversely, $\pc \not\in \Gamma_2;\Gamma_1(x)$ implies $\pc \not\in \Gamma_2(x)$.
  By induction we then have $\Gamma_2(x) = \{x\}$.
  Thus $\Gamma_2;\Gamma_1(x) = \Gamma_1(x)$ and we have $\pc \not\in \Gamma_1(x)$.
  Therefore by induction $\Gamma_1(x) = \Gammaid(x)$ which
  gives $\Gamma_2;\Gamma_1(x) = \{x\}$.

  \item Case: \ruleName{\TSIfElse}

  The composition with  $\Gammaid[\pc \mathop{\mapsto} \{\pc\} \mathop{\cup} {\it fv}(e)]$
  only adds dependencies and does not remove the dependency on $\pc$.
  So $\pc \in \Gamma_i(x)$ implies $\pc \in \Gamma'_i(x)$.
  $\Gamma$ is the union of both typing, so $\pc \in \Gamma'_i(x)$ 
  implies $\pc \in \Gamma(x)$.
  Reversely, $\pc \not\in \Gamma(x)$ implies $\pc \not\in \Gamma_i(x)$.
  By induction, $\Gamma_i(x) = \{x\}$, thus $\Gamma'_i(x) = \{x\}$,
  thus $\Gamma(x) = \{x\}$.

  \item Case: \ruleName{\TSWhile}

  The composition with  $\Gammaid[\pc \mathop{\mapsto} \{\pc\} \mathop{\cup} {\it fv}(e)]$
  only adds dependencies and does not remove the dependency on $\pc$.
  So $\pc \in \Gamma_c(x)$ implies $\pc \in \Gamma_c;\Gammaid[\pc \mathop{\mapsto} \{\pc\} \mathop{\cup}$ ${\it fv}(e)](x)$.
  The reflexive transitive closure operation takes the union over all $\Gamma^n$,
  so $\pc \in \Gamma_c(x)$ implies $\pc \in \Gamma_f(x)$.
  Reversely, $\pc \not\in \Gamma_f(x)$ implies $\pc \not\in \Gamma_c(x)$.
  By induction, $\Gamma_c(x) = \{x\}$, hence neither the composition
  with $\Gammaid[\pc \mathop{\mapsto} \{\pc\} \mathop{\cup} {\it fv}(e)]$ nor the
  closure operation modifies the dependencies of $x$, and $\Gamma_f(x) \mathop{=} \{x\}$.

  \end{itemize}

\end{proof}

\begin{lemma}
  \label{app:lemma:pcnoadd}
  If $\pc \not\in \Gamma(x)$ with $|- \{ c \} \Gamma$ and
  $<c,\sigma> \overarrow{\alpha}{} <c',\sigma'>$, then
    $\pc \not\in \Gamma'(x)$ with $|- \{ c' \} \Gamma'$, for
    all $x$ in variables, channels and program points.
\end{lemma}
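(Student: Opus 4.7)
The plan is to proceed by induction on the derivation of the step $<c,\sigma> \overarrow{\alpha}{} <c',\sigma'>$, i.e., by case analysis on the shape of $c$. As a preliminary observation, Lemma~\ref{app:lemma:pcunchanged} gives $\Gamma(\pc)=\{\pc\}$, so the hypothesis $\pc\not\in\Gamma(x)$ forces $x\neq\pc$ in every case; since $\Gamma'(\pc)=\{\pc\}$ as well, the conclusion for $x=\pc$ never needs to be checked. For the small cases (${\tt skip};c\to c$, ${\tt x}:=e\to{\tt skip}$, and ${\tt out}\ e\ {\tt on}\ a\ @\ p\to{\tt skip}$), the successor typing is either $\Gammaid$ or a composition with $\Gammaid$, and since $\Gammaid(x)=\{x\}$ contains no $\pc$ when $x\neq\pc$, the result is immediate.

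For the conditional, write $\Gamma_e=\Gammaid[\pc\mapsto\{\pc\}\cup{\it fv}(e)]$. The key observation is that $\pc\in(\Gamma_i;\Gamma_e)(x)\Leftrightarrow\pc\in\Gamma_i(x)$: indeed, only the case $y=\pc$ among $y\in\Gamma_i(x)$ can contribute $\pc$ to the union $\bigcup_{y\in\Gamma_i(x)}\Gamma_e(y)$, since $\Gamma_e(y)=\{y\}$ for $y\neq\pc$. Combined with the union $\Gamma=(\Gamma'_1\cup\Gamma'_2)[\pc\mapsto\{\pc\}]$ from \ruleName{\TSIfElse}, the hypothesis $\pc\not\in\Gamma(x)$ delivers $\pc\not\in\Gamma_i(x)$ for $i=1,2$, which is precisely the typing of each branch. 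The sequence step $c_1;c_2\to c_1';c_2$ is handled by unfolding $(\Gamma_2;\Gamma_1)(x)=\bigcup_{y\in\Gamma_2(x)}\Gamma_1(y)$: for every $y$ in $\Gamma_2(x)$ the hypothesis yields $\pc\not\in\Gamma_1(y)$, and the inductive hypothesis applied to $c_1\to c_1'$ (at variable $y$) gives $\pc\not\in\Gamma_1'(y)$, whence $\pc\not\in(\Gamma_2;\Gamma_1')(x)$.

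The main obstacle is the while unrolling, where $c={\tt while}\ e\ c_w$ with $|-\{c_w\}\Gamma_c$ and $|-\{c\}\Gamma_f[\pc\mapsto\{\pc\}]$ for $\Gamma_f=(\Gamma_c;\Gamma_e)^{*}$, and $c'={\tt if}\ e\ (c_w;c)\ {\tt skip}$. The first branch of $c'$ types to $\Gamma_w;\Gamma_c$ (with $\Gamma_w$ the typing of the inner while), so by the if-else analysis above it suffices to show
\[
\pc\not\in\Gamma_f(x)\ \Rightarrow\ \pc\not\in(\Gamma_f;\Gamma_c)(x).
\]
The crucial ingredient is the fixed-point identity $\Gamma_f=\Gammaid\cup\Gamma_f;(\Gamma_c;\Gamma_e)$, which follows from distributivity of composition over unions. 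Suppose, for contradiction, $\pc\in(\Gamma_f;\Gamma_c)(x)$ via some $y\in\Gamma_f(x)$ with $\pc\in\Gamma_c(y)$. Then $(\Gamma_c;\Gamma_e)(y)\supseteq\Gamma_e(\pc)=\{\pc\}\cup{\it fv}(e)\ni\pc$, so $\pc\in(\Gamma_f;\Gamma_c;\Gamma_e)(x)\subseteq\Gamma_f(x)$, contradicting the hypothesis. This closes the while case, and therefore the induction. The only subtlety I expect to need care with is checking that the fixed point is genuinely reached and that the union-composition distributivity goes through as stated; beyond that, the argument is a mechanical unfolding of \ruleName{\TSSequence}, \ruleName{\TSIfElse}, and \ruleName{\TSWhile}.
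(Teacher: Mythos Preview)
Your proof is correct. Both you and the paper proceed by induction on the structure of $c$ (equivalently, on the transition rule), but the route through the cases is genuinely different.

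The paper leans entirely on Lemma~\ref{app:lemma:nopcid}: once $\pc\notin\Gamma(x)$ is known, that lemma collapses $\Gamma(x)$ to the singleton $\{x\}$, and every case (sequence, conditional, while) then reduces to checking that the successor typing also sends $x$ to $\{x\}$. In particular the while case in the paper is a one-line computation because both $\Gamma_c(x)$ and $\Gamma(x)$ are already known to equal $\{x\}$.

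You avoid Lemma~\ref{app:lemma:nopcid} altogether and instead work directly with the algebra of environments. For the sequence you push the hypothesis through each $y\in\Gamma_2(x)$ and invoke the inductive hypothesis pointwise; for the while you use the fixed-point identity $\Gamma_f=\Gammaid\cup\Gamma_f;(\Gamma_c;\Gamma_e)$ and a short contradiction. This is more self-contained (no external lemma beyond Lemma~\ref{app:lemma:pcunchanged}) and exposes exactly which algebraic facts about $;$ and $\cup$ are doing the work, at the cost of a slightly longer while case. The paper's route is shorter once Lemma~\ref{app:lemma:nopcid} is available, and in fact that lemma subsumes the present one in spirit: knowing $\Gamma(x)=\{x\}$ is strictly more information than merely $\pc\notin\Gamma(x)$.

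One minor wording issue: your description of the ``small cases'' says the successor typing is ``$\Gammaid$ or a composition with $\Gammaid$'', but for ${\tt skip};c\to c$ the successor typing is just $\Gamma_2$, which equals the original $\Gamma=\Gamma_2;\Gammaid$; the case is trivial because $\Gamma'=\Gamma$, not because $\Gamma'$ involves $\Gammaid$ in any useful way.
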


\begin{proof}
  By induction on the structure of $c$:

  \begin{itemize}

  \item Case: ${\tt skip}$ 

  Holds trivially as there is no progress.

  \item Case: $c_1;c_2$

  $\Gamma = \Gamma_2;\Gamma_1$ with $|- \{c_1\} \Gamma_1$ and $|- \{c_2\} \Gamma_2$.
  If $\pc \not\in \Gamma(x)$ then by Lemma~\ref{app:lemma:pcunchanged}
  we have $\pc \not\in \Gamma_2(x)$.
  By Lemma~\ref{app:lemma:nopcid} we have $\Gamma_2;\Gamma_1(x) = \{x\}$;
  $\Gamma_2(x) = \{x\}$ and
  thus $\pc \not\in \Gamma_1(x)$. The case follows by induction on $c_1$.

  \item Case: ${\tt x} := e$

  Becomes {\tt skip} after evaluation. For all
  variables {\tt y} other than {\tt x}, $\pc \not\in \Gamma({\tt y})$ and also
  $\pc \not\in \Gamma'({\tt y})$ as $\Gamma'({\tt y}) = \Gammaid({\tt y})$.
  Similar for program points and channels

  \item Case: ${\tt out}\ e\ {\tt on}\ a\ {\tt @}\ p$ 

  $\Gamma(x) = \Gammaid(x)$
  for all variables, channels and program points other than
  $a$ and $a_p$ so the property holds. For $a$ and $a_p$, we have
  $\pc \in \Gamma(a)$ and $\pc \in \Gamma(a_p)$ so holds trivially.

  \item Case: ${\tt if}\ e\  c_1\  c_2$

  $\pc \not\in \Gamma(x)$ implies $\pc \not\in \Gamma_1(x)$ and
  $\pc \not\in \Gamma_2(x)$; see proof of Lemma~\ref{app:lemma:nopcid}. Thus 
  property holds for  both $\sigma(e) \not= 0$ $(\Gamma' = \Gamma_1)$ and $\sigma(e) = 0$ $(\Gamma' = \Gamma_2)$.

  \item Case: ${\tt while}\ e\ c_1$ 

  If $\pc \not\in \Gamma(x)$
  then $\pc \not\in \Gamma_{1}(x)$ with $|- \{c_1\} \Gamma_{1}$; see proof of Lemma~\ref{app:lemma:nopcid}.
  Thus $\Gamma_{1}(x) = \Gamma(x) = \{x\}$ by Lemma~\ref{app:lemma:nopcid}
  and $\Gamma;\Gamma_{1}(x) = \{x\}$. By inspecting the typing of $c'$
  it follows that
  $\Gamma'(x) = (\Gamma;\Gamma_{1};\Gammaid[\pc \mapsto \{\pc\} \cup {\it fv}(e)] \cup \Gammaid)(x) = \{x\}$,
  thus $\pc \not\in \Gamma'(x)$.

  \end{itemize}

\end{proof}

\begin{lemma}
  \label{app:lemma:whileunfold}
  Let $|- \{{\tt while}\ e\ c\} \Gamma$ and
  $|- \{ {\tt if}\ e\ (c; {\tt while}\ e\ c)$ $({\tt skip})\} \Gamma'$.
  Then $\Gamma'(x) \subseteq \Gamma(x)$ for all $x$ in variables, channels and program points.
\end{lemma}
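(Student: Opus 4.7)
The plan is a direct unfolding of the two typing derivations, followed by an algebraic comparison of the resulting environments using the fact that $\Gamma_f$ in the \ruleName{\TSWhile} rule is a Kleene star. I would first fix notation: let $\vdash \{c\} \Gamma_c$, let $G = \Gammaid[\pc \mapsto \{\pc\} \cup \mathit{fv}(e)]$, and write $H = \Gamma_c; G$, so that \ruleName{\TSWhile} gives $\Gamma_f = H^* = \bigcup_{n\geq 0} H^n$ and $\Gamma = \Gamma_f[\pc \mapsto \{\pc\}]$.

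Next I would derive $\Gamma'$. By \ruleName{\TSSequence} applied to $c;\texttt{while}\ e\ c$, the then-branch has typing $\Gamma;\Gamma_c$. The else-branch (\texttt{skip}) has typing $\Gammaid$. Applying \ruleName{\TSIfElse} yields
\[
  \Gamma' \;=\; \bigl( ((\Gamma;\Gamma_c); G) \,\cup\, (\Gammaid; G) \bigr)[\pc \mapsto \{\pc\}].
\]
Because $\Gammaid;G = G$, this simplifies to $\Gamma' = ((\Gamma;\Gamma_c;G) \cup G)[\pc \mapsto \{\pc\}]$. The case $x = \pc$ is immediate: both $\Gamma(\pc)$ and $\Gamma'(\pc)$ are $\{\pc\}$ by construction (and by Lemma~\ref{app:lemma:pcunchanged}).

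For $x \neq \pc$, I would argue in two pieces. First, $G(x) = \{x\} = \Gammaid(x) \subseteq \Gamma_f(x) = \Gamma(x)$ because $H^0 = \Gammaid$ is included in $H^*$, so the else-branch contribution is absorbed. Second, for the then-branch contribution $(\Gamma;\Gamma_c;G)(x)$, I use associativity of sequential composition (which follows straight from the definition $\Gamma_2;\Gamma_1(x) = \bigcup_{y\in\Gamma_2(x)}\Gamma_1(y)$) to rewrite it as $(\Gamma; H)(x)$. Since $x \neq \pc$, $\Gamma(x) = \Gamma_f(x) = H^*(x)$, so this equals $(H^*; H)(x) = \bigcup_{n\geq 1} H^n(x)$, which is a subset of $H^*(x) = \Gamma(x)$. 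Combining both contributions gives $\Gamma'(x) \subseteq \Gamma(x)$.

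The main obstacle is bookkeeping around the $\pc$ reset: when I replace $\Gamma(x)$ by $\Gamma_f(x)$ inside the composition $\Gamma; H$, I have to justify that this substitution is safe. It is safe precisely because the outer lookup uses $x \neq \pc$, so the difference between $\Gamma$ and $\Gamma_f$ (which only shows up at the $\pc$ entry) is invisible here; no further argument is needed because the lookups of $H$ on members of $\Gamma(x)$ that happen to equal $\pc$ are the same whether we reached them via $\Gamma$ or via $\Gamma_f$. Once this step is nailed down, the rest is just noting that $\Gammaid \cup H \cdot H^* \subseteq H^*$.
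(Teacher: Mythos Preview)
Your argument is correct. Both you and the paper unfold the two derivations and reduce the claim to a closure property of the Kleene star $\Gamma_f = H^*$, but the decompositions differ. The paper splits $H = \Gamma_c;G$ into its two factors: it first shows $(\Gamma;\Gamma_c)(x) \subseteq \Gamma(x)$ element-wise, then reduces the remaining goal to $(\Gamma;G)(x) \subseteq \Gamma(x)$ and discharges that by a case split on whether $\pc \in \Gamma(x)$, invoking Lemma~\ref{app:lemma:pcunchanged} to show $\{\pc\}\cup\mathit{fv}(e) \subseteq \Gamma(x)$ in the nontrivial case. You instead keep $H$ intact, use associativity to rewrite the then-branch contribution as $(\Gamma;H)(x)$, observe that for $x\neq\pc$ this is $(H^*;H)(x) = \bigcup_{n\geq 1} H^n(x)$, and conclude by $H^*;H \subseteq H^*$. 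Your route is the more economical one: it needs neither the $\pc$ case split nor Lemma~\ref{app:lemma:pcunchanged}. Your ``main obstacle'' paragraph is in fact a non-issue---since the outer lookup is at $x\neq\pc$, the equality $\Gamma(x)=\Gamma_f(x)$ already gives $(\Gamma;H)(x)=(\Gamma_f;H)(x)$ on the nose, so the justification you sketch there is more than is needed.
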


\begin{proof}

  Here, $\Gamma = \Gamma_f[\pc \mapsto \{\pc\}]$, with 
  $\Gamma_f \mathop{=} (\Gamma_c ; \Gammaid[\pc \mapsto \{\pc\} \cup \mathit{fv}(e)])^{*}$.
  We refer to the individual typings collected by the union in $\Gamma_f$
  with $\Gamma^n$.

  Let $|- \{c\} \Gamma_c$. We need to show that $\Gamma'(x) \subseteq \Gamma(x)$
  which by substituting $\Gamma'$ is:
  \begin{align*}
   \big((\Gamma;\Gamma_c);\Gammaid[\pc \mapsto \{\pc\} \cup {\it fv}(e)]& \\
  \cup \ \Gammaid;\Gammaid[\pc \mapsto \{\pc\} \cup {\it fv}(e)]&\big)(x) \subseteq \Gamma(x)
  \end{align*}
  Observing that $\Gammaid;\Gamma \mathop{=} \Gamma$, $x \mathop{\not=} \pc$ and that
  $\Gammaid(x) \mathop{\subseteq} \Gamma(x)$ since $\Gamma = (\Gamma_c;\Gammaid[\pc \mapsto \{\pc\} \cup {\it fv}(e)])^{*}$, this becomes:
  $$\big((\Gamma;\Gamma_c);\Gammaid[\pc \mapsto \{\pc\} \cup {\it fv}(e)]\big)(x) \subseteq \Gamma(x)
  $$
  If $y \in \Gamma(x)$, this means there exists a $\Gamma^n$ such that $y \in \Gamma^n(x)$.
  Thus, $\big(\Gamma_c;\Gammaid[pc \mapsto \{\pc\} \cup {\it fv}(e)]\big)(y) \subseteq
  \Gamma^{n+1}(x) \subseteq \Gamma(x)$. As $\Gamma_c(y) \subseteq \big(\Gamma_c;\Gammaid[pc \mapsto \{\pc\} \cup {\it fv}(e)]\big)(y)$
  we have that for all $y \in \Gamma(x)$, $\Gamma_c(y) \subseteq \Gamma(x)$.
  Thus $\Gamma;\Gamma_c(x) \subseteq \Gamma(x)$,
  so the relation we need to show is
  implied by
  $$\big(\Gamma;\Gammaid[\pc \mapsto \{\pc\} \cup {\it fv}(e)]\big)(x) \subseteq \Gamma(x)
  $$
  (If we want to be more precise, it holds that
  $(\Gamma;\Gamma_c\cup\Gammaid)(x) = \Gamma(x)$, so this relation is in fact
  equivalent to the one we started out with.)
  If $\pc \in \Gamma(x)$, then there exists a $\Gamma^n$ such that $\pc \in \Gamma^n(x)$.
  As by Lemma~\ref{app:lemma:pcunchanged}, $\Gamma_c(\pc) = \{\pc\}$, it follows that
  $\Gamma_c;\Gammaid[\pc \mapsto \{\pc\} \cup {\it fv}(e)](\pc) = \{\pc\} \cup {\it fv}(e)$.
  Thus since $\Gamma^{n+1} = \Gamma^n;(\Gamma_c;\Gammaid[\pc \mapsto \{\pc\} \cup {\it fv}(e)](\pc))$,
  it follows that $\{\pc\} \cup {\it fv}(e) \subseteq \Gamma^{n+1}(x)$
  and therefore $\{\pc\} \cup {\it fv}(e) \subseteq \Gamma(x)$. So the relation simplifies to
  $\Gamma(x) \subseteq \Gamma(x)$ which holds trivially.

\end{proof}

\begin{lemma}[Program Counter]
  \label{app:lemma:pc}
  The typing $|- \{c\} \Gamma$ guarantees that:
  \begin{enumerate}
    \item \label{app:lemma:pc:vars} If ${\tt pc} \not\in \Gamma({\tt x})$
          and $<c,\sigma> \overarrow{}{*} <{\tt skip},\sigma'>$
          then $\sigma({\tt x}) = \sigma'({\tt x})$.
    \item \label{app:lemma:pc:channels} If ${\tt pc} \not\in \Gamma(a)$
          and $<c,\sigma> \overarrowA{t}{*}$
          then $t = \epsilon$.
    \item \label{app:lemma:pc:points} If ${\tt pc} \not\in \Gamma(a_p)$
          and $<c,\sigma> \overarrowA{t}{*}$
          then $a_p \not\in t$.
  \end{enumerate}
\end{lemma}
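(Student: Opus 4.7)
The plan is to prove all three parts simultaneously by induction on the length $n$ of the reduction $<c,\sigma> \overarrow{}{n}$, reducing the inductive step to a case analysis of the first transition taken. For the base case $n=0$ there is no state change and no output, so all three parts hold vacuously. For the inductive step, write $<c,\sigma> \overarrow{\alpha}{} <c'',\sigma''> \overarrow{}{n-1}$. Lemma~\ref{app:lemma:pcnoadd} tells us that the hypothesis $\pc \notin \Gamma(x)$ (respectively $\Gamma(a)$, $\Gamma(a_p)$) is preserved by the first step into the typing $\vdash \{c''\}\Gamma''$, so I can apply the IH to the remaining $n-1$ steps from $<c'',\sigma''>$. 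What remains is the single-step claim: $\sigma(x)\mathop{=}\sigma''(x)$ for part~1; $\alpha$ is not a channel-$a$ output for part~2; and $\alpha$ is not of the form $(a,v,p)$ for any $v$ for part~3.

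I would establish the single-step claim by structural induction on $c$. The silent, state-preserving transitions arising from $\mathtt{skip};c'$, $\mathtt{if}\ e\ c_1\ c_2$, and $\mathtt{while}\ e\ c'$ give nothing to show for any of the three parts. For $\mathtt{x}\mathop{:=}e$, rule \ruleName{\TSAssign} sets $\Gamma(\mathtt{x})\mathop{=}{\it fv}(e)\cup\{\pc\}$, so the hypothesis of (1) fails for the assigned variable itself and is vacuous there; any other program variable is unchanged, and no output is produced, settling (2) and (3). For $\mathtt{out}\ e\ \mathtt{on}\ a'\ \mathtt{@}\ p'$, rule \ruleName{\TSOutput} gives $\pc \in \Gamma(a')$ and $\pc \in \Gamma(a'_{p'})$: so if $a'\mathop{=}a$ the hypothesis of (2) fails, and if $(a',p')\mathop{=}(a,p)$ the hypothesis of (3) fails, giving both claims; no state change occurs, so (1) holds. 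The delicate case is $c_1;c_2$: if $c_1\mathop{=}\mathtt{skip}$ the transition is again silent and state-preserving, otherwise the step is internal to $c_1$, so $<c_1,\sigma> \overarrow{\alpha}{} <c'_1,\sigma''>$, and I invoke the structural IH on $c_1$, provided the hypothesis can be descended from $\Gamma = \Gamma_2;\Gamma_1$ down to $\Gamma_1$.

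The descent uses the two $\pc$-lemmas already on hand, uniformly across all index shapes: by Lemma~\ref{app:lemma:pcunchanged} we have $\Gamma_2(\pc)\mathop{=}\{\pc\}$, so $\pc \notin \Gamma(x)$ forces $\pc \notin \Gamma_2(x)$; Lemma~\ref{app:lemma:nopcid} then yields $\Gamma_2(x)\mathop{=}\{x\}$, so by the definition of sequential composition $\Gamma(x)\mathop{=}\Gamma_2;\Gamma_1(x)\mathop{=}\Gamma_1(x)$, and the hypothesis transfers. The same chain handles channel and program-point indices without modification. I do not anticipate any deep obstacle; the main bookkeeping concern is simply keeping the three shapes of index ($\mathtt{x}$, $a$, $a_p$) in parallel through the structural induction, and verifying that the sequence case routes internal assignments and outputs of $c_1$ through the IH rather than directly through the base rules. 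Beyond that, the proof is routine case analysis on the four auxiliary lemmas already established.
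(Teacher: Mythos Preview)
Your proposal is correct and follows essentially the same route as the paper: induction on the number of reduction steps, using Lemma~\ref{app:lemma:pcnoadd} to push the hypothesis $\pc \notin \Gamma(x)$ through one step so the IH applies to the remainder. The paper's own proof is considerably terser---it states only the preservation-via-\ref{app:lemma:pcnoadd} half and leaves the single-step change claim ($\sigma({\tt x}) = \sigma''({\tt x})$, respectively no $a$-output / no $a_p$-output in $\alpha$) implicit, whereas you spell it out by a structural case analysis on $c$ and work out the $c_1;c_2$ descent via Lemmas~\ref{app:lemma:pcunchanged} and~\ref{app:lemma:nopcid}. That extra detail is sound and arguably fills a small gap in the paper's presentation, but the overall strategy is the same.
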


\begin{proof}
  Property \ref{app:lemma:pc:vars}
  by induction on $n$ in $<c,\sigma> \overarrow{}{n} <{\tt skip}, \sigma'>$. If $n = 0$
  the property holds trivially as $\sigma = \sigma'$. We thus have:
  $<c,\sigma> \overarrow{\alpha}{} <c',\sigma'> \overarrow{}{n} <{\tt skip}, \sigma''>$
  and by induction if $\pc \not\in \Gamma'({\tt x})$ with $|- \{c'\}\Gamma'$, then
  $\sigma'({\tt x}) = \sigma''({\tt x})$.
  By Lemma~\ref{app:lemma:pcnoadd}, $\pc \not\in \Gamma({\tt x})$ then
  $\pc \not\in \Gamma'({\tt x})$ so we can use the induction hypothesis.

  Similar for properties \ref{app:lemma:pc:channels} and \ref{app:lemma:pc:points}.
\end{proof}

\begin{lemma}[Variables]
  \label{app:lemma:vars}
  For all $c, \sigma, \rho$ such that
    $|- \{c\} \Gamma$ with
    $<c,\sigma> \overarrow{}{*} <{\tt skip},\sigma'>$ and
    $<c,\rho> \overarrow{}{*} <{\tt skip},\rho'>$, then
    $\sigma \Gequiv{\tt x} \rho$ implies
    $\sigma'({\tt x}) = \rho'({\tt x})$.
\end{lemma}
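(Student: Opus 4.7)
The plan is to proceed by strong induction on $n$, the length of the evaluation $<c,\sigma>\overarrow{}{n}<{\tt skip},\sigma'>$, mirroring the strategy used for Theorem~\ref{the:pisoundness} but with a single program variable ${\tt x}$ in place of outputs. The base case $n\mathop{=}0$ is immediate: $c$ must be ${\tt skip}$, so $\rho' \mathop{=} \rho$ and $|-\{{\tt skip}\}\,\Gammaid$ forces $\Gamma({\tt x}) \mathop{=} \{{\tt x}\}$, giving $\sigma'({\tt x}) \mathop{=} \rho'({\tt x})$ directly from $\sigma \Gequiv{{\tt x}} \rho$. For the inductive step, both executions take a first step to $<c'_\sigma,\sigma_1>$ and $<c'_\rho,\rho_1>$, and I case-split on whether $c'_\sigma\mathop{=}c'_\rho$.

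In the non-branching subcase, I would first establish a program-variable analog of Lemma~\ref{app:lemma:paper:nbr}: if $|-\{c\}\,\Gamma$ and $|-\{c'_\sigma\}\,\Gamma'$ then $\sigma \Gequiv{{\tt x}} \rho$ implies $\sigma_1 \NGequiv{\Gamma'}{{\tt x}} \rho_1$. This is a routine case analysis on the semantic rule that fires (skip-sequence, assignment, output, while-unfolding, same if-branch, and their structural lifts via sequence). Applying the induction hypothesis to the remaining $n\mathop{-}1$ steps then closes this case.

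In the branching subcase, the active step must reduce an if-conditional whose guard $e$ satisfies $\sigma(e) \mathop{\not=} \rho(e)$, so some $y \mathop{\in} {\it fv}(e)$ has $y \notin \Gamma({\tt x})$. Tracing the sequential-composition structure of $\Gamma$ together with the way \ruleName{\TSIfElse} post-composes with $\Gammaid[\pc \mapsto \{\pc\} \cup {\it fv}(e)]$ shows that if $\pc$ were in $\Gamma_i(z)$ for any $i$ and any $z$ in the dependency set of the continuation following the if, then ${\it fv}(e)$ would land in $\Gamma({\tt x})$, contradicting $y \notin \Gamma({\tt x})$. Lemma~\ref{app:lemma:nopcid} then forces each such $\Gamma_i(z) \mathop{=} \{z\}$, and Lemma~\ref{app:lemma:pc}(1) ensures that neither branch modifies any such $z$. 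Consequently the two executions, once the branches complete and rejoin at the common continuation, still agree on every variable in the continuation's dependency set for ${\tt x}$. The induction hypothesis applied to this strictly shorter residual evaluation closes the case; altogether this amounts to a variable-version of Lemma~\ref{app:lemma:paper:br}.

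The main obstacle is the ${\tt while}$ loop, which prevents a clean structural induction on $c$ because unfolding produces a larger command containing the same ${\tt while}$. Strong induction on evaluation length sidesteps this, at the cost of requiring the variable-versions of the non-branching and branching reduction lemmas; both are mechanical case analyses over the semantic rules, with the branching version requiring the more delicate bookkeeping of how $\pc$ propagates through sequential composition.
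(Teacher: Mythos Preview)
Your proposal is correct, but it takes a noticeably heavier route than the paper. Both arguments use strong induction on the length of the $\sigma$-evaluation, and the base case is identical. The divergence is in the inductive step: you case-split on whether the two runs remain syntactically synchronised after one step and then develop program-variable analogues of Lemmas~\ref{app:lemma:paper:nbr} and~\ref{app:lemma:paper:br}. The paper instead cases directly on the syntactic form of $c$ (equivalently, on which semantic rule fires). The key simplification this buys is in the handling of $c_1;c_2$: since both runs are assumed to terminate at ${\tt skip}$, $c_1$ must terminate on both sides, so the paper simply applies the induction hypothesis to the full evaluation of $c_1$ (for each ${\tt y}\in\Gamma_2({\tt x})$) to obtain $\sigma' \NGequiv{\Gamma_2}{\tt x} \rho'$ at the intermediate store, and then again to $c_2$. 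The {\tt if} case is then handled at top level by a single dichotomy on whether $\pc\in\Gamma({\tt x})$: if not, Lemma~\ref{app:lemma:pc} says neither run changes ${\tt x}$; if so, ${\it fv}(e)\subseteq\Gamma({\tt x})$ forces the same branch. Your branching-subcase reasoning (tracing $\pc$ through the sequential-composition structure to invoke Lemmas~\ref{app:lemma:nopcid} and~\ref{app:lemma:pc}) is essentially re-deriving this same dichotomy, but pushed inside nested sequences, which is where the extra bookkeeping you mention comes from. In short: your decomposition works and reuses the architecture of the main soundness proof, whereas the paper exploits the termination hypothesis (absent in the output case) to avoid needing a variable-level branching-reduction lemma at all.
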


\begin{proof}
  By complete induction on $n$ in $<c,\sigma> \overarrow{}{n} <{\tt skip},\sigma'>$.
  If $n \mathop{=} 0$ the property holds trivially as $\sigma = \sigma'$, $\rho = \rho'$,
  $\Gamma = \Gammaid$ and thus $\rho({\tt x}) = \sigma({\tt x})$.
  We then have $<c,\sigma> \overarrow{\alpha}{}$ $<c',\sigma'> \overarrow{}{n} <{\tt skip},\sigma''>$
  and $<c,\rho> \overarrow{\beta}{} <c'',\rho'> \overarrow{}{*} <{\tt skip},\rho''>$.
  By induction if $\sigma' \NGequiv{\Gamma'}{\tt x} \rho'$ with $|- \{c'\} \Gamma'$ then
  $\sigma''({\tt x}) = \rho''({\tt x})$.

  By cases on the first evaluation step:

  \begin{itemize}

  \item $< {\tt skip}; c, \sigma > \overarrow{\epsilon}{} < c, \sigma >$.

  $\Gamma = \Gamma';\Gammaid$ so $\sigma \Gequiv{\tt x} \rho$
  implies $\sigma \NGequiv{\Gamma'}{\tt x} \rho$ and as 
  $\sigma = \sigma'$ and $\rho = \rho'$ the case follows by induction.

  \item $< c_1; c_2, \sigma > \overarrow{\alpha}{} < c_1'; c_2, \sigma' >$
  with $< c_1, \sigma > \overarrow{\alpha}{} < c_1', \sigma' >$.

  $\Gamma = \Gamma_2;\Gamma_1$. By definition  of $;$ this means that
  $\sigma \Gequiv{\tt x} \rho$ implies that
  $\sigma \NGequiv{\Gamma_1}{\tt y} \rho$ for all ${\tt y} \in \Gamma_2({\tt x})$.
  Thus in the computations
  $<c_1;c_2,\sigma> \overarrow{}{*} <{\tt skip};c_2, \sigma'> \overarrow{}{*} <{\tt skip},\sigma''>$
  and
  $<c_1;c_2,\rho> \overarrow{}{*}$ $<{\tt skip};c_2, \rho'> \overarrow{}{*} <{\tt skip},\rho''>$
  we have by induction on $c_1$ that $\sigma'({\tt y}) = \rho'({\tt y})$. Therefore
  $\sigma' \NGequiv{\Gamma_2}{\tt x} \rho'$  and the case follows on the shorter
  execution trace produced by $c_2$.

  \item $< {\tt x} := e, \sigma > \overarrow{\epsilon}{} < {\tt skip}, \sigma[{\tt x} \mapsto v] >$
  with $\sigma(e) = v$.

  For {\tt x}, we have $\mathit{fv}(e) \subseteq \Gamma({\tt x})$. Therefore, $\sigma \Gequiv{\tt x} \rho$
  implies that $\sigma(e) \mathop{=} \rho(e)$
  and $\sigma'({\tt x}) \mathop{=} \sigma(e) \mathop{=} \rho(e) \mathop{=} \rho'({\tt x})$.
  For all variables {\tt y} other than {\tt x}, the store is left unchanged and thus
  $\sigma({\tt y}) \mathop{=} \sigma'({\tt y})$ and therefore $\sigma({\tt y}) \mathop{=} \rho({\tt y})$
  implies $ \sigma'({\tt y}) \mathop{=} \rho'({\tt y})$.

  \item $< {\tt out}\ e\ {\tt on}\ a\ {\tt @}\ p, \sigma > \overarrow{(a,v,p)}{} < {\tt skip}, \sigma >$
  with $\sigma(e) = v$.

  As this step leaves the store unchanged, the property holds trivially.

  \item $< {\tt if}\ e\ c_1\ c_2, \sigma > \overarrow{\epsilon}{} < c_1, \sigma >$
  with $\sigma(e) \not= 0$.

  If $\pc \not\in \Gamma({\tt x})$
  Lemma~\ref{app:lemma:nopcid} gives that $\Gamma({\tt x}) = \{{\tt x}\}$.
  Lemma~\ref{app:lemma:pc} tells us that $\sigma({\tt x}) = \sigma''({\tt x})$
  and $\rho({\tt x}) = \rho''({\tt x})$
  thus by $\sigma({\tt x}) = \rho({\tt x})$ we are done.
  Otherwise, $\pc \in \Gamma({\tt x})$ implies by \ruleName{\TSIfElse} that
  $\mathit{fv}(e) \subseteq \Gamma({\tt x})$ which means that 
  $\sigma(e) = \rho(e)$ and the case follows by induction on the shorter
  execution trace of $c_1$.

  \item $< {\tt if}\ e\ c_1\ c_2, \sigma > \overarrow{\epsilon}{} < c_2, \sigma >$
  with $\sigma(e) = 0$.

  Analogous to $\sigma(e) \not= 0$.

  \item $< {\tt while}\ e\ c, \sigma > \overarrow{\epsilon}{} < 
      {\tt if}\ e\ (c; {\tt while}\ e\ c)\ ({\tt skip}), \sigma >$.

  By Lemma~\ref{app:lemma:whileunfold},
  $\Gamma'({\tt x}) \subseteq \Gamma({\tt x})$
  and as $\sigma = \sigma'$ and $\rho = \rho'$ the case follows by induction.
  \end{itemize}
\end{proof}

\begin{lemma}[Channel Context Bounds]
  \label{app:lemma:contexts}
  The typing $|- \{c\} \Gamma$ guarantees that:
  \begin{enumerate}
    \item \label{app:lemma:contexts:length} If $<c,\sigma> \overarrowA{t}{*} <{\tt skip},\sigma'>$
          and $\sigma \Gequiv{a} \rho$
          then $<c,\rho> \overarrowA{t'}{*}$ has $\length{t} \geq \length{t'}$,
          and if also $<c,\rho> \overarrowA{t'}{*} <{\tt skip},\rho'>$
          then $\length{t} = \length{t'}$.
    \item \label{app:lemma:contexts:points} If $a \not\in \Gamma(a_p)$
          and $<c,\sigma> \overarrowA{t}{*}$
          then $a_p \not\in t$.
  \end{enumerate}
\end{lemma}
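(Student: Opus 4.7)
The proof divides naturally into the two parts. For Part 1, I will prove both clauses (the inequality for partial $\rho$-executions and the equality for terminating $\rho$-executions) simultaneously by complete induction on the length $n$ of the terminating execution $<c,\sigma> \overarrow{}{n} <{\tt skip},\sigma'>$, following the template of the proof of Lemma~\ref{app:lemma:vars}. The base case $n=0$ forces $c = {\tt skip}$, which is terminal, so any execution from $\rho$ has empty trace as well. In the inductive step I case-split on the first step of $\sigma$'s evaluation. For skip-prefix, assignment, and outputs on channels $b \neq a$, the $a$-trace does not change and $\rho$ takes the same step, so I can apply the induction hypothesis directly. For an output on $a$, both sides extend the trace by exactly one element. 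Sequential composition is reduced to a step inside $c_1$, using that $\Gamma_2;\Gamma_1$ preserves the $a$-dependencies $\Gamma_1$ needs (since $a \in \Gamma_2(a)$ always).

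The delicate cases are if-then-else and while, where $\rho$'s execution may diverge. If $\sigma(e) = \rho(e)$, both sides take the same branch, $\Gamma$ reduces to the corresponding sub-typing analogously to Lemma~\ref{app:lemma:nbr}, and induction applies. If $\sigma(e) \neq \rho(e)$ but still $\sigma \Gequiv{a} \rho$, then $\mathit{fv}(e) \not\subseteq \Gamma(a)$. Inspecting \ruleName{\TSIfElse}, this forces $\pc \notin \Gamma_1(a)$ and $\pc \notin \Gamma_2(a)$: otherwise the composition with $\Gammaid[\pc \mapsto \{\pc\} \cup \mathit{fv}(e)]$ would have pulled $\mathit{fv}(e)$ into $\Gamma(a)$. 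Lemma~\ref{app:lemma:pc}, part~\ref{app:lemma:pc:channels} then says that neither branch produces any output on $a$, so both remaining traces on $a$ are empty and the bound and equality hold trivially. The while case reduces to the same analysis after one unfolding using Lemma~\ref{app:lemma:whileunfold}.

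For Part 2, I will proceed by structural induction on the typing derivation, first establishing two easy monotonicity facts by separate inductions on the typing rules: $a_p \in \Gamma(a_p)$ and $a \in \Gamma(a)$ for every derivable $|- \{c\} \Gamma$. Both follow because \ruleName{\TSOutput} is self-referential in these entries and every other rule preserves the property through sequential composition. With these invariants in hand, the base cases \ruleName{\TSSkip}, \ruleName{\TSAssign}, and \ruleName{\TSOutput} at some $(b,q) \neq (a,p)$ all have $\Gamma(a_p) = \{a_p\}$ and cannot produce an output at $(a,p)$, while \ruleName{\TSOutput} at $(a,p)$ puts $a$ into $\Gamma(a_p)$ so the hypothesis fails. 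For \ruleName{\TSSequence} with $\Gamma = \Gamma_2;\Gamma_1$, the invariant $a_p \in \Gamma_2(a_p)$ gives $\Gamma_1(a_p) \subseteq \Gamma(a_p)$, so $a \notin \Gamma_1(a_p)$; and the invariant $a \in \Gamma_1(a)$ means any $a \in \Gamma_2(a_p)$ would propagate into $\Gamma(a_p)$, so also $a \notin \Gamma_2(a_p)$. The induction hypothesis applies to both sub-typings, and any execution of $c_1;c_2$ reduces to executions of $c_1$ then $c_2$. The cases \ruleName{\TSIfElse} and \ruleName{\TSWhile} are handled analogously, tracking how the composition with $\Gammaid[\pc \mapsto \{\pc\} \cup \mathit{fv}(e)]$ and the fixed-point closure can only add, not remove, entries.

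The main obstacle is the if/while divergence subcase of Part 1, which depends on a precise structural reading of \ruleName{\TSIfElse} to derive $\pc \notin \Gamma_i(a)$ from $\mathit{fv}(e) \not\subseteq \Gamma(a)$; once this step is established, the conclusion is immediate from Lemma~\ref{app:lemma:pc}. Part 2 is largely bookkeeping once the two self-reference invariants $a_p \in \Gamma(a_p)$ and $a \in \Gamma(a)$ have been established.
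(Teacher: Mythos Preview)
Your overall architecture for Part~1 matches the paper: complete induction on the length of the terminating $\sigma$-run with a case split on the first step. Your {\tt if} case splits on $\sigma(e) = \rho(e)$ rather than on $\pc \in \Gamma(a)$ as the paper does; the two splits are equivalent, since $\pc \in \Gamma(a)$ forces $\mathit{fv}(e) \subseteq \Gamma(a)$ and conversely $\mathit{fv}(e) \not\subseteq \Gamma(a)$ forces $\pc \notin \Gamma_i(a)$ for both branches, so this is fine.

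There is, however, a real gap in your sequential composition case. Saying it ``reduces to a step inside $c_1$'' and that $a \in \Gamma_2(a)$ gives $\Gamma_1(a) \subseteq \Gamma(a)$ is only the first half of the argument. Following the template of Lemma~\ref{app:lemma:vars} literally does not work here: in that lemma the induction hypothesis on $c_1$ \emph{is} the variable-equality statement, so it directly yields $\sigma_j({\tt y}) = \rho_j({\tt y})$ for every ${\tt y} \in \Gamma_2({\tt x})$, re-establishing the equivalence needed for the $c_2$ phase. In the present lemma the induction hypothesis on $c_1$ only gives you the trace-length bound on channel $a$; it tells you nothing about the post-$c_1$ stores. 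To continue with $c_2$ you must separately invoke Lemma~\ref{app:lemma:vars} (applied to each program variable ${\tt y} \in \Gamma_2(a)$, using $\sigma =_{\Gamma_1({\tt y})} \rho$, which follows from $\sigma =_{\Gamma(a)} \rho$ and $\Gamma = \Gamma_2;\Gamma_1$) to obtain $\sigma_j =_{\Gamma_2(a)} \rho_j$. This appeal to Lemma~\ref{app:lemma:vars} is exactly what the paper's proof does and is the step your sketch omits.

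For Part~2 you take a genuinely different route. The paper simply says the argument is analogous to property~\ref{app:lemma:pc:points} of Lemma~\ref{app:lemma:pc}, i.e.\ induction on the number of execution steps together with a preservation fact (the analogue of Lemma~\ref{app:lemma:pcnoadd}: $a \notin \Gamma(a_p)$ is preserved by one step). Your structural induction on the typing derivation, driven by the invariants $a_p \in \Gamma(a_p)$ and $a \in \Gamma(a)$, is a perfectly good alternative and arguably more direct for the compositional rules; just note that in the {\tt while} case you still need a small side argument (e.g.\ on execution length) to conclude that every output in a run of ${\tt while}\ e\ c$ arises from some run of the body $c$, since the loop does not unfold to a structurally smaller typing.
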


\begin{proof}
  For property~\ref{app:lemma:contexts:points} the reasoning is anologous to that for
  property~\ref{app:lemma:pc:points} in Lemma~\ref{app:lemma:pc}. The rest of the proof is for
  property~\ref{app:lemma:contexts:length}.

  By complete induction on $n$ in $<c,\sigma> \overarrowA{t}{n} <{\tt skip},\sigma'>$.
  If $n=0$ the property holds trivially.
  We then have $<c,\sigma> \overarrow{\alpha}{} <c',\sigma'> \overarrowA{t}{n}$ $<{\tt skip},\sigma''>$
  and $<c,\rho> \overarrow{\beta}{} <c'',\rho'> \overarrowA{t'}{*}$.

  By cases on the first reduction step:

  \begin{itemize}

  \item $< {\tt skip}; c, \sigma > \overarrow{\epsilon}{} < c, \sigma >$.

  $\Gamma = \Gamma';\Gammaid$ so $\sigma \Gequiv{a} \rho$
  implies $\sigma \NGequiv{\Gamma'}{a} \rho$ and as 
  $\sigma = \sigma'$, $\rho = \rho'$ and $c' = c''$ the case follows by induction.

  \item  $< c_1; c_2, \sigma > \overarrow{\alpha}{} < c_1'; c_2, \sigma' >$
  with $< c_1, \sigma > \overarrow{\alpha}{} < c_1', \sigma' >$.

  Let $|-\{c_1\}\Gamma_1$. By property~\ref{app:lemma:presdep:channels} of Lemma~\ref{app:lemma:presdep},
  $\Gamma_1(a) \subseteq \Gamma(a)$, thus $\sigma \Gequiv{a} \rho$ implies $\sigma \NGequiv{\Gamma_1}{a} \rho$.
  As the command fully evaluates, so does $c_1$: $<c_1,\sigma> \overarrowA{t_1}{n_1} <{\tt skip},\sigma''>$
  and therefore also $<c_1;c_2,\sigma> \overarrowA{t_1}{n_1} <c_2, \sigma''>$.
  As $n_1 \leq n$, by induction we have that $<c_1,\rho> \overarrowA{t'_1}{*}$
  with $\length{t'_1} \leq \length{t_1}$ and if $<c_1,\rho> \overarrowA{t'_1}{*} <{\tt skip},\rho''>$
  then $\length{t'_1} = \length{t_1}$.

  As $\Gamma = \Gamma_2;\Gamma_1$, for all ${\tt x} \in \Gamma_2(a)$ we have
  $\sigma \NGequiv{\Gamma_1}{\tt x} \rho$, and by Lemma~\ref{app:lemma:vars}
  $\sigma''({\tt x}) = \rho''({\tt x})$ which means that
  $\sigma'' \NGequiv{\Gamma_2}{a} \rho''$ and
  the property follows by induction on the execution of length $n - n_1$.

  \item $< {\tt x} := e, \sigma > \overarrow{\epsilon}{} < {\tt skip}, \sigma[{\tt x} \mapsto v] >$
  with $\sigma(e) = v$.

  As this configuration never produces any output, this property holds trivially.

  \item $< {\tt out}\ e\ {\tt on}\ a\ {\tt @}\ p, \sigma > \overarrow{(a,v,p)}{} < {\tt skip}, \sigma >$
  with $\sigma(e) = v$.

  As this configuration produces only one output, always on the same channel,
  this property holds trivally.

  \item $< {\tt if}\ e\ c_1\ c_2, \sigma > \overarrow{\epsilon}{} < c_1, \sigma >$
  with $\sigma(e) \not= 0$.

  If $\pc \not\in \Gamma(a)$ this means that $c$ never produces an output on $a$
  by property~\ref{app:lemma:pc:channels} of Lemma~\ref{app:lemma:pc} and we are done.

  Otherwise, $\pc \in \Gamma(a)$ and therefore ${\it fv}(e) \subseteq \Gamma(a)$
  and $\rho(e) = \sigma(e) \not= 0$.
  Thus also $<c,\rho> \overarrow{\epsilon}{} <c_1,\rho>$.
  As $\Gamma_1(a) \subseteq \Gamma(a)$, this follows directly by induction.

  \item $< {\tt if}\ e\ c_1\ c_2, \sigma > \overarrow{\epsilon}{} < c_2, \sigma >$
  with $\sigma(e) = 0$.

  Analogous to when $\sigma(e) \not= 0$.

  \item $< {\tt while}\ e\ c, \sigma > \overarrow{\epsilon}{} < 
      {\tt if}\ e\  (c; {\tt while}\ e\ c)\ {\tt skip}, \sigma >$.

  By Lemma~\ref{app:lemma:whileunfold}, $\Gamma'(a) \subseteq \Gamma(a)$
  and as $\sigma = \sigma'$ and $\rho = \rho'$ the case follows by induction.

  \end{itemize}
\end{proof}

\begin{lemma}[Preserving Dependencies]
  \label{app:lemma:presdep}
  The dependencies of program points and context bounds only
  increase with sequential composition:
  \begin{enumerate}
  \item \label{app:lemma:presdep:points}
  If $|- \{ c_1 \} \Gamma_1$ and $|- \{c_1;c_2\} \Gamma$ then $\Gamma_1(a_p) \subseteq \Gamma(a_p)$
  for all $a_p$.
  \item \label{app:lemma:presdep:channels}
  If $|- \{ c_1 \} \Gamma_1$ and $|- \{c_1;c_2\} \Gamma$ then $\Gamma_1(a) \subseteq \Gamma(a)$
  for all $a$.
  \end{enumerate}
\end{lemma}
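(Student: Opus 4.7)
The plan is to first establish a self-containment invariant for the type system, and then derive the lemma as a direct consequence of the definition of sequential composition on environments. Specifically, I aim to prove the auxiliary claim that for every derivation $\vdash \{c\} \Gamma$ and every $x \in \CPoint \cup \Chan$, we have $x \in \Gamma(x)$. This invariant captures formally the informal remark in the paper that channel--point dependencies are \emph{cumulative}, and it is exactly what is needed because, by $\ruleName{\TSSequence}$, $\vdash \{c_1;c_2\} \Gamma_2;\Gamma_1$, so
\[
(\Gamma_2;\Gamma_1)(a_p) \;=\; \bigcup_{y \in \Gamma_2(a_p)} \Gamma_1(y) \;\supseteq\; \Gamma_1(a_p)
\]
whenever $a_p \in \Gamma_2(a_p)$, and analogously for $a$.

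To establish the self-containment invariant, I would proceed by induction on the typing derivation of $c_2$ and do a straightforward case analysis on the last rule applied. In $\ruleName{\TSSkip}$ the type is $\Gammaid$, for which $\Gammaid(x) = \{x\}$ trivially contains $x$. In $\ruleName{\TSAssign}$ the modification is only at a program variable, leaving every channel and program-point entry as the identity singleton. In $\ruleName{\TSOutput}$, inspection of the rule shows that the dependency set for the active $a_p$ explicitly contains $a_p$ and the one for the active channel $a$ explicitly contains $a$, while all other channel/point entries remain $\Gammaid$. The $\ruleName{\TSIfElse}$ and $\ruleName{\TSWhile}$ cases reduce to the sequential composition case together with a union/fixpoint, and for these it suffices that $x \in \Gammaid(x)$ and that the override on ${\tt pc}$ does not affect entries for $x \neq {\tt pc}$. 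The $\ruleName{\TSSequence}$ case itself is handled by the induction hypotheses applied to each subtyping, combined with the same observation about sequential composition highlighted above.

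With the invariant in hand, both parts of the lemma follow in a single line: applying the invariant to $\Gamma_2$ gives $a_p \in \Gamma_2(a_p)$ (resp. $a \in \Gamma_2(a)$), so the union defining $(\Gamma_2;\Gamma_1)(a_p)$ (resp. $(\Gamma_2;\Gamma_1)(a)$) contains the summand $\Gamma_1(a_p)$ (resp. $\Gamma_1(a)$). I do not expect any real obstacle here; the only minor care needed is to track that $\ruleName{\TSOutput}$ is the one rule that modifies a non-identity channel/point entry, and to confirm that it modifies the entry in a way that preserves self-containment. All other rules either leave channel/point entries alone or combine subderivations using operations (union, sequential composition, fixpoint) that preserve self-containment pointwise.
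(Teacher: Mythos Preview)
Your proposal is correct and mirrors the paper's own proof almost exactly: the paper also first establishes by induction on the typing derivation that $a_p \in \Gamma(a_p)$ (and analogously $a \in \Gamma(a)$) for every $\vdash \{c\}\Gamma$, and then concludes from $\Gamma = \Gamma_2;\Gamma_1$ and $a_p \in \Gamma_2(a_p)$ that $\Gamma_1(a_p) \subseteq \Gamma(a_p)$. The only cosmetic point is that you say ``induction on the typing derivation of $c_2$'' where the invariant is really proved for arbitrary $c$ and then instantiated at $c_2$; also, for \ruleName{\TSWhile} your observation that $\Gammaid = \Gamma^0$ already sits in the fixpoint union is in fact a slightly cleaner route than the paper's, which appeals to the $\Gamma^1$ summand.
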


\begin{proof}
  For property \ref{app:lemma:presdep:points}.
  First, we establish that for all typings $|- \{c\} \Gamma$, $a_p \in \Gamma(a_p)$
  by induction on the typing.
  This property clearly holds for \ruleName{\TSSkip}, \ruleName{\TSAssign}
  and \ruleName{\TSOutput}.
  For \ruleName{\TSSequence}, by induction $a_p \in \Gamma_2(a_p)$.
  Thus $\Gamma_1(a_p) \subseteq \Gamma(a_p)$. By induction we also have
  $a_p \in \Gamma_1(a_p)$ and thus $a_p \in \Gamma(a_p)$.
  For \ruleName{\TSIfElse}, we have $\Gamma_1(a_p) \subseteq \Gamma(a_p)$ and it
  follows by induction.
  For \ruleName{\TSWhile}, we have $\Gamma_c(a_p) \subseteq \Gamma(a_p)$ since 
  $\Gamma_c;\Gammaid[\pc \mapsto \{\pc\} \cup {\it fv}(e)]$ is in the union that
  defines $\Gamma$ and this case follows by induction as well.

  So if $|-\{c_1\}\Gamma_1$ and $|-\{c_2\}\Gamma_2$ this means
  $|-\{c_1;c_2\}\Gamma_2;\Gamma_1$. As $a_p \in \Gamma_2(a_p)$,
  we have $\Gamma_1(a_p) \subseteq \Gamma(a_p)$ which is what we had to show.

  Similar for property \ref{app:lemma:presdep:channels}.
\end{proof}

\begin{lemma}[Non-Branching Reduction]
  \label{app:lemma:nbr}
  For all configurations $c$ and stores $\sigma, \rho$ such that
    $|- \{c\} \Gamma$ with
    $<c,\sigma> \overarrow{\alpha}{} <c',\sigma'>$ and
    $<c,\rho> \overarrow{\beta}{} <c',\rho'>$,
    i.e.\ to the same command $c'$,
    where $\alpha, \beta$ is either an output or $\epsilon$. Then
    $\sigma \Gequiv{a_p} \rho$ implies
    that $|- \{c'\} \Gamma'$ and
    $\sigma' \NGequiv{\Gamma'}{a_p} \rho'$.
\end{lemma}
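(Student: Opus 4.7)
The plan is to proceed by cases on the single reduction step $\langle c, \sigma\rangle \overarrow{\alpha}{} \langle c', \sigma'\rangle$, which (by inspection of the semantics in Figure~\ref{fig:semantics}) corresponds to a case analysis on the structure of $c$. Since the companion step $\langle c,\rho\rangle \overarrow{\beta}{} \langle c',\rho'\rangle$ lands in the \emph{same} residual command $c'$, the same rule fires on both sides, and in every case the existence of the typing $\vdash \{c'\}\, \Gamma'$ will follow by straightforward inversion on $\vdash \{c\}\, \Gamma$. The real content is showing $\sigma' \NGequiv{\Gamma'}{a_p} \rho'$. For this to go through by induction, I would actually strengthen the statement to quantify over \emph{all} variables $x \in \Var$ (not just a fixed $a_p$), since the sequence case requires the induction hypothesis at several dependency sets simultaneously.

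The easy cases are \texttt{skip};c (here $\sigma=\sigma'$, $\rho=\rho'$ and $\Gamma_1;\Gammaid = \Gamma_1$), the assignment and output cases (where $\Gamma' = \Gammaid$ and the relevant equivalence on the singleton $\{x\}$ reduces to checking variables in $fv(e)$, which are included in $\Gamma(x)$ by the \ruleName{\TSAssign}/\ruleName{\TSOutput} rules), and the while-unfolding case, which follows directly from Lemma~\ref{app:lemma:whileunfold}. For the conditional case, the hypothesis that both runs land in the same branch $c_i$ together with the definition of $\Gamma'_i = \Gamma_i;\Gammaid[\pc \mapsto \{\pc\}\cup fv(e)]$ in \ruleName{\TSIfElse} gives $\Gamma_i(a_p)\setminus\{\pc\} \subseteq \Gamma'_i(a_p) \subseteq \Gamma(a_p)$. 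Since equivalence of stores is only over program variables (and $\pc \notin \PVar$), the premise $\sigma \Gequiv{a_p} \rho$ transfers to the branch typing $\Gamma_i$, as required.

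The main obstacle is the sequence case: when $\langle c_1;c_2,\sigma\rangle \to \langle c_1';c_2,\sigma'\rangle$ via $\langle c_1,\sigma\rangle \to \langle c_1',\sigma'\rangle$, the source typing is $\Gamma = \Gamma_2;\Gamma_1$ and the target typing is $\Gamma' = \Gamma_2;\Gamma'_1$ where $\vdash\{c_1\}\Gamma_1$ and $\vdash\{c_1'\}\Gamma'_1$. Unfolding the definition of $;$, the premise $\sigma \Gequiv{a_p} \rho$ is equivalent to: $\sigma \NGequiv{\Gamma_1}{y} \rho$ for every $y \in \Gamma_2(a_p)$. Applying the (strengthened) induction hypothesis to $c_1$ separately at each such $y$ yields $\sigma' \NGequiv{\Gamma'_1}{y} \rho'$ for every $y \in \Gamma_2(a_p)$, which re-packages exactly into $\sigma' \NGequiv{\Gamma'}{a_p} \rho'$ via the definition of sequential composition.

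In short, the proof is an induction on $c$ whose only delicate point is recognising that one must state it at arbitrary $x \in \Var$ to feed the sequence case; once that is done, the algebra of $;$ makes the sequence case fall out, and the remaining cases are handled either trivially (skip, assignment, output), by Lemma~\ref{app:lemma:whileunfold} (while), or by observing that pc dependencies collapse on the non-program-variable part of the equivalence (if-else).
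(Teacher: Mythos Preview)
Your proposal is correct and follows the same route as the paper: induction on the reduction step, with the sequence case unfolded via the definition of $;$ so that the hypothesis is applied at each $y \in \Gamma_2(a_p)$, and the remaining cases dispatched trivially or by Lemma~\ref{app:lemma:whileunfold}. Your explicit strengthening to arbitrary $x \in \Var$ is a genuine improvement over the paper's presentation: the paper's own sequence case silently applies the induction hypothesis at program variables ${\tt y} \in \Gamma_2(a_p)$, which the lemma as literally stated (only for channel-points $a_p$) does not license.

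Two minor over-complications in your write-up, neither fatal. First, for the assignment and output cases at a non-program-variable such as $a_p$, the conclusion is vacuous because $\Gammaid(a_p)=\{a_p\}$ contains no program variable; the appeal to $fv(e)\subseteq\Gamma(x)$ is only needed for your strengthened version at the assigned variable (and for output the store is unchanged, so no appeal to $fv(e)$ is needed at all). Second, in the \texttt{if} case the inclusion $\Gamma_i(a_p)\subseteq\Gamma'_i(a_p)\subseteq\Gamma(a_p)$ holds outright from the rule, so the detour through $\Gamma_i(a_p)\setminus\{\pc\}$ is unnecessary.
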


\begin{proof}
  By induction on the small step evalation.

  \begin{itemize}

  \item $<{\tt skip;} c,\sigma> \overarrow{\epsilon}{} <c,\sigma>$
    and $<{\tt skip;} c,\rho> \overarrow{\epsilon}{} <c,\rho>$
    
    Typing gives us $\Gamma = \Gamma_2; \Gammaid$ and $\Gamma' = \Gamma_2$. 
    As $\Gamma_2;  \Gammaid = \Gamma_2$ we have that $\Gamma \mathop{=} \Gamma'$,
    $\sigma \mathop{=} \sigma'$ and $\rho \mathop{=} \rho'$;
    and $\sigma' \NGequiv{\Gamma'}{a_p} \rho'$ holds trivially.

  \item $<c_1;c_2,\sigma> \overarrow{\alpha}{} <c_1';c_2,\sigma'>$
    with $<c_1,\sigma> \overarrow{\alpha}{} <c_1',\sigma'>$ 
    and\\ $<c_1;c_2,\rho> \overarrow{\beta}{} <c_1';c_2,\rho'>$
    with $<c_1,\rho> \overarrow{\beta}{} <c_1',\rho'>$
    
    By induction there exists a $\Gamma'_1$ such that
    $|- \{c'_1\} \Gamma'_1$ and $\sigma' \NGequiv{\Gamma'}{a_p} \rho'$.
    $\Gamma = \Gamma_2 ; \Gamma_1$ and
    $\Gamma' = \Gamma_2 ; \Gamma'_1$.   
    When $\sigma \NGequiv{\Gamma_2;\Gamma_1}{a_p} \rho$ then by definition of $;$, 
    $\forall {\tt y} \in \Gamma_2(a_p) . \sigma \NGequiv{\Gamma_1}{\tt y} \rho$. 
    By induction we also have that $\forall {\tt y} \in \Gamma_2(a_p) . \sigma' \NGequiv{\Gamma'_1}{\tt y} \rho'$.
    Thus $\sigma' \NGequiv{\Gamma_2;\Gamma'_1}{a_p)} \rho'$.

  \item $<{\tt x :=\,}e,\sigma> \overarrow{\epsilon}{} <{\tt skip}, \sigma[{\tt x} \mapsto v]>$
    with $\sigma(e) = v$ and\\
    $<{\tt x :=\,}e,\rho> \overarrow{\epsilon}{} <{\tt skip}, \rho[{\tt x} \mapsto w]>$
    with $\rho(e) = w$
    
    As $\Gamma' = \Gammaid$, $\sigma' \NGequiv{\Gammaid}{a_p} \rho'$ holds for any
    $\sigma', \rho'$.

  \item $<{\tt out}\, e\, {\tt on}\, a\, {\tt @ }\, p, \sigma> \overarrow{(a,v,p)}{}
    <{\tt skip},\sigma>$ with $\sigma(e) = v$ and\\ $<{\tt out}\, e\, {\tt on}\, a\, {\tt @ }\, p, \rho> \overarrow{(a,w,p)}{}
    <{\tt skip},\rho>$ with $\rho(e) = w$
    
    Similar as for assignment.

  \item $<{\tt if}\ e\ c_1\ c_2,\sigma> \overarrow{\epsilon}{} <c_1,\sigma>$
    with $\sigma(e) \not= 0$ and\\ $<{\tt if}\ e\ c_1\  c_2,\rho> \overarrow{\epsilon}{} <c_1,\rho>$

    As $\sigma' = \sigma$ and $\rho' = \rho$ we need to show that $\Gamma_1(a_p) \subseteq \Gamma(a_p)$
    which is clear from the typing rule.

  \item $<{\tt if}\ e\ c_1\ c_2,\sigma> \overarrow{\epsilon}{} <c_2,\sigma>$
    with $\sigma(e) = 0$ and\\ $<{\tt if}\ e\ c_1\  c_2,\rho> \overarrow{\epsilon}{} <c_2,\rho>$

    As $\sigma' = \sigma$ and $\rho' = \rho$ we need to show that $\Gamma_2(a_p) \subseteq \Gamma(a_p)$
    which is clear from the typing rule.
    
  \item $<{\tt while}\,e\,c,\sigma> \overarrow{\epsilon}{} <{\tt if}\ e\ (c; {\tt while}\,e\,c)\ {\tt skip},\sigma>$
    and\\ $<{\tt while}\,e\,c,\rho> \overarrow{\epsilon}{} <{\tt if}\ e\ (c; {\tt while}\,e\,c)\ {\tt skip},\rho>$
    
    As $\sigma' = \sigma$ and $\rho' = \rho$ it follows from $\Gamma'(a_p) \subseteq \Gamma(a_p)$
    by Lemma~\ref{app:lemma:whileunfold}.

  \end{itemize}
\end{proof}

\begin{lemma}[Branching Reduction]
\label{app:lemma:br}
For all configurations $c$ and stores $\sigma, \rho$ such that
  $|- \{c\} \Gamma$ with
  $<c,\sigma> \overarrow{\epsilon}{} <c'_\sigma,\sigma'>$ and
  $<c,\rho> \overarrow{\epsilon}{} <c'_\rho,\rho'>$,
  where $c'_\sigma \mathop{\not=} c'_\rho$. 
  For all $a_p$ such that
  $<c'_\sigma,\sigma'> \overarrowA{t(a,v,p)}{*}$, it holds that 
  if $\sigma \Gequiv{a_p} \rho$ then either:
\begin{itemize}
  \item There exists a joining command $c_j$ with $t_1 \mathop{\preceq} t$,
        such that
        $<c'_\sigma,\sigma'> \overarrowA{t_1}{*} <c_j,\sigma_j>$
        and
        $<c'_\rho,\rho'> \overarrowA{t'_1}{*} <c_j,\rho_j>$
        with $\length{t_1} \mathop{=} \length{t'_1}$ and
        $|- \{c_j\} \Gamma_j$ with $\sigma_j \NGequiv{\Gamma_j}{a_p} \rho_j$.
        I.e. both executions join again at equal command $c_j$ after an equal number
        of outputs with equivalent dependencies for $a_p$, or
  \item For all $t'$ such that 
        $<c,\rho> \overarrowA{t'}{*}$, $\length{t} \geq \length{t'}$.
\end{itemize}
\end{lemma}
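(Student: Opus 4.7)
The plan is to prove the lemma by structural induction on $c$. The key observation is that the only small-step rules whose resulting command depends on the store are \EIfTrue and \EIfFalse; every other rule's next command is determined purely syntactically from the current one. Hence the branching hypothesis $c'_\sigma \neq c'_\rho$ forces the step to arise from an if-statement at the leftmost redex position of $c$. Cases $c = {\tt skip}$, $c = {\tt skip}; c_0$, $c = {\tt x} := e$, $c = {\tt out}\ \ldots$, and $c = {\tt while}\ e\ d$ either do not step silently or produce a store-independent continuation, so the lemma holds vacuously.

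The base nontrivial case is $c = {\tt if}\ e\ d_1\ d_2$. WLOG $\sigma(e) \neq 0$ and $\rho(e) = 0$, so $c'_\sigma = d_1$ and $c'_\rho = d_2$. The hypothesis that $d_1$ eventually produces an output $(a,v,p)$ means $d_1$ syntactically contains an ${\tt out}\ e'\ {\tt on}\ a\ @\ p$ statement. Unpacking \ruleName{\TSOutput} for this statement and using Lemma~\ref{app:lemma:presdep} to propagate its dependencies through any enclosing sequential compositions inside $d_1$, we obtain ${\tt pc} \in \Gamma_{d_1}(a_p)$. Then the \ruleName{\TSIfElse} rule gives $\Gamma(a_p) \supseteq \Gamma'_{d_1}(a_p) \supseteq {\it fv}(e)$. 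Combined with $\sigma \Gequiv{a_p} \rho$ this forces $\sigma(e) = \rho(e)$, contradicting the branching assumption. So this case is vacuous.

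The inductive case is $c = c_1; c_2$ with $c_1 \neq {\tt skip}$. The step factors through $<c_1, \sigma> \to <c_1', \sigma'>$ and $<c_1, \rho> \to <c_1'', \rho'>$ with $c_1' \neq c_1''$, and by Lemma~\ref{app:lemma:presdep}, $\sigma \NGequiv{\Gamma_1}{a_p} \rho$ where $|- \{c_1\} \Gamma_1$. I would split on where the tracked $(a,v,p)$ output occurs. If it occurs within $c_1$'s execution, apply the IH to $c_1$: either it yields a joining command $c_j^1$ for $c_1$, which I lift to the outer joining command $c_j = c_j^1; c_2$ (with typing $\Gamma_2; \Gamma_j^1$); or it bounds $\rho$'s trace from $c_1$, which lifts directly to bound the outer $\rho$-trace. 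If the output occurs only in $c_2$ after $c_1$ terminates, I would combine Lemma~\ref{app:lemma:contexts} (applied to the context bound $a$ inside $c_1$) with Lemma~\ref{app:lemma:vars} to show that both sides of the branching in $c_1$ terminate with equal numbers of $a$-outputs and reach $c_2$ with stores agreeing on $\Gamma_2(a_p)$, giving $c_j = c_2$ with $\Gamma_j = \Gamma_2$.

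The main obstacle is lifting the equivalence produced by the IH in the first subcase: the IH yields $\sigma_j \NGequiv{\Gamma_j^1}{a_p} \rho_j$, but the outer joining requires $\sigma_j \NGequiv{\Gamma_2; \Gamma_j^1}{a_p} \rho_j$, i.e., agreement on $\bigcup_{y \in \Gamma_2(a_p)} \Gamma_j^1(y)$. My plan is to handle this pointwise: for each program variable $y \in \Gamma_2(a_p)$ that is modified by the branching if's branches inside $c_1$, the \ruleName{\TSIfElse} rule injects ${\it fv}(e)$ into $\Gamma_1(y)$, which then composes into $\Gamma(a_p)$, again forcing $\sigma(e) = \rho(e)$ and contradicting branching (so such $y$ are actually untouched on both sides and agree by $\sigma \Gequiv{a_p} \rho$); unmodified variables are preserved on both executions via Lemma~\ref{app:lemma:pc}. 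This reconciliation between the inner IH and the outer dependency-set composition is the technical core of the proof.
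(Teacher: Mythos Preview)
Your overall structure matches the paper's: induction on the step derivation, with the {\tt if} cases shown vacuous and the sequential case split on whether the tracked output arises in $c_1$ or in $c_2$. The $c_2$ subcase, in particular, is handled exactly as in the paper (via Lemma~\ref{app:lemma:contexts} for the channel context bound $a$ and Lemma~\ref{app:lemma:vars} for the store equivalence at $c_j = c_2$).

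Two concrete gaps. First, in the {\tt if} case your route to $\pc \in \Gamma_{d_1}(a_p)$ does not go through: Lemma~\ref{app:lemma:presdep} only propagates $a_p$-dependencies across \emph{sequential} compositions, so an output statement nested inside an {\tt if} or {\tt while} inside $d_1$ is not covered. The paper instead uses the contrapositive of Lemma~\ref{app:lemma:pc}, property~\ref{app:lemma:pc:points}: any run producing an output at $a_p$ forces $\pc \in \Gamma(a_p)$, whence ${\it fv}(e) \subseteq \Gamma(a_p)$ by \ruleName{\TSIfElse} and $\sigma(e)=\rho(e)$, contradicting branching.

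Second, on the lifting obstacle you correctly isolate in the $c_1$ subcase (which the paper also passes over with ``follows by induction''): your pointwise fix does not work as stated. The dichotomy ``$y$ is modified by the branching {\tt if}'s branches'' versus ``$y$ is unmodified'' is not exhaustive, since the joining point $c_j^1$ need not be immediately after the {\tt if}; a variable may be assigned between the {\tt if} and $c_j^1$. Moreover, even when $y$ \emph{is} modified in the branches, the implication ${\it fv}(e) \subseteq \Gamma_1(y)$ can fail if $y$ is subsequently overwritten inside $c_1$ (the flow-sensitive typing then discards the branch dependencies). The clean repair is to strengthen the induction hypothesis so that the joining conclusion is established not just at the single point $a_p$ but at every $x$ with $\sigma \Gequiv{x} \rho$; then $\sigma \Gequiv{a_p} \rho$ together with $\Gamma = \Gamma_2;\Gamma_1$ gives $\sigma \NGequiv{\Gamma_1}{y} \rho$ for each $y \in \Gamma_2(a_p)$, and the strengthened IH delivers $\sigma_j \NGequiv{\Gamma_j^1}{y} \rho_j$ for each such $y$, i.e.\ exactly $\sigma_j \NGequiv{\Gamma_2;\Gamma_j^1}{a_p} \rho_j$.
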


\begin{proof}
  By induction on the small step evalation $<c,\sigma> \overarrow{\epsilon}{} <c'_\sigma,\sigma'>$.
  There are only three evaluations that may result in a different command for
  different stores, the others hold trivially.

  \begin{itemize}

    \item $<{\tt if}\ e\ c_1\ c_2,\sigma> \overarrow{\epsilon}{} <c_1,\sigma>$
      with $\sigma(e) \not= 0$

      \begin{itemize}
        \item If ${\tt pc} \in \Gamma(a_p)$ then the typing rules give that $\sigma(e) = \rho(e)$
          and both executions take the same branch;
          hence $c'_\sigma = c'_\rho$ so we contradict our assumption.
        \item If ${\tt pc} \not\in \Gamma(a_p)$ then by Lemma~\ref{app:lemma:pc} $<c,\sigma>$
          should not produce an output on $a_p$, which contradicts our assumption.
      \end{itemize}

    \item $<{\tt if}\ e\ c_1\ c_2,\sigma> \overarrow{\epsilon}{} <c_2,\sigma>$
      with $\sigma(e) = 0$
      
      Similar as for $\sigma(e) \not= 0$

    \item $<c_1;c_2,\sigma> \overarrow{\alpha}{} <c_1';c_2,\sigma'>$
      with $<c_1,\sigma> \overarrow{\alpha}{} <c_1',\sigma'>$ 
      
      \begin{itemize}
        \item $c_1$ produced $(a,v,p)$, that is $<c_1,\sigma> \overarrowA{t\cdot(a,v,p) }{*}$,
          By property~\ref{app:lemma:presdep:points} of Lemma~\ref{app:lemma:presdep}; $\Gamma_1(a_p) \subseteq \Gamma(a_p)$, therefore 
          $\sigma \Gequiv{a_p} \rho$ implies $\sigma \NGequiv{\Gamma_1}{a_p} \rho$
          and the case follows by induction.
          
        \item $c_2$ produces $(a,v,p)$.
              That is, $c_1$ evaluates completely $<c_1,\sigma> \overarrowA{t_1}{*} <{\tt skip},\sigma_j>$
              and thus $<c_1;c_2,\sigma> \overarrowA{t_1}{*} <c_2,\sigma_j>$ $\overarrowA{t_2\cdot(a,v,p)}{*} $. 
              By Lemma~\ref{app:lemma:contexts}, property \ref{app:lemma:contexts:points}; 
              as $c_2$ produces an output on $a_p$,
              it must be that $a \in \Gamma_2(a_p)$.
              As $\Gamma = \Gamma_2;\Gamma_1$ with $|- \{c_1\} \Gamma_1$, $|- \{c_2\} \Gamma_2$
              and $\sigma \Gequiv{a_p} \rho$
              this means that $\sigma \NGequiv{\Gamma_1}{a} \rho$.
              By Lemma~\ref{app:lemma:contexts}, property \ref{app:lemma:contexts:length} we get that
              for all $t_1'$ such that        
              $<c_1,\rho> \overarrowA{t_1'}{*}$ has $\length{t_1} \geq \length{t_1'}$
              and if $<c_1,\rho> \overarrowA{t_1'}{*} <{\tt skip},\rho_j>$
              then $\length{t_1} = \length{t_1'}$.
              If $c_1$ diverges this means that the second case of Lemma~\ref{app:lemma:br}
              holds and we are done.
              Otherwise we also have that 
              $<c_1,\rho> \overarrowA{t'_1}{*} <{\tt skip},\rho_j>$ and thus
              $<c_1;c_2,\rho> \overarrowA{t'_1}{*} <c_2, \rho_j>$.
              Again by $\Gamma = \Gamma_2;\Gamma_1$ and $\sigma \Gequiv{a_p} \rho$
              we have that $\forall {\tt x} \in \Gamma_2(l,p) . \sigma \NGequiv{\Gamma_1}{\tt x} \rho$.
              By Lemma~\ref{app:lemma:vars} it follows that $\sigma_j \NGequiv{\Gamma_2}{a_p} \rho_j$,
              which is what we had the show (with $c_j = c_2$).
              
      \end{itemize}
  \end{itemize}
\end{proof}

}

\end{document}